\documentclass{article}

\usepackage{microtype}
\usepackage{graphicx}
\usepackage{booktabs} 

\usepackage{hyperref}

\usepackage[accepted]{icml2025}

\makeatletter
\renewcommand{\Notice@String}{Preprint. Under review.}
\makeatother

\usepackage{amsmath}
\usepackage{amsfonts}
\usepackage{amssymb}
\usepackage{mathtools}
\usepackage{amsthm}
\usepackage{enumitem}
\usepackage{subcaption}
\usepackage{tikz}
\usetikzlibrary{arrows.meta,decorations.pathreplacing}

\usepackage{comment}

\usepackage{aliascnt}
\usepackage[capitalize,noabbrev]{cleveref}
\usepackage{xspace}

\makeatletter
\newtheoremstyle{icml-note-it}
  {3pt}{3pt}
  {\normalfont}
  {}
  {\bfseries}
  {.}
  { }
  {\thmname{#1}\thmnumber{ #2}\thmnote{ \normalfont\itshape (#3)}}
\makeatother

\theoremstyle{icml-note-it}

\newaliascnt{proposition}{theorem}
\newtheorem{proposition}[proposition]{Proposition}
\aliascntresetthe{proposition}

\newaliascnt{lemma}{theorem}
\newtheorem{lemma}[lemma]{Lemma}
\aliascntresetthe{lemma}

\newaliascnt{corollary}{theorem}
\newtheorem{corollary}[corollary]{Corollary}
\aliascntresetthe{corollary}

\newtheorem{assumption}{Assumption}

\newaliascnt{remark}{theorem}
\newtheorem{remark}[remark]{Remark}
\aliascntresetthe{remark}

\crefname{assumption}{assumption}{assumptions}
\Crefname{assumption}{Assumption}{Assumptions}
\newcounter{proof}

\crefname{proof}{proof}{proofs}
\Crefname{proof}{Proof}{Proofs}

\crefname{appsec}{appendix}{appendices}
\Crefname{appsec}{Appendix}{Appendices}

\crefname{remark}{remark}{remarks}
\Crefname{remark}{Remark}{Remarks}

\makeatletter
\renewenvironment{proof}[1][\proofname]{%
  \par\pushQED{\qed}%
  \normalfont\topsep6\p@\@plus6\p@\relax
  \trivlist
  \item[\hskip\labelsep\bfseries\upshape #1\@addpunct{.}]\ignorespaces
}{%
  \popQED\endtrivlist\@endpefalse
}
\makeatother

\newenvironment{refproof}[1][\proofname]{%
  \refstepcounter{proof}%
  \begin{proof}[#1]%
}{%
  \end{proof}%
}

\newcommand{\sw}{\ensuremath{\mathrm{sw}}}
\newcommand{\user}{\ensuremath{\mathrm{user}}}

\newcommand{\sub}{\ensuremath{\mathrm{sub}}}

\newcommand{\AIa}{\ensuremath{\mathrm{RE}_{\text{ad}}}\xspace}
\newcommand{\AIf}{\ensuremath{\mathrm{RE}_{\text{free}}}\xspace}
\newcommand{\AIp}{\ensuremath{\mathrm{RE}_{\text{paid}}}\xspace}

\definecolor{midnightgreen}{rgb}{0.0, 0.29, 0.33}
\newcommand{\cx}[1]{\textcolor{midnightgreen}{\bf\small [#1 --cx]}}
\newcommand{\bl}[1]{\textcolor{red}{\bf\small [#1 --bl]}}
\newcommand{\lz}[1]{\textcolor{blue}{\bf\small [#1 --lz]}}
\newcommand{\xc}[1]{\textcolor{orange}{\bf\small [#1 --chen]}}

\definecolor{lavender}{RGB}{147, 112, 219}

\icmltitlerunning{An Economic Framework for Generative Engines: Advertising or Subscription?}

\begin{document}

\twocolumn[
\icmltitle{An Economic Framework for Generative Engines: Advertising or Subscription?}


\icmlsetsymbol{equal}{*}

\begin{icmlauthorlist}
\icmlauthor{Luyang Zhang}{yyy,comp}
\icmlauthor{Cathy Jiao}{yyy}
\icmlauthor{Beibei Li}{yyy}
\icmlauthor{Chenyan Xiong}{yyy}
\end{icmlauthorlist}

\icmlaffiliation{yyy}{Carnegie Mellon University}
\icmlaffiliation{comp}{Anaxi Labs}

\vskip 0.15in
\begin{center}
\small
\textsuperscript{1}Carnegie Mellon University \\
\textsuperscript{2}Anaxi Labs
\end{center}

\icmlkeywords{Machine Learning, ICML}

\vskip 0.3in
]

\begin{abstract}
Generative Engines (GEs) such as ChatGPT and Google's AI Overviews are rapidly reshaping search economics by delivering synthesized responses that allow users to bypass third-party websites, cutting those sites' advertising revenue. Yet this shift also leaves GEs facing their own monetization problem: whether to insert ads into synthesized responses or keep them ad-free to drive subscription conversions. In this paper, we introduce a dynamic framework to study this problem, which captures how query-level design choices shape user engagement, retention, and subscription conversion over time. Using this framework, we show that the optimal policy follows a cutoff rule: ads should only be shown to users only when the immediate ad payoff exceeds the long-term value of providing ad-free responses. This cutoff shifts toward \textit{with-ad responses} when i) ad revenue is high or ii) users are less sensitive to ads, and toward \textit{ad-free responses} when iii) subscription conversion becomes relatively more valuable. In addition, the presence of rival GEs shifts the optimal policy further toward ad-free responses, as ad-heavy monetization becomes less sustainable when users can freely switch to alternatives. Our findings reveal incentives for real-life generative engine providers to adopt designs that enhance user experience and long-term sustainability. 
\end{abstract}

\section{Introduction}

Generative engines (GEs), such as ChatGPT and Google's AI Overviews, use large language models to present information directly in response to user queries \citep{zhou2024understanding,scarcella2026google}, and are being rapidly adopted in web search \citep{Venkatachary2024AIOverviewsSearch, OpenAI2024IntroducingChatGPTSearch, era_ge_kaiser_2025}. Consequently, GEs have changed user behavior by satisfying information needs directly, reducing how often users visit external websites, and thereby challenging the ad-based monetization models that websites and search engines have traditionally relied on \citep{Gleason2023SERPComponents,pew2025_ai_summary_clicks,SeerInteractive2025AIOCTRUpdate} 

This paradigm shift presents a new problem for ad monetization in GEs. Current generative engine providers have already adopted diverse monetization approaches; for instance, OpenAI and Google have integrated advertising into their generative AI products \citep{googleads2024ai-max-search-campaigns, openai_chatgpt_go_2026}, while Anthropic remains strictly ad-free \citep{ChmielewskiSeetharamanCherney2026AnthropicSuperBowlAds}. Yet it remains unclear which approach is most viable in the long term. In particular, because GEs incur substantial model inference costs \citep{luccioni2023estimating}, providers face a trade-off between \textit{with-ad} and \textit{ad-free} subscription models, a choice with consequences for user retention, subscription growth, and monetization sustainability \citep{reuters2026openai}.

To address these issues, we develop a framework to analyze the economics of monetization in GEs. Specifically, we model the interaction between GEs and users as a dynamic Stackelberg game. At each query, the GE leads by choosing to provide either an with-ad or ad-free response, after which the user decides whether to engage -- and eventually become a subscribed customer -- or take an outside option -- such as a competing GE -- following a discrete-choice model \citep{mcfadden1972conditional}. The GE anticipates these responses and adopts a monetization policy accordingly. Crucially, these interactions accumulate over time: ad-free responses can strengthen future engagement and subscription demand \citep{cox1972regression,rust1987optimal}, while repeated ad exposure can erode retention. This dynamic structure allows us to analyze how optimal monetization design varies across user and query types, as well as across market conditions.

Next, within our framework, we characterize the generative engine's \textit{optimal design policy} and study its economic implications. We show that the optimal policy takes a context-dependent threshold form. Specifically, the GE provides with-ad response only when i) its immediate ad revenue exceeds the long-term value of retaining the user ad-free (i.e., it exceeds the expected future revenue from stronger user retention and eventual subscription conversion), and ii) when users are less ad-sensitive \citep{gupta2006modeling,ascarza2013joint}. Conversely, GEs shift toward ad-free responses when ad-free interactions generate more long-term value. Moreover, stronger external competition also shifts the threshold toward ad-free responses, since users can more easily switch to competing GEs (see \Cref{fig:intro_policy_overview}).

Finally, we conduct market simulations to examine how GE monetization policies, ranging from ad-heavy to ad-free, perform over time. We find that while ad-heavy policies generate higher short-term revenue, they erode the user base and slow subscription growth in the long run. Moreover, the costs of ad-heavy policies are not easily undone, as switching to ad-free responses later does not replicate the gains of having served them consistently from the start. These patterns hold across user types, query types, and market conditions, and intensify under stronger outside competition.

To summarize, our contributions are three-fold:
\begin{enumerate}[noitemsep, topsep=2pt, parsep=0pt, partopsep=0pt]
    \item 
    We provide a game-theoretic framework modeling the interaction between GEs and users around with-ad vs. ad-free responses
    
    \item 
    We characterize the optimal design policy in threshold form, highlighting trade-offs between short-term ad revenue and long-term user engagement.

    \item 
    We conduct market simulations comparing ad-heavy to ad-free policies over time, finding that ad-heavy strategies hurt long-term retention and subscription growth even if they boost short-term revenue
    
\end{enumerate}

\begin{figure}[t]
  \centering
  \includegraphics[width=\linewidth]{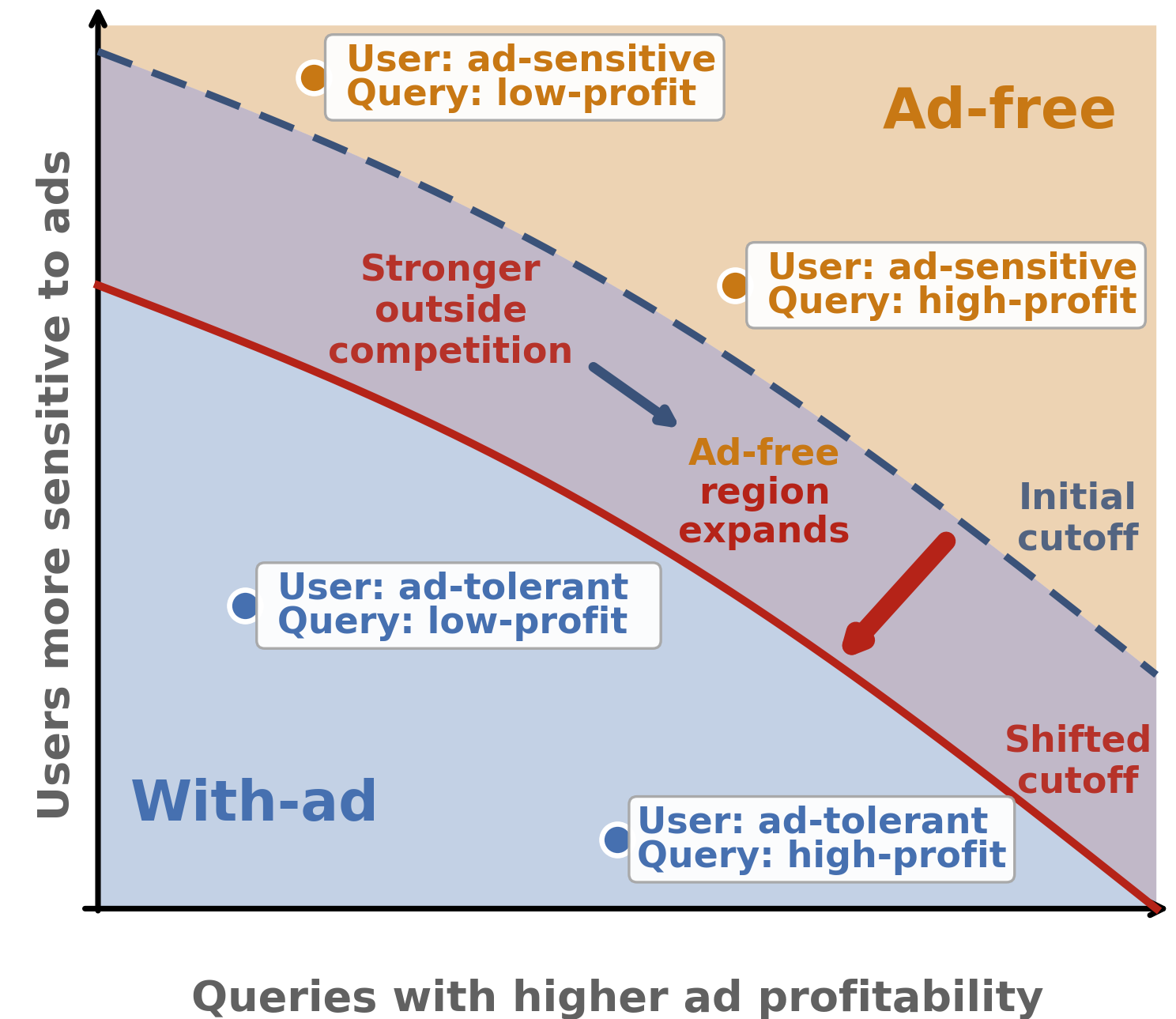}
  \caption{Optimal monetization policy thresholds for GEs. The dashed line shows the baseline cutoff between with-ad and ad-free responses; the solid line shows how outside competition shifts this cutoff, expanding the ad-free region.}
  \label{fig:intro_policy_overview}
  \vspace{-0.5cm}
\end{figure}

\section{Related Work}
\label{sec:related_work}

\textbf{Generative Engines and Monetization Design.}
Generative Engines synthesize cohesive responses from retrieved sources rather than ranking individual documents \citep{lewis2020retrieval,aggarwal2024geo}.
Unlike traditional web search---primarily monetized through ads and user clicks at comparatively low serving costs \citep{varian2007position,evans2008economics,athey2011position}---Generative Engines incur substantial inference costs \citep{kwon2023efficient} and are also associated with a sharp decline in downstream clicks \citep{arcintermedia2026case,ning2026agentic}.
This shift motivates LLM-native advertising embedded directly in generated responses.
Recent work proposes mechanism designs for ad auctions within retrieval-augmented generation streams \citep{hajiaghayi2024ad,dubey2024auctions,duetting2024mechanism,balseiro2025position}, with extensions to conversational and other generative auction formats \citep{mordo2024sponsored,bhawalkar2025sponsored,ma2025genauction,zhao2025llm}.


\textbf{User Behavior and Economic Model of Discrete Choice.}
Users' search engagement has traditionally been modeled through click and browsing models for ranked lists \citep{craswell2008experimental,dupret2008user,chapelle2009dynamic,joachims2017accurately}, with related work using signals such as dwell time and query reformulation to evaluate user satisfaction \citep{agichtein2006improving,li2009good}.
With AI-generated content, users increasingly obtain answers without clicking \citep{pew2025_ai_summary_clicks,kaiser2025new}. Engagement shifts away from link-based navigation \citep{Gleason2023SERPComponents,kirsten2025characterizing}, motivating a focus on substitution across response formats rather than click allocation alone.
Accordingly, economic models of random-utility discrete choice provides a natural foundation for modeling user choice among heterogeneous response formats \citep{mcfadden1972conditional,ben1985discrete}, consistent with its use in search-result substitution settings \citep{GhoseLi2011www, yao2011dynamic, GhoseLi2012RankingDesign, jerath2014consumer,jeziorski2015makes}.

\begin{figure*}[t]
\centering
\includegraphics[width=\linewidth]{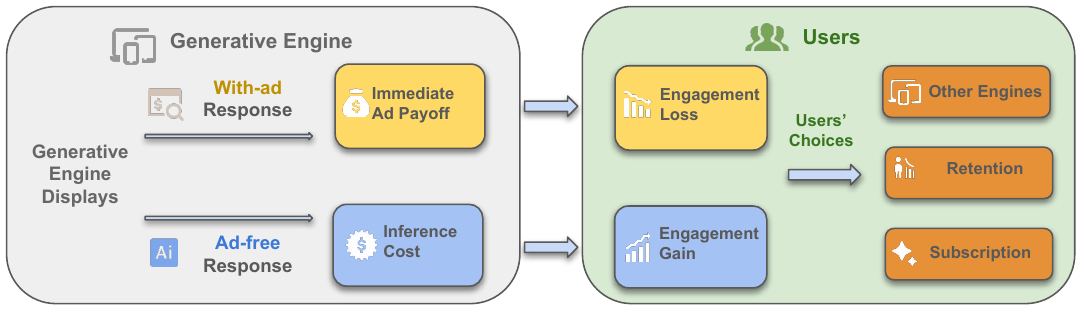}
\vspace{-15pt}
\caption{
Overview of the generative engine's monetization framework.
The generative engine chooses between a with-ad response (yielding immediate ad payoff) and an ad-free response (incurring inference cost).
Each response affects user engagement: with-ad responses reduces engagement, while ad-free responses builds it.
Users then choose among the generative engine, subscription, or outside alternatives, determining long-term retention and conversion.
}
\label{fig:1}
\vspace{-15pt}
\end{figure*}

\textbf{Dynamics, Retention, and Subscription Economics.}

User churn models connect short-term engagement to future retention and lifetime value \citep{gupta2006modeling,fader2009probability,fader2010customer,ascarza2013joint}.
Building on these dynamics, subscription economics emphasize renewal costs and experience accumulation \citep{dube2010state,della2006paying,farrell2007coordination,miller2023sophisticated}, including freemium and trial-based paths to paid upgrades \citep{einav2025selling,datta2015challenge,shi2019freemium}.
Improving the free-tier experience highlights a core tension: richer free offerings can erode short-term ad revenue \citep{taylor2013_cannibalization}.



\section{Economic Framework}
\label{sec:setup}

In this section, we introduce and formalize our dynamic generative engine monetization framework, which takes into account the ongoing interaction between a generative engine's design choices and its users' responses over time. 

We build the framework in four parts. \Cref{sec:timing} formally defines the general market setting with GEs and users. \Cref{sec:within_query} specifies how users respond to the generative engine's query-level design choice. \Cref{sec:dynamics} describes how users' per-query response accumulate their experiences over time, shaping their future engagement and subscription decisions. \Cref{sec:dp} formulates the generative engine's optimization problem over this repeated interaction.

\subsection{Framework Setup and Interaction.}
\label{sec:timing}

We begin by defining the basic elements of the framework, which includes generative engines, users, and their interaction sequence over time.

\textbf{Generative Engines.} Since current major generative engines offer differ tiers to its users, such as free, with-add, or paid subscriptions \citep{googleads2024ai-max-search-campaigns, openai_chatgpt_go_2026}, a central design decision in our framework is how the engine serves its non-subscribed users. Specifically, for each user query, the engine selects between two response formats: a with-ad response (\AIa) or an ad-free response (\AIf), both at no charge. Over time, a non-subscribed user may convert to a paid subscription at price $p$, after which they receive ad-free responses (\AIp) by default and the engine no longer faces a design choice for that user \citep{miller2023sophisticated,einav2025selling}.

\textbf{Users.} On the user side, each user has (i) a \textit{context}, which captures persistent user characteristics (e.g., demographics, usage patterns, and preferences), influencing how they engage with the generative engine and (ii)  \textit{state variables}, which captures their state at a given time. In particular, a user's context $x \in \mathcal{X}$, is drawn from a population distribution $\mu$ and remains stable over time, as is standard in user modeling \citep{mcfadden1972conditional, gupta2006modeling}. The user state variables, by contrast, evolve over time driven by their interactions with the generative engine. Details on user context and states are further discussed in \Cref{sec:dynamics}.

\textbf{Interaction sequence.} Next, given a generative engine and an user, we now describe their interaction. Fix a user context $x$. In each period $t=0,1,2,\ldots$:
\begin{enumerate}[noitemsep, topsep=0pt]    
    \item \textbf{Query and display.} The user generates and submits a query $q_t\in \mathcal Q$ (we assume queries are conditionally i.i.d. given user context, i.e., $q_t \sim \mathcal{P}(\cdot \mid x)$ \footnote{The i.i.d assumption can be relaxed without changing the framework's structure. More general stationary query dynamics can be handled by augmenting the state (e.g., including $q_t$). We assume i.i.d.\ queries to keep the state minimal; modeling query transitions is orthogonal to our main results. See \Cref{app:framework_notes} for details.}) to the generative engine. 
    
    \item \textbf{Generative Engine Response.} If the user is active and non-subscribed, the generative engine chooses a response format $A_t \in \{\AIa, \AIf\}$.

    \item \textbf{User engagement.} The user either engages with the 
    displayed response or takes an outside alternative such as other 
    competing GEs. This choice determines the generative engine's 
    immediate payoff (see \Cref{sec:within_query}).

    \item \textbf{Cross-period dynamics: retention and subscription.} 
    The engagement updates the user’s state variables.
    Depending on the updated state, the user may convert to a paid subscription (\AIp) according to the conversion rule in \Cref{sec:dynamics}. 
\end{enumerate}

Finally, the interaction repeats: the updated state variables and subscription decision jointly determine the user's continued engagement with the generative engine in the next period (see \Cref{sec:dynamics}).

\subsection{User Per-query Choice and generative engine's Payoffs}\label{sec:within_query}

Building on the interaction model above, we now characterize the user engagement towards the generative engine, and the resulting payoff to the engine.

\textbf{User per-query choice.}
Given a user context $x$, a query $q$, and a response $A_t=a$ displayed by the generative engine, the user decides whether to engage with the response $a$ or not. We denote the decision variable as $Y_t \in {1,0}$, where $Y_t = 1$ denotes reading and using the generated content, and $Y_t = 0$ denotes taking an outside alternative, such as switching to a competing generative engine.

The user's engagement is driven by two utilities: that of the generative engine's displayed response, $v_a(x,q)$, and that of the outside alternative, $v_0(x,q)$.
The engagement probability follows a binary logit, widely adopted in discrete-choice modeling \citep{luce1959individual,mcfadden1972conditional,ben1985discrete}:
\begin{align}
& \Pr(Y_t=1\mid x,q,a)
=
\frac{\exp\!\big(v_a(x,q)\big)}{\exp\!\big(v_a(x,q)\big)+\exp\!\big(v_0(x,q)\big)},
\nonumber
\\
& \Pr(Y_t=0\mid x,q,a)=1-\Pr(Y_t=1\mid x,q,a).
\label{eq:within_query_logit}
\end{align}

\textbf{Generative engine payoffs.} Given a generative engine's response $a$ to user $x$ on query $q$, its expected per-period payoff is  denoted by $r_a(x,q)$, which is an expectation taken over the engagement decision $Y_t$. That is,
\begin{equation}
r_a(x,q)
=
\begin{cases}
\begin{aligned}
-\kappa +& R(x,q)\,\\
&\Pr(Y_t=1 \mid x,q,a)
\end{aligned}
& a=\AIa,\\
-\kappa
& a=\AIf,
\end{cases}
\label{eq:expected_payoff}
\end{equation}
where $R(x,q)$ is the expected ad revenue \emph{conditional on} engagement, and the choice probability is given by \eqref{eq:within_query_logit}. In addition, both response formats incur an inference cost $\kappa$ \citep{luccioni2023estimating} \footnote{ In practice, the inference cost $\kappa$ can also depend on user context and query. Our framework and theoretical results extend to the cost $\kappa(x,q)$ (see \Cref{app:kappa_xq} for details).}. Only a response with ads, \AIa, generates immediate ad revenue upon engagement. In contrast, displaying \AIf\ generates no immediate revenue; its value comes from retaining users and building toward subscription, as formalized in \Cref{sec:dynamics}.

\subsection{User Cross-Query Dynamics and Subscription}
\label{sec:dynamics}

Next, we define the state variables that track a user's past interactions with the generative engine. These variables shape the user's future retention and subscription decisions.

\textbf{State definition and update process.}
Each user's interaction history is summarized by three state variables: (1) an AI-experience state $S_t$, which captures cumulative experience from interactions with \textit{ad-free} responses; (2) an ad-exposure state $C_t$, tracking cumulative exposure to \textit{with-ad} responses; and (3) a subscription indicator $Z_t \in {0,1}$ ($Z_t=1$ means subscribed to \AIp). Formally, the user enters period $t$ with state:
$$\quad (S_t, C_t, Z_t) \in [0,\infty) \times [0,\infty) \times {0,1}.$$

When $Z_t=0$ (i.e., the user is unsubscribed), the generative engine chooses $A_t\in\{\AIa,\AIf\}$ for query $q_t$, and the two user experience states $S_t$ and $C_t$ update only upon user engagement (i.e., $Y_t=1$). Engaging with an ad-free response \AIf\ increments $S_t$, while engaging with a with-ad response \AIa\ increments $C_t$. 
Formally,
\begin{align}
S_{t+1}
&=
S_t + \mathbf{1}\{A_t=\AIf, Y_t = 1\}\,\delta_S(x,q_t,S_t),
\nonumber
\\
C_{t+1}
&=
C_t + \mathbf{1}\{A_t=\AIa, Y_t = 1\}\,\delta_C(x,q_t, C_t),
\nonumber
\end{align}
where $\delta_S,\delta_C:\mathcal X\times\mathcal Q\times\mathbb R_+\to[0,\infty)$.
If the user chooses the outside option ($Y_t=0$), neither state changes.

\textbf{Conversion to the subscription.}
Over time, a user may convert to the paid subscription \AIp.
Conversion occurs once the user's accumulated AI experience $S_t$ exceeds a threshold that depends on user context $x$, subscription price $p$, and the ad exposure $C_{t}$, consistent with experience-driven adoption in freemium markets \citep{einav2025selling, miller2023sophisticated}.
Formally,
\begin{equation}
Z_{t} 
= 
\mathbf{1}\{Z_{t-1} = 1 \text{ or } S_{t} \geq \tau(x, p, C_{t})\},
\label{eq:conversion_rule}
\end{equation}
where $\tau(x,p,c)$ is the subscription threshold. 
The threshold is decreasing in $c$, reflecting that cumulative ad exposure can accelerate subscription demand \citep{cho2004people, anderson2005market}. 
\Cref{app:prob_adoption} replaces this threshold format with a smooth adoption hazard without changing the analysis.

We treat subscription as permanent: if $Z_t = 1$, then $Z_{t'} = 1$ for all $t' \geq t$. After subscription, the generative engine earns a per-period net payoff 
$r_{\mathrm{sub}}(p) := p - \kappa_{\mathrm{sub}}$, 
where $\kappa_{\mathrm{sub}} \geq \kappa$ is the inference cost for subscription-tier. 
This setup is standard in subscription markets and keeps the dynamic program tractable \citep{einav2025selling,della2006paying,farrell2007coordination}. 
\Cref{app:churn} relaxes this by allowing post-subscription churn without changing the main theoretical arguments.

\textbf{Cross-period retention and termination.}
Whether a user remains active depends on their accumulated interactions with the generative engine. 
After updating $(S_{t+1}, C_{t+1}, Z_{t+1})$, the user remains active in period $t+1$ with probability
$\rho_x(S_{t+1}, C_{t+1}, Z_{t+1}) \in [0,1]$,
where $\rho_x(s,c,1)=1$ for subscribed users. Thus, subscribed users remain active.
Retention increases with AI experience $S_{t+1}$ and decreases with ad exposure $C_{t+1}$, reflecting that positive interactions encourage continued use while excessive ad exposure drives users away \citep{gupta2006modeling, ascarza2013joint}. We model churn as permanent, which is standard in customer-base analysis \citep{fader2009probability, fader2010customer}.

\subsection{Generative Engine's Objective and Design Policy}
\label{sec:dp}

Given the user’s per-query response and the cross-period dynamics defined above, the generative engine now faces a dynamic decision problem: how to choose response formats over time to maximize long-run value.

\textbf{Policy class and value function.} For pre-subscription users, the generative engine's policy maps the user's current state and query to a response format. 
Formally, a stationary Markov policy is a measurable map
\[
g_x:\ [0,\infty)\times[0,\infty)\times\mathcal Q \to \{\AIa,\AIf\}, g_{x} \in \mathcal G_x,
\]

The value function $V_x(s,c,z)$ is the generative engine's maximum expected discounted payoff starting from state $(S_0, C_0, Z_0) = (s,c,z)$:
\begin{align}
V_x(s,c,z)
&:= 
\sup_{g_x\in\mathcal G_x}\; 
\mathbb E^{g_{x}}
\!\left[\sum_{t\ge 0}\beta^t\Big(\mathbf 1\{Z_t=0\}\, r_{A_t}(x,q_t)\right.
\nonumber
\\
&\left.\quad \quad \quad \quad +\mathbf 1\{Z_t=1\}\, r_{\sub}(p)\Big)\right],
\label{eq:value_def}
\end{align}
where future payoffs are discounted by $\beta \in (0,1)$, $A_t \in {\AIa, \AIf}$ is the response format chosen by $g_x$ when $Z_t = 0$, and $r_{A_t}(x,q_t)$ is the expected per-period payoff from \eqref{eq:expected_payoff}. Since per-period payoffs are bounded and $\beta < 1$, the value function $V_x(s,c,z)$ is well-defined under standard contraction arguments, ensuring the Bellman equations below are well-posed.

\textbf{Bellman equations.}
For a subscribed user ($z=1$), the generative engine makes no display decision and receives $r_{\sub}(p)$ each period, implying
\begin{equation}
V_x(s,c,1)=r_{\sub}(p)+\beta V_x(s,c,1)=\frac{r_{\sub}(p)}{1-\beta}.
\label{eq:bellman_subscribed_compact}
\end{equation}

For a non-subscribed user ($z=0$), conditional on $(x,s,c)$ and a query $q$, the generative engine chooses $a \in \{\AIa, \AIf\}$. 
The user's within-query choice $Y \in \{1,0\}$ follows \eqref{eq:within_query_logit}, and the one-period payoff is $r_a$ as defined in \eqref{eq:expected_payoff}. 
For each query, the generative engine compares the immediate ad revenue from \AIa against the discounted continuation value from \AIf, which generates no immediate revenue but boosts retention and subscription conversion. 
Using the state update rules from \Cref{sec:dynamics}, we define the post-update states:
\begin{align}
S_{a,y}^+(x,q,s)
&:= s + \mathbf 1\{a=\AIf, y = 1\}\delta_S(x,q,s),
\nonumber
\\
C_{a,y}^+(x,q,c)
&:= c + \mathbf 1\{a=\AIa, y = 1\}\delta_C(x,q,c),
\nonumber
\\
Z_{a,y}^+(x,q,s,c)
&:= \mathbf 1\{\,S_{a,y}^+(x,q,s)\ge \tau(x,p,C_{a,y}^+(x,q,c))\,\}.
\label{eq:post_update_mappings_dp}
\end{align}
Let $Y\sim \Pr(\cdot\mid x,q,a)$ denote the within-query choice, and write
$(S^+,C^+,Z^+):=(S_{a,Y}^+(x,q,s),\,C_{a,Y}^+(x,q,c),\,Z_{a,Y}^+(x,q,s,c))$.
The pre-subscription Bellman equation is
\begin{align}
V_x(s,c,0)
&=
\mathbb E_{q\sim \mathcal P(\cdot\mid x)}
\Bigg[
\max_{a\in\{\AIa,\AIf\}}
\Big\{
r_a(x,q)
\nonumber
\\
& \quad \quad +
\beta\,
\mathbb E_{Y\sim \Pr(\cdot\mid x,q,a)}
\Big[
\rho_{x} \big(S^+,C^+,Z^+\big)
\nonumber
\\
& \quad \quad \quad \cdot\, V_x \big(S^+,C^+,Z^+\big)
\Big]
\Big\}
\Bigg],
\label{eq:bellman_presub_compact}
\end{align}
\noindent

\textbf{Connection to a static benchmark.}
When cross-period interactions are absent, the design problem simplifies: each query is an independent decision with no future consequences. Specifically, the dynamic program reduces to a per-query static comparison between \AIa\ and \AIf\ when (i) retention is exogenous (independent of accumulated experience) and (ii) subscription has no option value (e.g., $\tau = \infty$ or $r_{\sub} = 0$). In this case, the generative engine simply chooses whichever response format yields higher immediate expected payoff on each query.

\section{Theoretical Results}
\label{sec:dp-results}
This section characterizes which users and queries should receive with-ad versus ad-free responses, and how this allocation shifts across market conditions. 
Understanding this variation helps identify the economic conditions under which different response formats are optimal.
We first derive the generative engine's optimal display policy, then study how it varies with user and query types (\Cref{sec:dp-cutoff}) and market conditions, including AI economics for inference cost and subscription price (\Cref{sec:dp-metrics}) and outside competition (\Cref{sec:dp-outside}).

\subsection{Optimal Policy Characterization}

We show that the optimal display policy takes a comparison rule: for a given user and query, the generative engine displays \AIa\ if and only if its total value — immediate ad revenue plus long-run value from retention and subscription — exceeds that of \AIf.

A policy is optimal if it maximizes the value function $V_x(s,c,0)$ in \eqref{eq:value_def}. 
For analysis, we impose standard regularity conditions (e.g., boundedness, measurability) throughout, which are common in dynamic programming and discrete-choice models \citep{puterman2014markov, rust1987optimal}; see \Cref{app:regularity} for details.
The following proposition characterizes this optimal policy.

\begin{proposition}[Optimal policy as a comparison rule]\label{prop:opt_gate}

Under \Cref{assm:regularity}, for any query $q$ and pre-subscription state $(s,c,0)$, define $Q_x^a(s,c,q)$ as the generative engine's expected payoff from displaying response $a \in {\AIa, \AIf}$, including the discounted continuation value:
\begin{align}
& Q_x^{a}(s,c,q)
\nonumber
\\
& \quad :=
r_{a}(x,q)
+\beta \mathbb E_{Y\sim \Pr(\cdot\mid x,q,a)}\!\big[ 
\nonumber
\\
& \quad \quad \; \;  \rho_{x}\big(S^+_{a,Y},C^+_{a,Y},Z^+_{a,Y}\big)\,
V_x \big(S^+_{a,Y},C^+_{a,Y},Z^+_{a,Y}\big) \big],
\end{align}
which is the expected payoff from choosing action $a$ on query $q$, including discounted continuation value.

Define \AIa's value edge over \AIf as $\Delta_x(s,c,q):=Q_x^{\AIa}(s,c,q)-Q_x^{\AIf}(s,c,q)$. Then the \textit{optimal policy} $g_{x}$ chooses \AIa (i.e., $g_{x}^{*}(s,c,q)=\AIa$) if and only if $\Delta_x(s,c,q)\ge 0$.
See \Cref{app:proof}. 
\qed
\end{proposition}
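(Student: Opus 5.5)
The plan is the standard dynamic-programming route: establish that $V_x$ is the unique fixed point of the Bellman operator, then show that any policy greedy with respect to $V_x$ (choosing $\arg\max_a Q_x^a$) attains $V_x$. The comparison rule $\Delta_x\ge 0$ is then this greedy selection, with ties broken toward \AIa.

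\textbf{Step 1: contraction.} Define on bounded measurable functions $W(s,c)$ (pre-subscription values) the operator
\[
(TW)(s,c)=\mathbb E_{q}\Big[\max_{a}\Big\{r_a(x,q)+\beta\,\mathbb E_Y\big[\rho_x(S^+,C^+,Z^+)\,\widetilde W(S^+,C^+,Z^+)\big]\Big\}\Big],
\]
where $\widetilde W(\cdot,\cdot,1)=r_{\sub}(p)/(1-\beta)$ by \eqref{eq:bellman_subscribed_compact} and $\widetilde W(\cdot,\cdot,0)=W$. I will check two facts.
\begin{itemize}[noitemsep]
\item $T$ maps bounded measurable functions to themselves. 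Boundedness follows from bounded $r_a$ and $R$ under \Cref{assm:regularity}. Measurability of the integrand follows from measurability of $\delta_S,\delta_C,\tau,\rho_x$; the max over two actions preserves it.
\item $T$ is a $\beta$-contraction in sup norm, because $0\le\rho_x\le 1$ and the max is $1$-Lipschitz.
\end{itemize}
Banach's fixed-point theorem then gives a unique fixed point $W^*$.

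\textbf{Step 2: the fixed point is $V_x(\cdot,\cdot,0)$.} For any policy $g_x$, let $T_g$ be the policy evaluation operator. Its fixed point is the value of $g_x$ under \eqref{eq:value_def}, shown by unrolling the expectation, using that subscription is absorbing and that termination gives zero. Since $T_g W\le TW$ pointwise and both operators are monotone, iteration gives $V^{g}\le W^*$, hence $\sup_g V^g\le W^*$.

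For the reverse inequality, take the greedy policy $g^*(s,c,q)=\AIa$ iff $\Delta_x(s,c,q)\ge0$, computed with $W^*$. It is measurable because $\Delta_x$ is measurable in $(s,c,q)$. By construction $T_{g^*}W^*=TW^*=W^*$, so $V^{g^*}=W^*$. Therefore $W^*=V_x(\cdot,\cdot,0)$ and $g^*$ attains the supremum.

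\textbf{Step 3: the comparison rule, including ``only if''.} With $V_x$ identified, the $Q_x^a$ in the statement are exactly the terms inside the max in \eqref{eq:bellman_presub_compact}. Choosing \AIa\ when $\Delta_x\ge0$ is therefore optimal. Conversely, suppose a policy picks \AIa\ on a set of positive $\mathcal P(\cdot\mid x)$-measure where $\Delta_x<0$. Then $T_gV_x<TV_x=V_x$ strictly at that state. Monotonicity and the contraction property give $V^g<V_x$ there, so the policy is strictly suboptimal. This establishes the ``only if'' direction up to null sets and the tie-breaking convention.

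\textbf{Main obstacle.} I expect the hardest part to be the measurability and boundedness bookkeeping that ensures a measurable greedy selector exists and that $V_x$ is well-defined on the continuous state space $[0,\infty)^2$. The threshold map $Z^+=\mathbf 1\{S^+\ge\tau(x,p,C^+)\}$ is discontinuous, so continuity arguments are unavailable. I would rely only on Borel measurability of $\tau$, $\delta_S$, $\delta_C$ and $\rho_x$ as granted by \Cref{assm:regularity}, together with the finiteness of the action set. That avoids any measurable-selection theorem beyond taking a maximum over two measurable functions.
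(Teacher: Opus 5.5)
Your proposal is correct and follows the paper's route: you show the Bellman operator is a $\beta$-contraction using $\rho_x\in[0,1]$, apply Banach's fixed-point theorem, and select greedily between the two actions with ties broken toward \AIa. Two of your steps fill in what the paper's Step 4 only asserts: Step 2 uses policy-evaluation operators $T_g$ and monotone iteration to show the fixed point equals the supremum in \eqref{eq:value_def}, and your strict-suboptimality argument handles the ``only if'' direction. Replacing the subscribed state by its closed form $r_{\sub}(p)/(1-\beta)$, instead of keeping $z=1$ in the operator as the paper does, makes no difference.
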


Next, we decompose \AIa's value edge over \AIf $\Delta_x(s,c,q)$ into short-term monetization and long-term value from retention and conversion.

\begin{remark}[Short-term vs.\ long-term trade-off]\label{rem:tradeoff}
For any pre-subscription state and query $(x,s,c,q)$, we can identify the generative engine's short-term and long-term value trade-off by expanding the value edge $\Delta_x(s,c,q)$. 
We denote the long-term value function as

The value edge $\Delta_x(s,c,q)$ decomposes into two components: a short-term ad revenue edge and a long-term value edge.
Define the expected future value from retaining the user as
\begin{align}
&W_x(S^+_{a,Y},C^+_{a,Y},Z^+_{a,Y})
\\
\nonumber
&\quad :=
\rho_x(S^+_{a,Y},C^+_{a,Y},Z^+_{a,Y})\,V_x(S^+_{a,Y},C^+_{a,Y},Z^+_{a,Y}).
\end{align}
Then:
\begin{align}
&\Delta_x(s,c,q)
=
\nonumber
\\
& \underbrace{R(x,q)\,\Pr(Y=1\mid x,q,\AIa)}_{\text{short-term monetization edge of \AIa over \AIf}}
\nonumber\\
& +
\beta\Big(
\mathbb E_{Y\sim \Pr(\cdot\mid x,q,\AIa)}\big[ W_x(S^+_{\AIa,Y},C^+_{\AIa,Y},Z^+_{\AIa,Y}) \big]
\nonumber
\\
& \underbrace{
-
\mathbb E_{Y\sim \Pr(\cdot\mid x,q,\AIf)}\big[ W_x(S^+_{\AIf,Y},C^+_{\AIf,Y},Z^+_{\AIf,Y}) \big]
\Big)}_{\text{long-term value edge (retention and subscription)}}.
\label{eq:delta_tradeoff}
\end{align}
The first component is \AIa's expected ad revenue. 
The second component captures the long-term trade-off: \AIf\ builds AI experience and drives subscription conversion, whereas \AIa\ raises ad exposure that may reduce future engagement. \qed
\end{remark}

\Cref{fig:sim_threshold} illustrates this decomposition.
As AI experience $s$ increases, the long-term component increasingly favors \AIf\ through higher retention and earlier subscription conversion (\Cref{fig:sim_threshold}, left).
Once $s$ exceeds the subscription threshold $\tau$, the user subscribes and the value reaches the subscribed level (\Cref{fig:sim_threshold}, right).

\subsection{Implications of User and Query Types}
\label{sec:dp-cutoff}

We now characterize how the optimal design policy varies across user and query types, identifying when the policy switches between \AIa and \AIf.

\textbf{User and query types.}
We capture variation in the generative engine’s environment using user and query types.
Each user has type $\gamma(x)$ capturing \emph{ad sensitivity}, i.e., the user’s responsiveness to ads---how strongly additional \textit{ad exposure} lowers future engagement \citep{anderson2005market,cho2004people,edwards2002forced}.

On the query side, $r_{\AIa}(x,q)$ measures \emph{ad profitability} (the expected short-term revenue from showing ads on query $q$) \citep{ghose2009empirical}.
Further, $\psi(x,q)$ measures the \emph{AI experience gain} from showing an ad-free generative response, which accelerates user's experience accumulation and subscription conversion.
Formally, these types enter the model through the retention and conversion functions $(\rho,\tau)$ and the experience updates $\delta_S, \delta_C$; see \Cref{app:regularity} for detailed assumptions.

\begin{figure}[t]
  \centering
  \includegraphics[width=\linewidth]{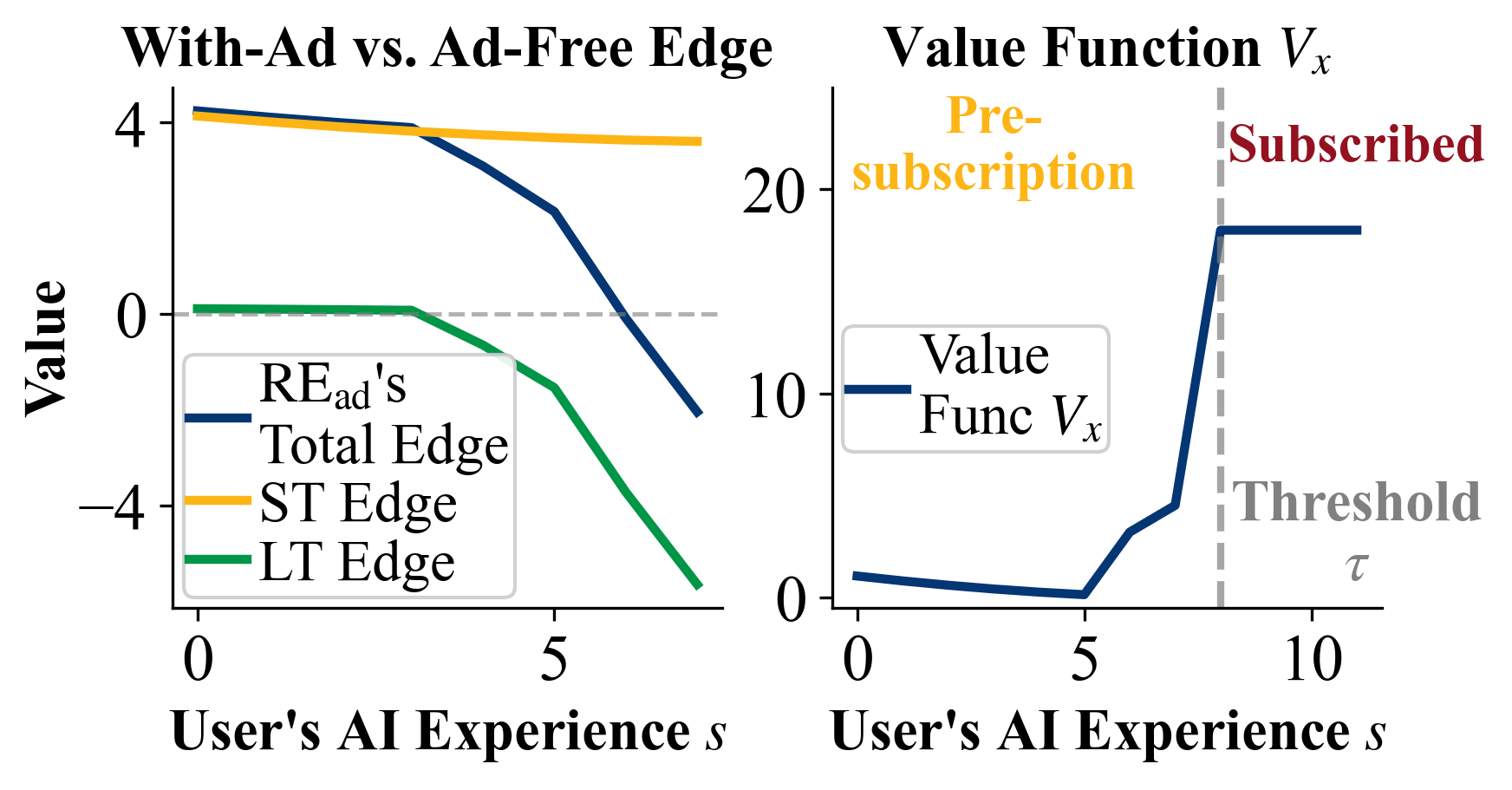}
  \vspace{-15pt}
  \caption{
  Simulation of the optimal policy in \Cref{prop:opt_gate} for a fixed user and ad exposure state, as AI experience increases.
  \emph{Left:} \AIa's total edge over \AIf\ and its short-term (ST) and long-term (LT) components from \Cref{rem:tradeoff}.
  \emph{Right:} Pre-subscription value function $V_{x}$; the dashed line marks the subscription threshold, beyond which the value equals the subscribed level.
  }
  \label{fig:sim_threshold}
  \vspace{-15pt}
\end{figure}

\begin{proposition}[Threshold structure over user and query features]\label{prop:type_order}
Under \Cref{assm:regularity,assm:type_mono}, fix a pre-subscription state $(s,c,0)$.
Define the user--query type tuple as
\[
t\equiv(\gamma,r,\psi):=\big(\gamma(x),\,r_{\AIa}(x,q),\,\psi(x,q)\big).
\]
Define a partial order on types by
\[
(\gamma,r,\psi)\preceq(\gamma',r',\psi')
\Longleftrightarrow
\gamma\le\gamma',\ \psi\le\psi',\ \text{and } r\ge r'.
\]
Then the following hold:
\begin{enumerate}[noitemsep, topsep=0pt]
\item \textbf{(Monotone ordering across user--query types).}
If \AIf is optimal at type $t$, then it is also optimal at any higher type $t'\succeq t$.

\item \textbf{(Type thresholds along a single index).}
Holding $(\gamma,r)$ fixed, there exists a threshold
$\psi^*(s,c;\gamma,r)\in[-\infty,\infty]$ such that
\[
\AIa \text{ is optimal } \Longleftrightarrow \psi \le \psi^*(s,c;\gamma,r).
\]
Holding $(\gamma,\psi)$ fixed, there exists a threshold
$r^*(s,c;\gamma,\psi)\in[-\infty,\infty]$ such that
\[
\AIa \text{ is optimal } \Longleftrightarrow r \ge r^*(s,c;\gamma,\psi).
\]
\end{enumerate}
\end{proposition}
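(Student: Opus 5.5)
The approach is to reduce everything to monotonicity of the value edge $\Delta_x(s,c,q)$ from \Cref{prop:opt_gate} in the type tuple $t=(\gamma,r,\psi)$. Since $g^*_x(s,c,q)=\AIa$ if and only if $\Delta_x(s,c,q)\ge 0$, it suffices to show that $\Delta$, viewed as a function $\Delta(s,c;t)$ of the type at the fixed state $(s,c,0)$, is weakly decreasing along $\preceq$. Concretely, $\Delta$ should be nonincreasing in $\gamma$, nonincreasing in $\psi$, and nondecreasing in $r$. Part~1 then follows: if $\Delta(s,c;t)<0$ and $t\preceq t'$, then $\Delta(s,c;t')\le\Delta(s,c;t)<0$, so \AIf\ is optimal at $t'$. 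The same argument works if ties are broken toward \AIf.

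\textbf{Step 1: the $r$-direction.} I will use the decomposition in \Cref{rem:tradeoff}. The short-term term is $R(x,q)\Pr(Y=1\mid x,q,\AIa)=r_{\AIa}(x,q)+\kappa = r+\kappa$, which is increasing in $r$ one-for-one. I will use \Cref{assm:type_mono} to say that the continuation terms depend on $(x,q)$ only through $(\gamma,\psi)$ and the state, at least weakly. Then the long-term edge is constant in $r$, so $\Delta$ is strictly increasing in $r$. If the engagement probability shifts with $r$ and changes the continuation lottery, I will instead invoke whatever monotonicity the assumption imposes to sign that effect.

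\textbf{Step 2: the $\gamma$- and $\psi$-directions.} Here I need a monotone-comparative-statics argument on the value function. First I will establish, by value iteration on the Bellman operator in \eqref{eq:bellman_presub_compact}, that $V_x(\cdot,\cdot,0)$ is nondecreasing in $s$ and nonincreasing in $c$. The operator preserves this cone, because $\rho_x$ is increasing in $s$ and decreasing in $c$, $\tau$ lies below the subscribed value, and $V_x(\cdot,\cdot,1)=r_{\sub}/(1-\beta)$ dominates pre-subscription values. This dominance uses $r_{\sub}\ge$ the per-period payoff bound, and I expect it is part of the assumptions. The cone is closed, so the fixed point inherits the property.

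Then $W_x=\rho_xV_x$ has the same monotonicity. A larger $\psi$ increases $\delta_S$, so the \AIf\ continuation $W_x(s+\delta_S,c,Z^+)$ weakly increases, and the long-term edge falls. A larger $\gamma$ makes retention $\rho_x$ more sensitive to $c$, and possibly increases $\delta_C$, so $W_x$ after \AIa\ weakly falls.

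The delicate point is that $V_x$ itself depends on $\gamma$ and $\psi$, through future queries and the user's context. For $\gamma$ I will argue by induction on value iterates that $V_x$ has decreasing differences in $(c,\gamma)$. The aim is for the \AIa\ loss $W(s,c)-W(s,c+\delta_C)$ to grow in $\gamma$ faster than any uniform drop in level. For $\psi$, I will treat the current query's $\psi(x,q)$ as varying while the future query distribution is held fixed. This works because queries are conditionally i.i.d., and I will read \Cref{assm:type_mono} as a query-level comparative static. Under that reading only the current-step increment changes, and monotonicity of $W$ in $s$ suffices.

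\textbf{Step 3: the single-index thresholds.} Part~2 follows from the one-dimensional monotonicity. With $(\gamma,r)$ fixed, set $\psi^*=\sup\{\psi:\Delta\ge0\}$, with the conventions $\sup\emptyset=-\infty$ and $\psi^*=+\infty$ if the set is unbounded. Monotonicity makes $\{\psi:\Delta\ge0\}$ a lower set, hence an interval $(-\infty,\psi^*]$ or $(-\infty,\psi^*)$. Closedness at the endpoint needs upper semicontinuity of $\Delta$ in $\psi$, which I will take from the regularity conditions; otherwise I will define the threshold accordingly. The argument for $r^*$ is symmetric, and continuity in $r$ is immediate from Step~1.

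\textbf{Main obstacle.} I expect the hardest part to be the $\gamma$-direction in Step~2. There the sign of the long-term edge requires a supermodularity (decreasing-differences) property of $V_x$ in $(c,\gamma)$, not merely monotonicity. Proving that this property is preserved by the Bellman operator, given the $\max$ and the logit mixing, is where I would lean most heavily on the specific content of \Cref{assm:type_mono}.
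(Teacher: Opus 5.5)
Your reduction and your Step~3 match the paper's proof. Part~1 follows from $\Delta_x$ being weakly decreasing along $\preceq$, and Part~2 takes $\sup$/$\inf$ of a lower/upper interval of types, with the same $\pm\infty$ conventions. The gap is in Steps~1--2, where you try to \emph{derive} that monotonicity from model primitives. The paper never does this, and your derivation would not go through.

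\Cref{assm:type_mono} postulates exactly what you need. It says $D_x=-\Delta_x$ factors as $\widetilde D(s,c;\gamma,r,\psi)$, which is weakly increasing in $\gamma$ and $\psi$ and weakly decreasing in $r$. It also gives left-continuity in $\psi$ and right-continuity in $r$. Those continuity clauses are what make the endpoints $\psi\le\psi^*$ and $r\ge r^*$ closed. They are not in \Cref{assm:regularity}, and they do not follow from your Step~1.

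Concretely, the primitive-level route breaks at several points:
\begin{itemize}
\item \emph{The cone is not preserved.} The property ``$V_x(\cdot,\cdot,0)$ nondecreasing in $s$, nonincreasing in $c$'' fails because $\tau(x,p,c)$ is \emph{decreasing} in $c$. Higher ad exposure brings subscription closer, where $\rho=1$ and the value is $r_{\sub}(p)/(1-\beta)$, so $V_x$ can increase in $c$. Since $Z^+$ is evaluated at $C^+$, even the current \AIa\ can trigger conversion. So a larger $\delta_C$ after \AIa\ need not lower the continuation, which undercuts your $\gamma$-channel.
\item \emph{No dominance of the subscribed value.} Nothing in the paper makes $r_{\sub}(p)/(1-\beta)$ dominate pre-subscription values; only $|r_{\sub}(p)|<\infty$ is assumed. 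Crossing the threshold can therefore lower value, and monotonicity in $s$ can fail as well.
\item \emph{No primitive structure for the types.} The paper never specifies how $\gamma$ enters $\rho$ or $\delta_C$, or how $\psi$ enters $\delta_S$. Moreover $\gamma=\gamma(x)$ changes the entire $V_x$. The decreasing-differences property in $(c,\gamma)$ that you flag as the main obstacle therefore has nothing to be built from.
\item \emph{Step~1 does not hold as stated.} Varying $r=r_{\AIa}(x,q)$ means varying $(x,q)$, which also moves $\Pr(Y=1\mid x,q,\AIa)$ and the continuation lottery. The long-term edge need not be constant in $r$, so continuity in $r$ is not ``immediate.''
\end{itemize}

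The fix is to drop Steps~1--2 and cite \Cref{assm:type_mono} for both the monotonicity and the one-sided continuity. Your Step~3 then becomes the paper's argument essentially verbatim.
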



\begin{figure}[t]
  \centering
  \includegraphics[width=\linewidth]{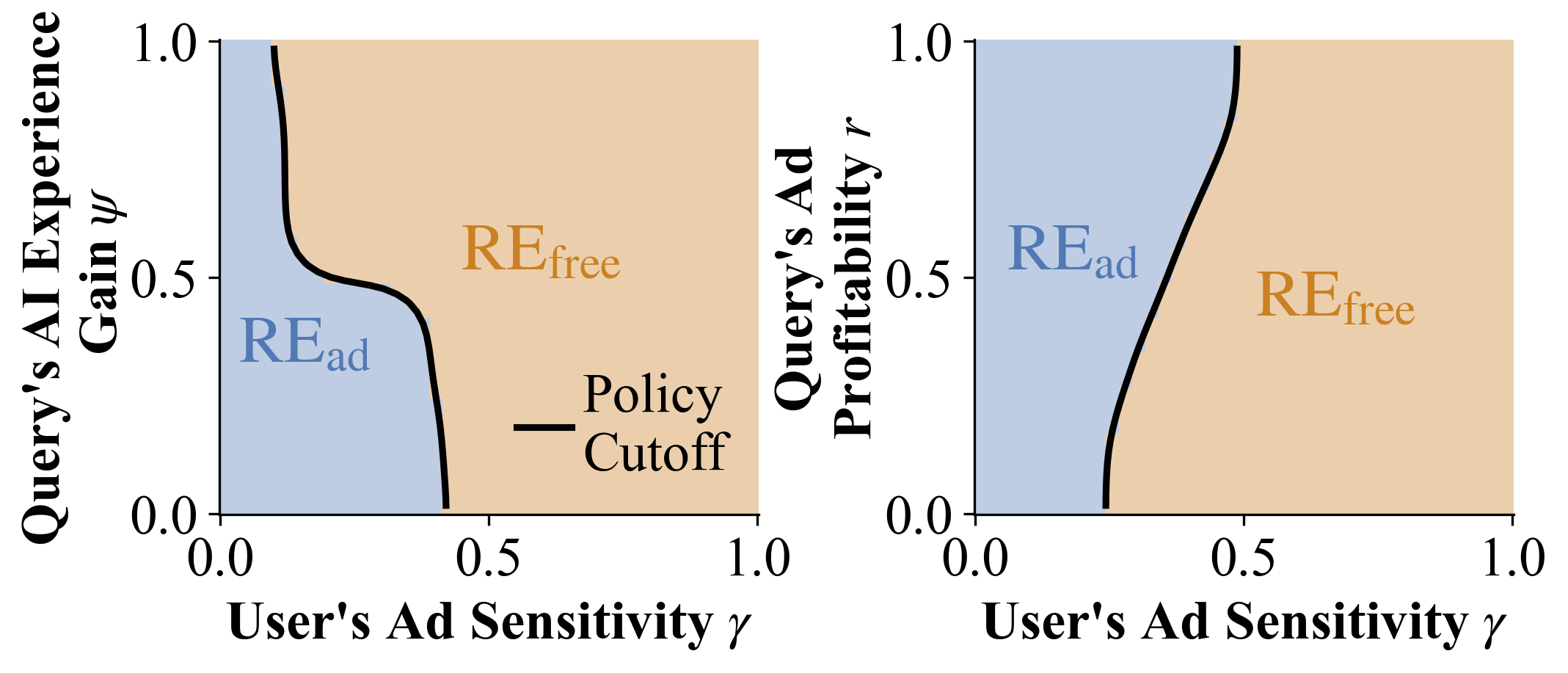}
  \caption{
  Illustration of the optimal policy cutoff across user and query types from \Cref{prop:type_order}.
  \emph{Left:} User \textit{ad sensitivity} $\gamma$ vs.\ query \textit{AI experience gain} $\psi$.
  \emph{Right:} User \textit{ad sensitivity} $\gamma$ vs.\ query \textit{ad profitability} $r$.
  The solid curve separates \AIa\ (blue) and \AIf\ (orange) regions.
  }
  \label{fig:type_pdf_user}
\end{figure}

\begin{remark}[Interpretation: when to show \AIf and \AIa]\label{rem:type_regimes}
\Cref{prop:type_order} delivers a key ordering result: the generative engine decides toward \AIf for users who are more \textit{ad-sensitive} and for queries where an ad-free response generates larger user experience gains, and shifts toward \AIa when the query’s immediate \textit{ad revenue} is higher.
Moreover, this ordering implies a clear cutoff: as a user or query becomes more favorable to \AIf\ along these types, there is a point at which the generative engine switches from showing \AIa to \AIf.
\end{remark}

\Cref{fig:type_pdf_user} presents the type-ordered structure in \Cref{prop:type_order} by overlaying the optimal display decision on the population density of user-type indices.
Holding the other parameters fixed, the \AIf region expands monotonically with ad sensitivity $\gamma$, yielding a clean cutoff that splits the population into \AIa-preferred and \AIf-preferred regions.
See the illustration over query types $(r_{\AIa},\psi)$ in \Cref{app:extra_simu_user_query}. We also show the illustration of \Cref{fig:sim_threshold} under user heterogeneity at \Cref{app:extra_simu_policy}.

\textbf{Generative engine’s weight on future outcomes.}
We also show that a more forward-looking generative engine (higher discount factor $\beta$) displays \AIf more often, as \AIf improves long-term value through retention and experience accumulation (\Cref{app:dp-horizon-beta}).

\subsection{Implications of AI Economics: Inference Cost and Subscription Price}
\label{sec:dp-metrics}

Inference costs and subscription pricing jointly influence the allocation between \AIa\ and \AIf. Inference costs affect the relative value of keeping a user in the free tier versus moving that user into the paid tier, while subscription pricing changes the revenue from conversion.

\textbf{Inference cost.} 
As defined in \Cref{sec:within_query}, both \AIa\ and \AIf\ incur the same free-tier inference cost $\kappa$ per query, while the paid subscription tier carries cost $\kappa_{\sub} \geq \kappa$.
To analyze how these costs affect the design policy, consider a user close to subscribing, i.e., whose AI experience $s$ is near the subscription threshold $\tau(x,p,c)$, where inference costs have the most direct effect on the design choice (see \Cref{assm:one_step} for the formal setup).

\begin{proposition}[Free-tier and paid-tier inference costs have opposite effects near subscription]
\label{prop:cs-cost-local}
Under \Cref{assm:regularity,assm:cs-cost,assm:one_step}, recall the \AIa's value edge $\Delta_x(s,c,q)$ from \Cref{prop:opt_gate}; here we index it by inference costs and write $\Delta_x(s,c,q;\kappa,\kappa_{\sub})$.

\begin{enumerate}[noitemsep, topsep=0pt]
\item \textbf{Higher free-tier cost $\kappa$ raises incentives to display \AIf in the near-threshold region.}
$\Delta_x(s,c,q;\kappa,\kappa_{\sub})$ is weakly decreasing in $\kappa$. That is, at near-threshold states where displaying \AIf\ triggers subscription while displaying \AIa\ leaves the user pre-subscription, increasing $\kappa$ pushes the decision toward \AIf.

\item \textbf{Higher paid-tier cost $\kappa_{\sub}$ reduces incentives to display \AIf in the near-threshold region.}
Under \Cref{assm:one_step}, $\Delta_x(s,c,q;\kappa,\kappa_{\sub})$ is weakly increasing in $\kappa_{\sub}$. That is, at near-threshold states where displaying \AIf would trigger subscription, increasing $\kappa_{\sub}$ pushes the decision further toward \AIa. 

\end{enumerate}
\end{proposition}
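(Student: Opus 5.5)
The plan is to write out $\Delta_x(s,c,q;\kappa,\kappa_{\sub})$ explicitly at a near-threshold state, using the one-step structure from \Cref{assm:one_step}, and then differentiate (or difference) in each cost.

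The key observation is that both $r_{\AIa}$ and $r_{\AIf}$ contain the same $-\kappa$. That term therefore cancels in the immediate part of $\Delta_x$, and the cost dependence sits entirely in the continuation terms $W_x$. Under \Cref{assm:one_step}, the near-threshold configuration is the following.

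\begin{itemize}
\item After \AIf\ with engagement ($Y=1$), the user crosses $\tau(x,p,c)$. So $Z^+=1$, $\rho_x=1$, and the continuation equals $r_{\sub}(p)/(1-\beta)=(p-\kappa_{\sub})/(1-\beta)$ by \eqref{eq:bellman_subscribed_compact}.
\item After \AIa, or after \AIf\ with $Y=0$, the user stays pre-subscription. The continuation is then $\rho_x(\cdot)V_x(\cdot,\cdot,0)$.
\end{itemize}

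Writing $P_f:=\Pr(Y=1\mid x,q,\AIf)$, this gives
\[
\Delta_x = R\,\Pr(Y=1\mid x,q,\AIa) + \beta\,\mathbb E_{\AIa}[W_x] - \beta(1-P_f)\,\rho_x V_x(s,c,0) - \beta P_f\,\frac{p-\kappa_{\sub}}{1-\beta}.
\]

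\textbf{Part 2 ($\kappa_{\sub}$).} Here I would use the part of \Cref{assm:one_step} (or \Cref{assm:cs-cost}) saying that the pre-subscription continuation values in the near-threshold region do not depend on $\kappa_{\sub}$ to first order. One way this holds is if subscription from those continuation states is reached only through further states whose value we can bound. The cleaner route is to argue that the dependence of $\mathbb E_{\AIa}[W_x]-(1-P_f)\rho_xV_x$ on $\kappa_{\sub}$ is dominated by the direct term. Concretely, $V_x(\cdot,\cdot,0)$ is a sup of policy values, each affine in $\kappa_{\sub}$ with slope $-\beta^{T}/(1-\beta)$ weighted by the discounted probability of eventual subscription. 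Under the one-step assumption, \AIf\ subscribes immediately with probability $P_f$, while the \AIa\ branch subscribes with probability no larger and only with a further delay. Comparing these slopes (envelope theorem applied to the maximized value) shows that $\partial\Delta_x/\partial\kappa_{\sub}\ge 0$. The argument runs through the discounted subscription-occupation measures: \AIa's continuation has weakly smaller discounted subscription mass than \AIf's, so raising $\kappa_{\sub}$ hurts the \AIf\ branch more.

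\textbf{Part 1 ($\kappa$).} The same envelope logic applies, but now $\kappa$ is paid in every pre-subscription period. Each policy's value is affine in $\kappa$ with slope equal to minus the discounted expected number of active pre-subscription periods. The \AIf\ branch ends the free-tier phase (subscription) with probability $P_f$ immediately. The \AIa\ branch keeps the user in the free tier longer, and by \Cref{assm:cs-cost} its retention is not so much lower that it incurs fewer free-tier cost periods. Hence the \AIa\ branch's continuation has a weakly more negative slope in $\kappa$, so $\Delta_x$ is weakly decreasing in $\kappa$. Using the sup over policies, I would take the optimal policies at the two cost levels and use the standard sup-of-affine (convexity) inequality to turn slope comparisons into monotonicity, avoiding differentiability issues.

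\textbf{Main obstacle.} The hard step is justifying the comparison of discounted occupation measures, i.e.\ free-tier periods for $\kappa$ and subscription mass for $\kappa_{\sub}$, across the two branches when the continuation policy is itself optimized and changes with costs. I would first try to reduce this by reading \Cref{assm:one_step} as fixing the continuation behavior in the near-threshold region. If that fails, I would fall back on a coupling argument: the \AIf\ branch reaches absorption (subscription) pathwise no later than the \AIa\ branch under any common continuation policy.
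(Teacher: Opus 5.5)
There is a genuine gap. You never use item (iii) of \Cref{assm:one_step}, namely $\Pr(Y=1\mid x,q,\AIf)=1$, and the proposition depends on it. With a general $P_f<1$, the comparisons you invoke are neither assumed nor true.

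\Cref{assm:cs-cost} contains no retention condition, and Part 1 can fail outright. Suppose \AIa\ is engaged with higher probability than \AIf, the added ad exposure drives $\rho_x$ to zero, and $\rho_x(s,c,0)=1$. Then the only $\kappa$-dependent term of $\Delta_x$ is $\beta\bigl(P_f-\Pr(Y=1\mid x,q,\AIa)\bigr)V_x(s,c,0)$, which is strictly increasing in $\kappa$. The reason is that the non-engaged \AIf\ branch stays in the free tier, while the churned \AIa\ branch incurs no further $\kappa$.

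For Part 2, two of your claims break: that the \AIa\ branch has weakly smaller discounted subscription mass, and the coupling ``\AIf\ absorbs pathwise no later.'' Both fail because $\tau(x,p,c)$ is decreasing in $c$. Item (ii) only rules out conversion at the first step, so a second ad exposure in the \AIa\ branch can itself trigger conversion at $t+2$. Meanwhile the non-engaged \AIf\ branch sits at $(s,c,0)$ facing the higher threshold. With $P_f$ small and $\Pr(Y=1\mid x,q,\AIa)$ near one, this can make $\Delta_x$ decreasing in $\kappa_{\sub}$.

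Once item (iii) is used, your own expression becomes $\Delta_x=R(x,q)\Pr(Y=1\mid x,q,\AIa)+\beta\,\mathbb E_{\AIa}[W_x]-\beta(p-\kappa_{\sub})/(1-\beta)$. The \AIf\ branch then has a closed form, and your ``main obstacle'' disappears; this is the paper's route.

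For Part 1, $\kappa$ enters only through $\mathbb E_{\AIa}[\rho_x V_x(\cdot,\cdot,0)]$, since item (ii) keeps $Z^+=0$ after \AIa. By the monotone-operator comparison in \Cref{prop:kappa_xq_value}, $V_x(\cdot,\cdot,0)$ is weakly decreasing in $\kappa$. Together with $\rho_x\ge 0$, this finishes Part 1, with no statement needed about retention or free-tier occupation times.

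For Part 2, let $\Delta\kappa_{\sub}:=\kappa_{\sub,2}-\kappa_{\sub,1}\ge 0$. Then $Q_x^{\AIf}$ falls by exactly $\tfrac{\beta}{1-\beta}\Delta\kappa_{\sub}$. After \AIa, the subscribed stream cannot start before $t+2$, so under every continuation policy the value falls by at most $\tfrac{\beta^2}{1-\beta}\Delta\kappa_{\sub}$. A supremum of functions that each fall by at most $L$ falls by at most $L$, so the same bound holds for $Q_x^{\AIa}$. Hence $\Delta_x$ rises by at least $\beta\,\Delta\kappa_{\sub}\ge 0$.

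This crude uniform bound replaces your envelope-theorem and occupation-measure comparison. It works precisely because, with $P_f=1$, the \AIf\ branch already has the maximal subscription exposure $\beta/(1-\beta)$ and zero future free-tier exposure.
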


Near the subscription threshold, the two costs work in opposite directions. A higher free-tier cost makes it more attractive to move the user out of the free tier, while a higher paid-tier cost makes immediate conversion less attractive.

\begin{figure}[t]
  \centering
  \includegraphics[width=\linewidth]{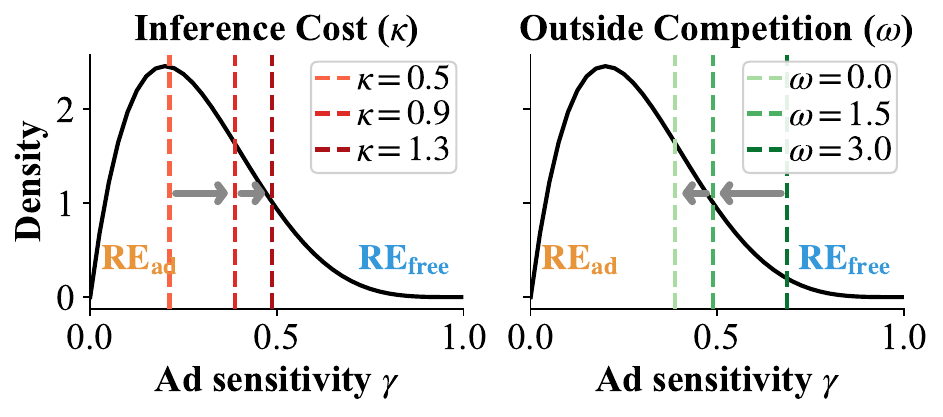}
  \caption{
  Policy cutoff shift illustration under inference cost and outside competition, at a fixed pre-subscription state $(s,c)$.
  Ad sensitivity follows $\gamma\sim\mathrm{Beta}$.
  The solid curve shows the user-type density; users to the left of the cutoff are served \AIa, while those to the right are served \AIf.
  Dashed lines indicate the cutoff $\gamma^*$ at each parameter level, and arrows indicate the direction of the cutoff shift.
  \emph{(a)}~Varying inference cost $\kappa$; \emph{(b)}~varying outside competition $\omega$.
  }
  \vspace{-15pt}
  \label{fig:competition_cutoff}
\end{figure}

\textbf{Subscription price.}
Another key economic factor is the subscription price $p$. 
As defined in \Cref{sec:dynamics}, $p$ enters through both the subscription conversion threshold $\tau(x,p)$ and the post-conversion revenue $r_{\sub}(p)$, so it affects the generative engine’s incentive to display \AIf. 
We study the price effect in the same one-step conversion region (as in \Cref{prop:cs-cost-local}), where displaying \AIf\ triggers immediate subscription.

\begin{proposition}[Higher subscription price shifts the design toward \AIf near the threshold.]\label{prop:cs-p-local}
Under \Cref{assm:regularity,assm:cs-cost,assm:cs-p-nuanced}, fix a pre-subscription decision point $(x,s,c,q)$ and consider any price interval
$I$ such that for all $p\in I$, the one-step conversion conditions in \Cref{assm:one_step} hold at $(x,p,s,c,q)$.
Then the \AIa's value edge $\Delta_x(s,c,q;\,p)$ is weakly decreasing in $p$ on $I$.
Equivalently, in this near-threshold region, a higher subscription price weakly shifts the optimal design toward \AIf.
\end{proposition}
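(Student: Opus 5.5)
The plan is to write $\Delta_x(s,c,q;p)$ explicitly in the one-step conversion region of \Cref{assm:one_step}, and then differentiate (or difference) term by term in $p$. In that region, displaying \AIf\ with engagement $Y=1$ pushes $S^+=s+\delta_S(x,q,s)$ past $\tau(x,p,c)$. The continuation is therefore the subscribed value $r_{\sub}(p)/(1-\beta)=(p-\kappa_{\sub})/(1-\beta)$, with $\rho_x(\cdot,\cdot,1)=1$. Under \AIa, or under \AIf\ with $Y=0$, the user stays pre-subscription and the continuation is $\rho_x(S^+,C^+,0)V_x(S^+,C^+,0;p)$. Using the decomposition in \Cref{rem:tradeoff}, I would write
\begin{align*}
\Delta_x(s,c,q;p)
&= R(x,q)\Pr(Y=1\mid x,q,\AIa)
+\beta\,\mathbb E_{Y\mid \AIa}\big[W_x(S^+_{\AIa,Y},C^+_{\AIa,Y},0;p)\big]
\\
&\quad -\beta\Big(\pi_f\,\frac{p-\kappa_{\sub}}{1-\beta}+(1-\pi_f)\,W_x(s,c,0;p)\Big),
\end{align*}
where $\pi_f:=\Pr(Y=1\mid x,q,\AIf)$. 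The immediate revenue term does not depend on $p$, since $R$ and the logit choice probabilities depend only on $(x,q,a)$.

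The key step is a bound on how fast the pre-subscription value can grow in $p$. Specifically, I want
\[
V_x(s',c',0;p')-V_x(s',c',0;p)\le \frac{p'-p}{1-\beta}\cdot \sup_{g}\mathbb E^g\Big[\sum_t\beta^t\mathbf 1\{Z_t=1\}\Big]
\]
for $p<p'$. The idea is that a price increase raises revenue only in subscribed periods. The competing effect runs through the threshold: I expect \Cref{assm:cs-p-nuanced} to state that $\tau(x,p,c)$ is nondecreasing in $p$, so a higher price delays conversion, or otherwise to restrict the net price effect. In either case I would argue by a policy-coupling argument. Fix the optimal policy at $p'$ and run it at $p$ as well. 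Then compare realized payoff streams path by path, using the fact that under the higher threshold the user subscribes weakly later. Because $r_{\sub}(p)\ge r_a$ for all $a$, which is part of \Cref{assm:cs-cost}, late conversion can only lose value. This yields
\[
\partial_p V_x(\cdot,0;p)\le \mathbb E[\text{discounted subscribed time}]\le \frac{\beta}{1-\beta}<\frac{1}{1-\beta}.
\]
The strict inequality holds because conversion cannot happen before the next period.

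With this bound in hand, the \AIf\ term falls by at least $\beta\pi_f/(1-\beta)$ per unit of $p$. The \AIa\ continuation term, together with the $(1-\pi_f)$ term, changes by at most the growth of pre-subscription values. I then need
\[
\beta\,\mathbb E_{Y\mid\AIa}\big[\partial_p W\big]-\beta(1-\pi_f)\,\partial_p W_x(s,c,0)\le \frac{\beta\pi_f}{1-\beta}.
\]
I expect this to be the main obstacle, because the one-step bound above uses a factor $\beta$ of discounted subscribed time while the right-hand side carries $\pi_f$. My first attempt would be to invoke whatever \Cref{assm:cs-p-nuanced} provides directly. I anticipate it bounds the price sensitivity of the pre-subscription continuation values relative to the immediate-conversion gain, for instance by asserting that the \AIa\ branch keeps the user away from subscription for at least one step with sensitivity at most $\pi_f/(1-\beta)$. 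If it does not, I would fall back on working with differences rather than derivatives. I would also handle the threshold-crossing set via the fact that the one-step conditions hold uniformly on $I$, so that the branch structure of $\Delta$ is fixed across $p\in I$. Combining these gives $\Delta_x(s,c,q;p')\le\Delta_x(s,c,q;p)$ for $p<p'$ in $I$. The conclusion then follows from \Cref{prop:opt_gate}.
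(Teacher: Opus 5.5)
Your outline matches the paper's proof: compute $Q_x^{\AIf}$ exactly in the one-step region, bound how fast the pre-subscription continuation under \AIa\ can grow in $p$, then take differences. The step you flag as the ``main obstacle'' is left open, and it is open only because you never use \Cref{assm:one_step}(iii), which sets $\pi_f=\Pr(Y=1\mid x,q,\AIf)=1$. \Cref{assm:cs-p-nuanced} gives no sensitivity bound of the kind you anticipate. It only says that $p-\kappa_{\sub}$ rises in $p$, that $\tau$ is nondecreasing in $p$ and decreasing in $c$, and that pre-subscription primitives do not depend on $p$. So your general-$\pi_f$ inequality cannot be closed from the hypotheses. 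With $\pi_f=1$ the $(1-\pi_f)$ branch vanishes, and $Q_x^{\AIf}(s,c,q;p)=-\kappa+\beta(p-\kappa_{\sub})/(1-\beta)$ rises at exactly rate $\beta/(1-\beta)$. Both \AIa\ branches stay pre-subscription by (ii). So your bound $V_x(\cdot,\cdot,0;p')-V_x(\cdot,\cdot,0;p)\le\frac{\beta}{1-\beta}(p'-p)$, together with $\rho_x\in[0,1]$, gives $Q_x^{\AIa}(p')-Q_x^{\AIa}(p)\le\frac{\beta^2}{1-\beta}(p'-p)$. Hence $\Delta_x(p')-\Delta_x(p)\le-\beta(p'-p)$, which is exactly the paper's computation.

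The second problem is how you justify that bound. You use $r_{\sub}(p)\ge r_a$ and call it part of \Cref{assm:cs-cost}, but that assumption only imposes $\kappa_{\sub}\ge\kappa\ge0$. Nothing there ranks the subscription margin above ad-tier flows, and because unsubscribed users can churn you would also need $r_{\sub}(p)\ge 0$.

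This is not cosmetic. If $\tau$ did not depend on $p$, your coupling would need no such condition, because conversion times would coincide pathwise. But a higher $p$ raises $\tau$. That postpones the conversion that ad exposure eventually forces (since $\tau$ falls in $c$) and lets the platform collect more ad revenue first. So $V_x(\cdot,\cdot,0;p)$ can jump by far more than $\frac{\beta}{1-\beta}\Delta p$.

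Here is a concrete case. Take $\kappa=\kappa_{\sub}=0$, $\rho_x\equiv 1$, and certain engagement under both formats. Use one query type with $R=100$, $\delta_S\equiv 3$, $\delta_C\equiv 1$, and $\tau(x,p,c)=5+p-c$. Fix the decision point $(s,c)=(4.5,0)$ and $I=[1.4,1.6]$.
\begin{itemize}
\item \Cref{assm:one_step} holds on $I$. After \AIa\ the user sits at $(4.5,1)$, where $\tau=4+p>4.5$.
\item From $(4.5,1)$, a second ad keeps the user unsubscribed if and only if $p>1.5$.
\item Hence $V_x(4.5,1,0;p)$ equals $100+\beta p/(1-\beta)$ for $p\le 1.5$, and $100+100\beta+\beta^2 p/(1-\beta)$ for $p>1.5$.
\item So $\Delta_x$ jumps up by $\beta^2(100-1.5)$ at $p=1.5$. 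With $\beta=0.9$, $\Delta_x(1.4)\approx 188.7<\Delta_x(1.6)\approx 268.3$.
\end{itemize}
The statement therefore fails under the listed assumptions alone. The paper's proof passes over exactly this point when it asserts that, since conversion can only be delayed, the gain to \AIa\ is at most the margin $\Delta p$ from period $t+2$ on.

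Your coupling argument is the right way to make that step rigorous, but you must add the missing hypothesis explicitly. For example, require $r_{\sub}(p)\ge\max\{0,\,\sup_{q'}\{R(x,q')\Pr(Y=1\mid x,q',\AIa)\}-\kappa\}$ for $p\in I$, or require $\tau$ to be independent of $p$. With that hypothesis and (iii), your argument is complete.
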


These propositions share a common insight: AI economics has the sharpest implications for users who are close to subscribing. 
For these near-threshold users, showing an ad-free response can easily push them over the subscription threshold. 
In this region, a higher free-tier cost $\kappa$, a lower paid-tier cost $\kappa_{\sub}$, and a higher subscription price $p$ all make it more attractive to move the user into the paid tier by displaying \AIf. 
Away from the near-threshold region, price can also shift the conversion threshold $\tau(x,p)$ but the overall effect of $p$ can be more nuanced.

\subsection{Impact of Outside Competition}
\label{sec:dp-outside}

When outside alternatives strengthen, users engage less with displayed responses, reducing the generative engine's short-term ad revenue and shifting the long-term comparison toward \AIf.

\textbf{Competitive-pressure channel.}
In our framework (\Cref{sec:setup}), outside competition enters through the outside-option utility $v_0(x,q)$ in the per-query choice model, capturing factors such as advances in rival LLMs, ad-free policies by competitors \citep{ChmielewskiSeetharamanCherney2026AnthropicSuperBowlAds}, or the growth of alternative GEs \citep{Singh2025Perplexity780MillionQueries}.

To formalize how outcomes change as the outside option improves, we index a uniform increase in outside-option attractiveness by $\omega\ge 0$, shifting $v_0(x,q)\to v_0(x,q)+\omega$ while holding all other parameters fixed. The uniform shift is a tractable baseline; heterogeneous shifts across user--query pairs are a natural extension.
As outside competition intensifies, users disengage more from the generative engine, directly reducing \AIa's short-term ad revenue and shifting the long-term value advantage toward \AIf.
To capture this full effect, we allow the continuation values to depend on $\omega$ and impose a sufficient condition on the long-term value advantage of \AIf over \AIa. Under this condition, \AIa's total value edge shrinks as outside competition grows, shifting the optimal policy toward \AIf (see \Cref{assm:outside_comp} in \Cref{app:regularity} for details).

\begin{figure*}[t]
  \centering
  \includegraphics[width=\linewidth]{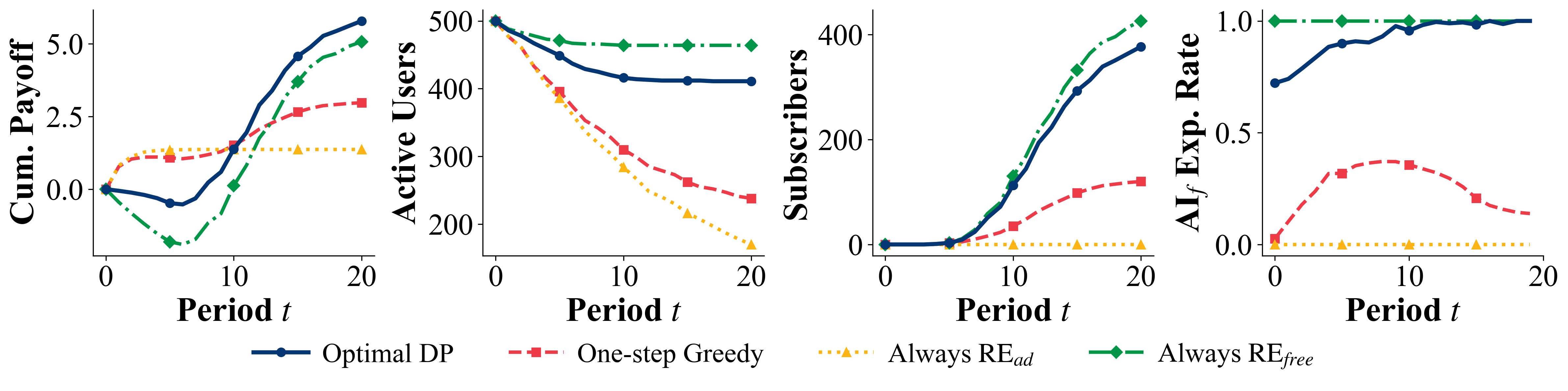}
  \vspace{-15pt}
  \caption{
Comparison of four policies: \textit{Optimal DP}, \textit{One-step greedy}, \textit{Always \AIa}, and \textit{Always \AIf}; $N=500$, $T=20$, $\beta=0.95$.
\emph{(1)} Discounted cumulative generative engine payoff up to $t$.
\emph{(2)} Total active users (non-subscribers plus subscribers).
\emph{(3)} Cumulative subscribers by $t$.
\emph{(4)} AI exposure rate: the share of \emph{active non-subscribers} shown \AIf\ at time $t$.
}
  \label{fig:experiment_time}
  \vspace{-15pt}
\end{figure*}

\begin{proposition}[Intensified competition shifts the policy toward \AIf]\label{prop:outside_competition}
Under \Cref{assm:regularity,assm:outside_comp}, for any fixed pre-subscription decision point $(s,c,q)$,
the \AIa's edge $\Delta_x(s,c,q;\omega)$ is weakly decreasing in outside-option intensity $\omega$.
Equivalently, stronger outside competition makes the optimal policy weakly more likely to select \AIf.
\end{proposition}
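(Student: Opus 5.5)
We would prove this by splitting the edge, as in \Cref{rem:tradeoff}, into a short-term and a long-term component, and showing that each is weakly decreasing in $\omega$. Writing $\pi_a(\omega):=\Pr(Y=1\mid x,q,a;\omega)$, the shift $v_0\to v_0+\omega$ gives
\[
\pi_a(\omega)=\frac{e^{v_a}}{e^{v_a}+e^{v_0+\omega}}.
\]
This is strictly decreasing in $\omega$ for each $a$, with derivative $-\pi_a(1-\pi_a)\le 0$. The immediate payoff gap is $r_{\AIa}-r_{\AIf}=R(x,q)\,\pi_{\AIa}(\omega)$, because the common inference cost $\kappa$ cancels. Since $R(x,q)\ge 0$ under \Cref{assm:regularity}, the short-term monetization edge is weakly decreasing in $\omega$.

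For the long-term term, we would write each expectation explicitly using the two-point distribution of $Y$:
\[
\mathbb E_{Y\sim\pi_a}[W_x(\cdot)]=\pi_a(\omega)\,W_x^{a,1}(\omega)+\bigl(1-\pi_a(\omega)\bigr)\,W_x^{a,0}(\omega).
\]
Here $W_x^{a,y}(\omega)$ is the retention-weighted continuation value at the post-update state $(S^+_{a,y},C^+_{a,y},Z^+_{a,y})$. Note that the no-engagement states coincide across actions: $S^+_{a,0}=s$ and $C^+_{a,0}=c$ for both $a$. So $W_x^{\AIa,0}=W_x^{\AIf,0}=:W^0(\omega)$, and the long-term edge becomes
\[
\beta\Big[\pi_{\AIa}(\omega)\bigl(W_x^{\AIa,1}-W^0\bigr)-\pi_{\AIf}(\omega)\bigl(W_x^{\AIf,1}-W^0\bigr)\Big].
\]
This difference depends on $\omega$ in two ways. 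It depends on the engagement probabilities directly, and it depends on the value function $V_x(\cdot;\omega)$, which is the fixed point of the $\omega$-indexed Bellman operator \eqref{eq:bellman_presub_compact}.

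The main obstacle is the second channel. A general comparative static for how $V_x$ moves with $\omega$ would require a monotone-comparative-statics argument on the Bellman operator, for example showing that the operator preserves a supermodularity or ordering of increments in $\omega$. That is delicate because $\omega$ enters through both actions. We would not attempt it. Instead we would invoke \Cref{assm:outside_comp}, which, as the paper describes, imposes a sufficient condition on the long-term advantage of \AIf\ over \AIa\ as a function of $\omega$. Concretely, we expect that assumption to state that the long-term value edge
\[
L(\omega):=\mathbb E_{\pi_{\AIa}}[W_x]-\mathbb E_{\pi_{\AIf}}[W_x]
\]
is weakly decreasing in $\omega$. A variant would state that its increase is dominated by the decline of $R\,\pi_{\AIa}(\omega)$. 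If the assumption is stated instead in terms of the increments $W^{a,1}-W^0$, we would verify the monotonicity of $L$ by differentiating the display above. The logit derivative $-\pi_a(1-\pi_a)$ multiplies the nonnegative engagement gain for \AIf. The \AIa\ engagement gain is nonpositive when ad exposure lowers retention, consistent with the monotonicity of $\rho_x$ in $s$ and $c$ from \Cref{assm:regularity}. The assumption then controls the residual dependence of $W$ on $\omega$.

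Combining the two parts, $\Delta_x(s,c,q;\omega)=R\,\pi_{\AIa}(\omega)+\beta L(\omega)$ is a sum of weakly decreasing functions of $\omega$, so it is weakly decreasing. Finally, by \Cref{prop:opt_gate}, \AIa\ is optimal if and only if $\Delta_x\ge 0$. Hence the set of $\omega$ at which \AIa\ is chosen is a lower interval, and increasing competition weakly shifts the decision toward \AIf.
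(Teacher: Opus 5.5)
Your proposal is correct and follows the paper's proof: write $\Delta_x(s,c,q;\omega)=R(x,q)\,p_{\AIa}(\omega)-\beta\,\Phi_x(s,c,q;\omega)$ with $\Phi_x=-L$, use the logit derivative for the first term, and use part (ii) of \Cref{assm:outside_comp} (which is exactly your guessed condition that $L$ is weakly decreasing) for the second. One small correction: $R(x,q)\ge 0$ comes from part (i) of \Cref{assm:outside_comp}, not from \Cref{assm:regularity}, and your detour through the engagement increments $W^{a,1}-W^0$ is unnecessary.
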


To understand the mechanism, recall the short-term vs.\ long-term decomposition of $\Delta_x$ from \Cref{rem:tradeoff}. Indexing both components by $\omega$, competition weakens \AIa through two channels: it directly reduces the share of users who engage with responses that include ads (shrinking the short-term ad-revenue component), and it widens \AIf's long-term value advantage (the long-term component), because ad-free experience accumulation becomes relatively more valuable when users are harder to retain. Both effects are formalized in \Cref{assm:outside_comp} and proved in \Cref{app:proof}.

\Cref{prop:outside_competition} tells us that competition shifts the \emph{overall} policy toward \AIf, but a natural follow-up question is: \emph{which} users and queries are affected first?
Combining \Cref{prop:outside_competition} with the type-threshold structure from \Cref{prop:type_order} --- under the additional requirement that \Cref{assm:type_mono} holds at each competition level $\omega$ (see \Cref{app:regularity}) --- we can trace how the cutoffs from \Cref{sec:dp-cutoff} move as competition intensifies:

\begin{proposition}[Competition shifts the type cutoffs toward \AIf]\label{prop:competition_type_shift}
Under \Cref{assm:regularity,assm:type_mono,assm:outside_comp}, with \Cref{assm:type_mono} holding for each fixed $\omega$, fix a pre-subscription state $(s,c,0)$ and index the type cutoffs from \Cref{prop:type_order} by $\omega$:
\begin{enumerate}[noitemsep, topsep=0pt]
\item \textbf{Experience-gain cutoff decreases.}
$\psi^*(s,c;\gamma,r,\omega)$ is weakly decreasing in $\omega$: under stronger competition, \AIf becomes optimal even for queries whose ad-free responses generate smaller experience gains.
\item \textbf{Ad-revenue cutoff increases.}
$r^*(s,c;\gamma,\psi,\omega)$ is weakly increasing in $\omega$: under stronger competition, showing ads is justified only when the query's immediate ad revenue is higher.
\item \textbf{Distributional consequence.}
For any fixed distribution $\mu$ over user--query types $(\gamma,r,\psi)$, the share of user--query pairs for which \AIa is optimal is weakly decreasing in $\omega$.
\end{enumerate}
\end{proposition}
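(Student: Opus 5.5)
The argument combines the pointwise monotonicity in $\omega$ from \Cref{prop:outside_competition} with the threshold representation in \Cref{prop:type_order}, applied separately at each competition level $\omega$. This is legitimate because \Cref{assm:type_mono} is assumed to hold for each fixed $\omega$. Fix $(s,c,0)$ and write $\Delta(t;\omega)$ for \AIa's edge at the user--query pair with type $t=(\gamma,r,\psi)$ under intensity $\omega$. By \Cref{prop:opt_gate}, \AIa is optimal iff $\Delta(t;\omega)\ge 0$. For each $\omega$, define the \AIa-region $\mathcal A(\omega):=\{t:\Delta(t;\omega)\ge 0\}$. The core step is the set inclusion $\mathcal A(\omega')\subseteq\mathcal A(\omega)$ for $\omega'\ge\omega$, which follows directly from \Cref{prop:outside_competition}: if $\Delta(t;\omega')\ge 0$ then $\Delta(t;\omega)\ge\Delta(t;\omega')\ge 0$.

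\textbf{Parts 1 and 2.} Holding $(\gamma,r)$ fixed, \Cref{prop:type_order} at level $\omega$ gives $\mathcal A(\omega)\cap\{(\gamma,r,\cdot)\}=\{\psi\le\psi^*(s,c;\gamma,r,\omega)\}$, a down-set in $\psi$. The inclusion above then reads $\{\psi\le\psi^*(\omega')\}\subseteq\{\psi\le\psi^*(\omega)\}$. Because these are nested lower rays in $[-\infty,\infty]$, this forces $\psi^*(\omega')\le\psi^*(\omega)$.

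One care point is non-uniqueness of the cutoff when the slice is empty or full, or when $\psi$ ranges over a restricted set. I would handle this by fixing the canonical choice $\psi^*(\omega):=\sup\{\psi:\Delta(\gamma,r,\psi;\omega)\ge 0\}$, with $\sup\emptyset=-\infty$. Monotonicity of $\psi^*$ is then immediate from monotonicity of the sets, without needing closedness at the boundary.

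The $r$-cutoff is symmetric. Here the \AIa-slice is an up-set $\{r\ge r^*\}$, and I would take $r^*(\omega):=\inf\{r:\Delta(\gamma,r,\psi;\omega)\ge 0\}$. Shrinking up-sets give $r^*(\omega')\ge r^*(\omega)$.

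\textbf{Part 3.} The share of pairs where \AIa is optimal is $\mu(\mathcal A(\omega))$. Set inclusion plus monotonicity of measure give $\mu(\mathcal A(\omega'))\le\mu(\mathcal A(\omega))$. The only technical requirement is that $\mathcal A(\omega)$ be measurable. I would get this from the measurability of $\Delta(\cdot;\omega)$, which is part of the regularity conditions in \Cref{assm:regularity} (the $Q$-functions are measurable in $(x,q)$). Because $\mu$ is fixed and does not depend on $\omega$, no further argument is needed.

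\textbf{Main obstacle.} The substantive content is already in \Cref{prop:outside_competition}, so the only delicate step is ensuring that the type-indexed edge $\Delta(t;\omega)$ is well defined as a function of $t$ alone, so that the threshold sets can be compared across $\omega$. In particular, the types must be held fixed while $\omega$ varies. The ad profitability $r=r_{\AIa}(x,q)$ itself depends on the engagement probability, which falls with $\omega$. I would resolve this by interpreting the cutoffs, as in the statement, as functions on the type space at each $\omega$, with the comparison made at fixed type coordinates. I would also note that the argument for parts 1--2 uses only the pointwise inequality of \Cref{prop:outside_competition} at each fixed type.
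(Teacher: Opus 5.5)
Your proposal is correct and follows the paper's proof essentially step for step. Both use the pointwise inequality from \Cref{prop:outside_competition} together with the canonical cutoffs $\psi^*(\omega)=\sup\{\psi:\widetilde\Delta^\omega\ge 0\}$ and $r^*(\omega)=\inf\{r:\widetilde\Delta^\omega\ge 0\}$, and for part 3 the inclusion $\mathcal A(\omega')\subseteq\mathcal A(\omega)$ plus monotonicity of $\mu$. The subtlety you flag is also glossed in the paper in exactly the same way: $r=r_{\AIa}(x,q)$ itself moves with $\omega$, so the $(x,q)$-pointwise inequality becomes a fixed-type inequality only if type coordinates are treated as $\omega$-invariant, which is what your interpretive resolution assumes.
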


In words, as competitors improve, the set of users and queries where showing ads is worthwhile shrinks: the generative engine can justify ads only for queries with high immediate revenue and low experience-gain potential, and the overall share of interactions served with ads decreases.

\begin{remark}[Subscription insulation]\label{rem:subscription_insulation}
Under the permanent-subscription benchmark, the subscribed-state value $V_x(s,c,1)=r_{\sub}(p)/(1-\beta)$ is independent of $\omega$, since subscribers consume \AIp directly without facing the outside option.
By contrast, the pre-subscription phase is fully exposed to competition through the per-query engagement channel.
This asymmetry suggests that competition selectively weakens free-tier value, helping explain why the generative engine's incentive to use \AIf as a bridge toward subscription grows stronger---particularly for users near the conversion threshold $\tau(x,p)$.
\end{remark}

\Cref{fig:competition_cutoff} illustrates this cutoff shift numerically: as outside competition strengthens from $\omega=0$ to $\omega=1$, the policy threshold $\gamma^*$ moves steadily to the right, shrinking the share of users served ads.

\textbf{Implication for competitive markets.}
These results indicate that ad-heavy strategies are most viable when outside competition is weak but become less sustainable as rival GEs improve.
The generative engine is pushed toward ad-free display to retain users who would otherwise switch, and toward subscription as a revenue source insulated from competition---consistent with the observed trend of generative engines investing in premium tiers as the market grows more competitive \citep{einav2025selling}.
This competitive-pressure effect complements the discount-factor result in \Cref{app:dp-horizon-beta}: both outside competition and forward-looking orientation independently push the policy toward \AIf, but through distinct mechanisms.

\section{Experiments}\label{sec:experiments}

\begin{figure*}[t]
  \centering
  \includegraphics[width=\linewidth]{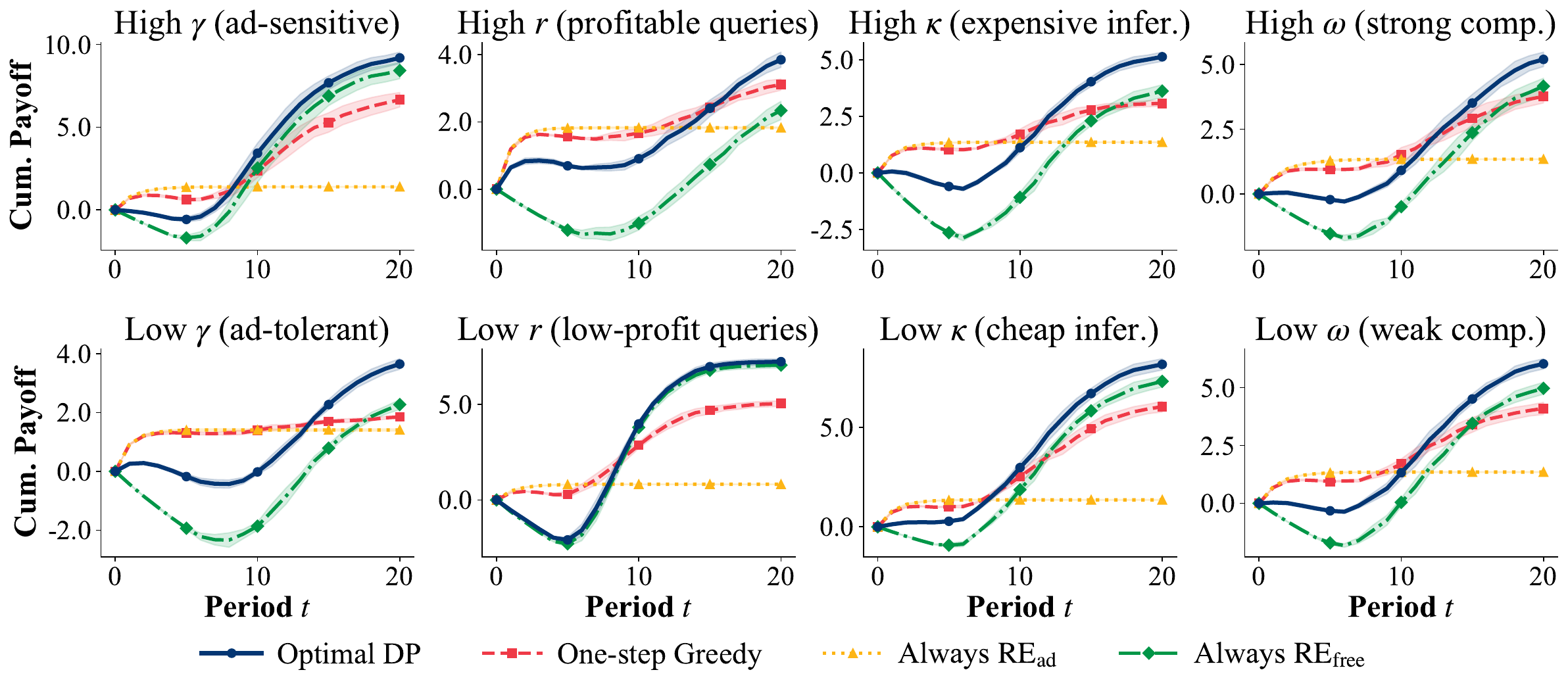}
  \vspace{-15pt}
  \caption{Cumulative payoff trajectories under eight market conditions.
  Columns correspond to the four market dimensions ($\gamma$, $r$, $\kappa$, $\omega$); rows compare high and low levels.
  Shaded bands show $\pm 1\sigma$ across five random seeds ($N{=}300$ each).}
  \label{fig:sensitivity_panel}
  \vspace{-10pt}
\end{figure*}

We conduct a simulation study to illustrate the dynamic trade-off between short-term monetization and long-term user development in design policy.
The analysis traces how different allocations between GRs with ads and ad-free GRs shape provider payoff, retention, subscription conversion, and AI exposure over time.

\subsection{Experimental Setup}\label{sec:exp_setup}

We simulate a GEs market with $N=500$ heterogeneous users over $T=20$ periods and discount factor $\beta=0.95$.
In each period, every active user issues a query; the generative engine chooses \AIa\ or \AIf, and the user probabilistically decides whether to engage, subscribe, and remain active in the next period.
Per-query engagement yields an immediate payoff and updates two states -- ad exposure $c$ and AI experience $s$ -- which govern future retention and subscription conversion.
We follow the model specification and calibration in \Cref{sec:setup,sec:dp-results}; full details are in \Cref{app:sim_details}.

\textbf{Monetization Design Policies.}
We compare four design policies deployed by the generative engine:
\begin{enumerate}[noitemsep, topsep=2pt, parsep=0pt, partopsep=0pt]
    \item \textit{Optimal DP}: computed from the Bellman equation in \Cref{sec:setup,sec:dp-results}.
    
    \item \textit{One-step greedy}: maximizes immediate expected payoff plus next-period value.
    
    \item \textit{Always \AIa}, which always displays GRs with ads.
    
    \item \textit{Always \AIf}, which always displays ad-free GRs.

\end{enumerate}

\subsection{Results}\label{sec:exp_results}

The simulations highlight three implications of the dynamic design problem: why myopic monetization underperforms, what mechanism drives the gap, and how the optimal policy balances the trade-off.

\textbf{Cumulative payoff.}
\Cref{fig:experiment_time} (panel 1) highlights the payoff trade-off: policies that over-monetize early leave long-term value on the table.
Policies that shift weight toward \AIa\ front-load revenue but weaken continuation value, whereas those that shift toward \AIf\ slow early payoff but support stronger downstream gains through retention and conversion.
At the extremes, \textit{Always \AIa} concentrates early revenue with limited continuation gains, while \textit{Always \AIf} bears the largest short-term cost but produces the strongest retention and subscription outcomes.
The \textit{Optimal DP} balances these forces by internalizing both the current-query payoff and the effect of today's design choice on future states.
By contrast, \textit{One-step greedy} remains too sensitive to short-term payoffs and therefore underweights these continuation benefits.

\textbf{Active users and subscriptions.}
\Cref{fig:experiment_time} (panels 2--3) reveal the mechanism behind the cumulative payoffs.
Policies that allocate more \AIf\ build AI experience $s$ faster.
This lifts both retention and subscription rates.
Policies that lean on \AIa\ slow that accumulation process and shrink the population of future active users.
This is why short-sighted over-monetization underperforms: it sacrifices the user base that generates future revenue.
At one extreme, \textit{Always \AIa} slows upgrading and accelerates churn; at the other, \textit{Always \AIf} maximizes conversion at the cost of short-term revenue.
The dynamic policies lie between these extremes by adjusting \AIf\ exposure over time rather than committing to a fixed display rule.

\textbf{AI exposure rate.}
We report the \AIf\ exposure rate, defined as the share of active non-subscribers shown \AIf\ by the policy.
\Cref{fig:experiment_time} (panel 4) captures how aggressively each policy allocates \AIf\ to active non-subscribers over time.
Sustained \AIf\ allocation supports retention and conversion over time (e.g., \textit{Optimal DP}, \textit{Always \AIf}), whereas short-term \AIf\ pushes do not sustain these gains to the same extent (e.g., \textit{One-step greedy}).
This contrast shows that sustained allocation of ad-free responses matters more than temporary spikes in \AIf\ exposure.

\textbf{Strategic implications for GEs.}
Together, these results show that monetization design is not a static choice but a state-dependent allocation rule that balances short-term margins against long-term engagement returns.
In practice, the policy depends on the economics of ad-free investment, the generative engine's weight on future outcomes, and the surrounding market environment.
Inference cost governs how expensive it is to sustain ad-free exposure, forward-looking orientation determines how much the generative engine values continuation gains, and market conditions shape the return to building retention and subscription demand (see \Cref{sec:dp-results}) \citep{DastinNellis2023AIBingBardSearchCosts,CaiSophia2025AlphabetCapex}.

\subsection{Sensitivity to Market Conditions}\label{sec:exp_ablation}

We test the robustness of the baseline findings through a paired sensitivity analysis.
For each of four market dimensions---ad sensitivity~($\gamma$), query profitability~($r$), inference cost~($\kappa$), and outside-option strength~($\omega$)---we simulate a ``high'' and ``low'' variant while holding all others at baseline values, yielding eight market conditions evaluated under all four policies.
Full specifications and numerical summaries are in \Cref{app:ablation_details}. \Cref{fig:sensitivity_panel} shows the cumulative payoff trajectories under the eight conditions, while Appendix Table~\Cref{tab:ablation_full} reports the detailed results.

Two economic patterns emerge from the sensitivity analysis.

\textbf{Pattern 1: The value of dynamic adaptation is robust across market conditions.}
Across all eight conditions, allowing the generative engine to adjust its display policy to the market environment yields cumulative payoff that is at least as high as under any fixed rule.
The difference is largest when a static rule is especially misaligned with the environment: Always~\AIa\ misses subscription-building gains, whereas Always~\AIf\ performs worse when free-tier inference is costly or when ad insertion is especially profitable.

\textbf{Pattern 2: Market conditions reshape the return to ad-free investment.}
Each market dimension shifts the short-term cost vs.\ long-term benefit trade-off.
High ad sensitivity strengthens the subscription channel, high inference cost raises the cost of ad-free display, and stronger outside competition compresses overall payoffs while making retention harder.
Query profitability is particularly informative: total payoff can be higher under low~$r$ than under high~$r$, because lower ad revenue reduces the opportunity cost of ad-free exposure and makes subscription-building relatively more attractive.
Across conditions, the dynamic policy adjusts its \AIf\ exposure accordingly, consistent with the comparative statics in \Cref{prop:type_order}: it shifts toward \AIf\ for ad-sensitive users and low-profit queries, and toward \AIa\ when immediate ad revenue is high.

Detailed trajectory plots for all four metrics (cumulative payoff, active users, subscribers, and AI exposure rate) under each condition are provided in \Cref{app:ablation_details}.

\section{Conclusion}
We develop a dynamic game-theoretic framework for GEs' monetization problems in which a generative engine chooses between advertising and subscription, accounting for how current choices shape future user engagement and subscription adoption.
Our analysis highlights the core trade-off: ad-free responses build AI experience, which supports retention and subscription conversion over time, while ad-heavy policies raise short-term revenue at the cost of weaker long-term value.
The optimal design shifts toward ad-free responses when the generative engine is more forward-looking, when subscription conversion becomes more valuable, and when outside competition weakens engagement-dependent ad revenue.
The appendix extends the analysis to a social-welfare objective and to learning the model parameters from logged data; see \Cref{sec:dp-welfare,sec:dp-learning}.
Future work could extend the monopoly setting to a multi-player game among competing generative engines and incorporate additional operational constraints such as capacity or latency.
As GEs scale, these results clarify when ad-free responses create long-term value and when ad-heavy policies mainly trade future retention and conversion for short-term revenue.



\bibliography{example_paper}
\bibliographystyle{icml2025}

\newpage
\appendix
\onecolumn

This appendix collects all extensions, regularity conditions, proofs, and supplementary experiments.
\Cref{app:framework_notes} extends the framework setup (probabilistic adoption, post-subscription churn, and query-dependent costs).
\Cref{app:theory_notes} states regularity conditions, collects all proofs, and develops two extensions: a social-welfare objective (\Cref{sec:dp-welfare}) and learning DP primitives from data (\Cref{sec:dp-learning}).
\Cref{app:experiment_notes} provides additional experimental details.

\section{Notes for Framework Setups}\label[appsec]{app:framework_notes}

\subsection{Problem Setup and Preliminaries}

\begin{remark}[Beyond i.i.d.\ queries]
We assume $q_t \sim \mathcal P(\cdot\mid x)$ i.i.d.\ only to keep the pre-subscription state minimal.
All subsequent policy comparisons are pointwise in the realized query $q$.
More generally, if queries follow any stationary exogenous process -- for example, $q_{t+1}\sim \mathcal P(\cdot\mid x,q_t)$ -- the DP remains valid after augmenting the state with the relevant query-process variables (e.g., the current $q_t$ or a latent topic state), with no change to the per-query policy differential.
\end{remark}

\subsubsection{Probabilistic Adoption Hazard}\label[appsec]{app:prob_adoption}

This appendix relaxes the deterministic subscription threshold. After the post-query state update to $(s^+,c^+)$ under action $a\in\{\AIa,\AIf\}$, the user adopts the subscription according to a Bernoulli hazard
\begin{equation}
    Z_{t+1} \,\sim\, \mathrm{Bernoulli}\big(\pi(x,p,s^+,c^+)\big),
\end{equation}
where $\pi(x,p,s,c)$ is weakly increasing in $s$, weakly decreasing in $c$, and takes values in $[0,1]$.
The deterministic threshold rule is nested as the special case $\pi(x,p,s,c)=\mathbf 1\{s\ge \tau(x,p,c)\}$.

\textbf{Adoption-weighted continuation value.}
Define the adoption-weighted continuation value
\begin{equation}
    \widetilde V_x(s,c)
    \;:=\;
    \pi(x,p,s)\,V_x(s,c,1) \;+\; \big(1-\pi(x,p,s)\big)\,V_x(s,c,0).
\end{equation}
Then the pre-subscription Bellman operator becomes
\begin{equation}
\label{eq:bellman_prob_adopt}
    V_x(s,c,0)
    \;=\;
    \mathbb E\bigg[
        \max_{a\in\{\AIa,\AIf\}}
        \Big\{
            r_a(x,q)
            + \beta\,\rho_{z_a^+(q)}(x,s_a^+(q),c_a^+(q))\,
            \widetilde V_x\big(s_a^+(q),c_a^+(q)\big)
        \Big\}
    \bigg],
\end{equation}
where the expectation is taken over $q\sim \mathcal P(\cdot\mid x)$ (and any within-query randomness, if present).

\textbf{Implication for structural results.}
Equation~\eqref{eq:bellman_prob_adopt} shows that all subsequent policy comparisons continue to take the same ``flow vs.\ discounted continuation'' form; relative to the deterministic threshold, the only change is the replacement of $\mathbf 1\{\cdot\}$ by $\pi(\cdot)$ inside $\widetilde V_x$.
In particular, under the same regularity conditions used in the main text (monotonicity of $V_x(\cdot,\cdot,z)$ and $\rho_{z_a^+(q)}$ in $(s,c)$), the single-crossing / threshold characterizations of the optimal design policy extend directly.

\subsubsection{Experience-Dependent Churn After Subscription}\label[appsec]{app:churn}

We relax the absorbing-subscription benchmark by allowing churn that depends on the experience state.
When $Z_t=1$, the user remains subscribed into period $t+1$ with probability
\begin{equation}
    \rho_1(x,s,c)\in(0,1],
\end{equation}
and otherwise churns back to the pre-subscription regime ($Z_{t+1}=0$) with probability $1-\rho_1(x,s,c)$.
We assume $\rho_1(x,s,c)$ is weakly increasing in $s$ and weakly decreasing in $c$.

\textbf{Subscribed-state Bellman equation.}
Let the platform serve the subscribed experience (e.g., $\AIf^{\text{sub}}$) when $Z_t=1$, yielding flow payoff $r_{\text{sub}}(q,x)$ and post-query updates $(s^+_{\text{sub}}(q),c^+_{\text{sub}}(q))$.
Then the subscribed-state value satisfies
\begin{equation}
\label{eq:bellman_churn}
    V_x(s,c,1)
    \;=\;
    \mathbb E\bigg[
        r_{\text{sub}}(q,x)
        + \beta\Big(
            \rho_1\!\big(x,s^+,c^+\big)\,V_x(s^+,c^+,1)
            + \big(1-\rho_1\!\big(x,s^+,c^+\big)\big)\,V_x(s^+,c^+,0)
        \Big)
    \bigg],
\end{equation}
where $(s^+,c^+)=(s^+_{\text{sub}}(q),c^+_{\text{sub}}(q))$ and the expectation is over $q\sim\mathcal P(\cdot\mid x)$.

\textbf{Pre-subscription Bellman equation.}
The pre-subscription Bellman equation remains identical to the main text after replacing the continuation term $V_x(\cdot,\cdot,1)$ by the churn-adjusted subscribed value defined in \eqref{eq:bellman_churn}.

\begin{lemma}[Monotonicity of subscribed value under churn]\label{lem:churn_monotone}
Suppose $\rho_1(x,s,c)$ is weakly increasing in $s$ and weakly decreasing in $c$.
If $r_{\text{sub}}(q,x)$ and $(s^+_{\text{sub}}(q),c^+_{\text{sub}}(q))$ do not reverse these orders (e.g., $s^+_{\text{sub}}$ is nondecreasing in $s$ and $c^+_{\text{sub}}$ is nondecreasing in $c$), then $V_x(s,c,1)$ is weakly increasing in $s$ and weakly decreasing in $c$.
\end{lemma}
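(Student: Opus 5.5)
The plan is to prove \Cref{lem:churn_monotone} by a monotone-preserving contraction argument on the coupled Bellman system \eqref{eq:bellman_churn} together with the pre-subscription equation \eqref{eq:bellman_presub_compact}. Let $\mathcal M$ denote the set of bounded functions $V(\cdot,\cdot,z)$, $z\in\{0,1\}$, that are weakly increasing in $s$ and weakly decreasing in $c$. Let $T$ be the joint Bellman operator whose $z=1$ component is the right-hand side of \eqref{eq:bellman_churn} and whose $z=0$ component is the pre-subscription operator. Since payoffs are bounded and $\beta<1$, $T$ is a $\beta$-contraction in sup norm. $\mathcal M$ is closed under uniform limits, because pointwise inequalities survive limits. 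So it suffices to show $T(\mathcal M)\subseteq \mathcal M$. Then the unique fixed point $V_x=\lim_n T^n V^{(0)}$, started from $V^{(0)}\equiv 0\in\mathcal M$, lies in $\mathcal M$. In particular $V_x(\cdot,\cdot,1)$ has the claimed monotonicity.

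\textbf{Step 1: the subscribed component.} Fix $V\in\mathcal M$ and take $s\le s'$ (the $c$ direction is symmetric). By hypothesis, $s^+_{\text{sub}}(q)$ is nondecreasing in $s$, $c^+_{\text{sub}}$ does not depend adversely on $s$, and $r_{\text{sub}}(q,x)$ does not vary with $(s,c)$. The continuation term is the mixture $\rho_1 V(\cdot,1)+(1-\rho_1)V(\cdot,0)$. Rewrite it as
\[
V(s^+,c^+,0)+\rho_1(x,s^+,c^+)\big(V(s^+,c^+,1)-V(s^+,c^+,0)\big).
\]
Each piece must then be monotone. $V(\cdot,0)$ is monotone because $V\in\mathcal M$, and $\rho_1$ is monotone by assumption. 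The product is monotone provided the gap $D:=V(\cdot,1)-V(\cdot,0)$ is nonnegative and monotone in the same direction. If only $D\ge 0$ is available, I would instead compare directly: raising the state raises both $V(\cdot,1)$ and $V(\cdot,0)$, and shifts weight toward the larger of the two, $V(\cdot,1)$, since $\rho_1$ increases. Each effect weakly increases the mixture. So the key needed fact is only $V(\cdot,1)\ge V(\cdot,0)$ pointwise, together with the monotonicity of both components.

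\textbf{Step 2: the pre-subscription component.} For $z=0$, I invoke the regularity monotonicity of $\rho_x$ in (\Cref{assm:regularity}) and the order-preserving updates $S^+,C^+$. The conversion indicator $Z^+=\mathbf 1\{S^+\ge\tau(x,p,C^+)\}$ is increasing in $s$ and, since $\tau$ is decreasing in $c$, carries the appropriate order. Combining these with the same mixture argument as in Step 1, each $Q^a$ is monotone, and a pointwise maximum and an expectation over $q$ preserve monotonicity.

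\textbf{Main obstacle.} The hard part is the ordering $V(\cdot,1)\ge V(\cdot,0)$, which is what makes the mixture monotone. I would try to preserve it as part of the invariant set: add to $\mathcal M$ the requirement $V(\cdot,1)\ge V(\cdot,0)$, and verify that $T$ preserves it. This would follow from a standing assumption that the subscribed flow $r_{\text{sub}}$ dominates the free-tier flow $\max_a r_a$ and that subscribed retention dominates free-tier retention. If the paper's regularity assumptions do not give this, I would state it as the implicit ``do not reverse these orders'' condition. An alternative is to handle the $c$ direction via the dual decomposition $V(\cdot,1)-(1-\rho_1)D$, which needs the same sign condition.
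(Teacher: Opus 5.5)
The paper states this lemma without proof, so there is no reference argument to match. On its own terms your invariant-set contraction is the natural route. Step 1 is sound once $V(\cdot,\cdot,0)$ is monotone and $V(\cdot,\cdot,1)\ge V(\cdot,\cdot,0)$, and you are right that this ordering must be assumed.

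\textbf{The gap: Step 2 in the $c$ direction.} You claim $Z^+=\mathbf 1\{S^+\ge\tau(x,p,C^+)\}$ ``carries the appropriate order'' because $\tau$ is decreasing in $c$, but the sign is reversed. For $c\le c'$ your order-preserving updates give $C^+\le C'^+$, hence $\tau(x,p,C'^+)\le\tau(x,p,C^+)$. So $Z^+$ is weakly \emph{increasing} in $c$; this is the paper's ``ad exposure accelerates subscription.''

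Now combine this with the ordering you add to $\mathcal M$, namely $V(\cdot,\cdot,1)\ge V(\cdot,\cdot,0)$ with $\rho_x(\cdot,\cdot,1)=1$. A larger $c$ can switch the continuation from $\rho_x(S^+,C^+,0)V(S^+,C^+,0)$ to $V(S^+,C'^+,1)$ and thereby raise $Q_x^a$. So $T$ does not map $\mathcal M$ into $\mathcal M$: the two requirements in $\mathcal M$ pull against each other through the conversion channel.

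This is not only a defect of the technique, because churn feeds $V_x(\cdot,\cdot,0)$ back into the subscribed value, so the statement itself fails. Take constant $\rho_1<1$, constant $r_{\sub}$, $c^+_{\sub}=c$ and $s^+_{\sub}\equiv 0$, all allowed by the hypotheses. Then \eqref{eq:bellman_churn} gives $V_x(s,c,1)=\big(r_{\sub}+\beta(1-\rho_1)V_x(0,c,0)\big)/(1-\beta\rho_1)$. Add the following choices:
\begin{itemize}
\item $\delta_S\equiv\delta_C\equiv 1$;
\item $\tau(x,p,c)=2$ for $c<1$ and $\tau(x,p,c)=1$ for $c\ge 1$;
\item certain engagement, $R\equiv 0$, $\rho_x(\cdot,\cdot,0)\equiv 1$, and $r_{\sub}$ large.
\end{itemize}
One then computes $V_x(0,1,0)-V_x(0,0,0)=\beta\kappa+\beta(1-\beta)V_x(0,1,1)>0$. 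Hence $V_x(0,1,1)>V_x(0,0,1)$, contradicting the conclusion. Certain engagement can be approximated by logit probabilities near one, so the example survives within the model.

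\textbf{What is needed instead.} The $c$-half requires a hypothesis the statement does not supply. Possible choices:
\begin{itemize}
\item $\tau$ independent of $c$, so that $Z^+$ does not move with $c$; together with the conditions listed below, your argument can then close.
\item A condition that the retention cost of ad exposure dominates its conversion benefit.
\item A conditional version that assumes $V_x(\cdot,\cdot,0)$ is monotone with $V_x(\cdot,\cdot,1)\ge V_x(\cdot,\cdot,0)$. The subscribed equation alone then yields the result, e.g.\ by coupling subscribed paths and churn times.
\end{itemize}

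\textbf{Smaller holes in Step 2.}
\begin{itemize}
\item The pre-subscription continuation is $\rho_x V$, a mixture of $V$ with the exit value $0$. Your mixture argument there therefore needs $V\ge 0$. This is not automatic, since $r_{\AIf}=-\kappa<0$ and $r_{\sub}$ may be negative.
\item Monotonicity of $\rho_x$ appears only informally in \Cref{sec:dynamics}, not in \Cref{assm:regularity}.
\item Order preservation of $s\mapsto s+\delta_S(x,q,s)$ and $c\mapsto c+\delta_C(x,q,c)$ is assumed nowhere in the paper.
\item $r_{\sub}\ge\max_a r_a$ does not by itself give $V(\cdot,\cdot,1)\ge V(\cdot,\cdot,0)$. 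Engaging with \AIf\ moves $s$ and may trigger conversion, while the subscribed update need not, so the two continuation values are evaluated at different post-update states.
\end{itemize}
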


\textbf{Implication for the optimal design policy.}
Lemma~\ref{lem:churn_monotone} ensures that the continuation-value channel that drives the optimal design policy (via the prospect of reaching and remaining in $Z=1$) preserves the same comparative statics as in the no-churn benchmark.
Consequently, all policy characterizations in the main text (e.g., single-crossing / threshold structures in $s$ conditional on $(q,c)$) extend under the same sufficient conditions, with the only modification that the subscribed continuation value now solves \eqref{eq:bellman_churn}.

\subsection{Context- and Query-Dependent Inference Costs}
\label{app:kappa_xq}

We now allow the inference cost to depend on user context and query.
Fix a measurable function $\kappa:\mathcal{X}\times\mathcal{Q}\to\mathbb{R}_+$ and replace the constant cost in the pre-subscription flow payoffs by $-\kappa(x,q)$.
All other primitives (choice probabilities, state updates, retention, and conversion) are unchanged.

\begin{assumption}[Contextual inference cost enters symmetrically]\label{assm:kappa_xq}
In the pre-subscription regime ($z=0$), the per-period payoffs take the form
\[
r_{\AIa}^{\kappa}(x,q,s,c)=\bar r_{\AIa}(x,q,s,c)-\kappa(x,q),
\qquad
r_{\AIf}^{\kappa}(x,q,s,c)=\bar r_{\AIf}(x,q,s,c)-\kappa(x,q),
\]
for some baseline payoffs $(\bar r_{\AIa},\bar r_{\AIf})$ that do not depend on $\kappa$.
All transition primitives (including the within-query choice model and cross-period state updates) are independent of $\kappa$.
\end{assumption}

\begin{remark}[Direct versus continuation effects of $\kappa(x,q)$]\label{rem:kappa_xq_direct}
Under \Cref{assm:kappa_xq}, the common current-period cost term $-\kappa(x,q)$ enters both action values symmetrically, so it cancels in the immediate payoff comparison. Any effect of $\kappa(x,q)$ on the optimal design policy must therefore operate through continuation values, because future pre-subscription periods also incur the cost and the two actions can lead to different future states.
\end{remark}

\begin{proposition}[Value monotonicity under pointwise cost increases]\label{prop:kappa_xq_value}
Under \Cref{assm:kappa_xq} and \Cref{app:regularity}, let $\kappa_1,\kappa_2$ be two cost functions such that
$\kappa_2(x,q)\ge \kappa_1(x,q)$ for all $(x,q)$.
Let $V_x^{\kappa_i}$ denote the optimal value function under $\kappa_i$.
Then
\[
V_x^{\kappa_2}(s,c,z)\le V_x^{\kappa_1}(s,c,z)
\qquad \text{for all }(s,c,z).
\]
\end{proposition}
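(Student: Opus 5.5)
The plan is to compare value functions by a policy-by-policy domination argument, and then pass to the supremum. A Bellman-operator monotonicity argument gives the same result and will serve as a cross-check.

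\textbf{Step 1: fix a policy.} Fix $x$ and an arbitrary stationary Markov policy $g_x\in\mathcal G_x$. By \Cref{assm:kappa_xq}, all transition primitives are independent of $\kappa$. These are the logit choice probabilities \eqref{eq:within_query_logit}, the updates $\delta_S,\delta_C$, the conversion threshold $\tau$, and the retention $\rho_x$. Consequently, the law of the process $(q_t,A_t,Y_t,S_t,C_t,Z_t)_{t\ge0}$ under $g_x$, started from $(s,c,z)$, is the same under $\kappa_1$ and under $\kappa_2$. This can be made precise by constructing both processes on a common probability space with the same driving randomness.

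\textbf{Step 2: compare payoffs pathwise.} On every realized path, the per-period payoffs satisfy
\[
r^{\kappa_2}_{A_t}(x,q_t)=r^{\kappa_1}_{A_t}(x,q_t)-\big(\kappa_2(x,q_t)-\kappa_1(x,q_t)\big)\le r^{\kappa_1}_{A_t}(x,q_t)
\]
whenever $Z_t=0$. Whenever $Z_t=1$ the payoffs are identical, since $r_{\sub}(p)$ does not involve $\kappa(x,q)$. Boundedness (\Cref{app:regularity}) and $\beta<1$ make the discounted sums integrable. Taking expectations therefore gives
\[
J^{\kappa_2}_{g_x}(s,c,z)\le J^{\kappa_1}_{g_x}(s,c,z)\le V^{\kappa_1}_x(s,c,z),
\]
where $J_{g_x}$ denotes the expected discounted payoff of policy $g_x$.

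\textbf{Step 3: take the supremum.} The bound in Step 2 holds for every $g_x$, so taking the supremum over $g_x\in\mathcal G_x$ on the left yields $V^{\kappa_2}_x\le V^{\kappa_1}_x$, which is the claim.

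\textbf{Cross-check via Bellman operators.} Let $T_i$ denote the Bellman operator \eqref{eq:bellman_presub_compact} with cost $\kappa_i$; for $z=1$ the operator is the trivial one. For any bounded $V$ we have $T_2V\le T_1V$ pointwise, because only the flow term changes and $\max$ and expectation are monotone. Each $T_i$ is also monotone, since the continuation weights $\rho_x\ge0$. Iterating gives $T_2^nV\le T_1^nV$ for all $n$, and the contraction argument sends both sides to their fixed points $V^{\kappa_2}_x\le V^{\kappa_1}_x$.

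\textbf{Main obstacle.} The only delicate point is Step 1, namely that a policy's induced dynamics are genuinely $\kappa$-free. This requires that nothing in the state updates, the retention, or the subscription value depends on $\kappa$. \Cref{assm:kappa_xq} guarantees exactly this for the pre-subscription regime. I would also check explicitly that $V_x(s,c,1)=r_{\sub}(p)/(1-\beta)$ is unaffected, or at least weakly decreasing if $\kappa_{\sub}$ were tied to $\kappa$; in the latter case the same inequality still goes through.
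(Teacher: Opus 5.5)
Your proposal is correct, and its primary route differs from the paper's. Your cross-check is essentially the paper's own proof. The paper shows $\mathcal T_x^{\kappa_2}V\le \mathcal T_x^{\kappa_1}V$ pointwise, with equality at $z=1$, and then cites contraction plus operator monotonicity to order the fixed points. Your induction $T_2^nV=T_2T_2^{n-1}V\le T_2T_1^{n-1}V\le T_1^nV$ spells out the step the paper only asserts, and you correctly note that monotonicity of $T_i$ rests on $\rho_x\ge 0$ and nonnegative choice weights.

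Your main argument instead couples the two problems policy by policy and works directly from the definition \eqref{eq:value_def} of $V_x$ as a supremum. This buys three things. It needs neither the contraction nor the identification of the supremum with the Bellman fixed point. It proves the stronger fact that every fixed policy earns weakly less under $\kappa_2$. It would also carry over unchanged to history-dependent policies.

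The cost is the coupling step, where the coupled process must also track active versus churned status. The variables $(q_t,A_t,Y_t,S_t,C_t,Z_t)$ you list do not include it. Since $\rho_x$ does not depend on $\kappa$ and payoffs after churn are zero under both costs, the pathwise comparison is unaffected.

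The paper's operator argument is shorter given the machinery already built in \Cref{prop:opt_gate}, and it lands directly on the fixed point $V_x$ that \Cref{prop:cs-cost-local} later uses. Your route lands on the supremum, so applying it there relies on the standard (and in the paper, asserted) equality of the two objects.
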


\begin{proof}
Let $\mathcal{T}_x^{\kappa}$ be the Bellman operator under cost function $\kappa$.
Under \Cref{assm:kappa_xq}, for any bounded $V$ and any pre-subscription state $(s,c,0)$,
each feasible action-value is shifted down by $\kappa(x,q)$ for the realized query $q$.
Hence the pointwise maximand is shifted down by the same amount, and after integrating over $q$ we have
\[
\mathcal{T}_x^{\kappa_2}V(s,c,0)\le \mathcal{T}_x^{\kappa_1}V(s,c,0),
\]
with equality for subscribed states $z=1$ (where no free-tier inference cost is incurred).
By the contraction property from \Cref{app:regularity}, $\mathcal{T}_x^{\kappa}$ has a unique fixed point $V_x^{\kappa}$, and monotonicity of the operator implies the fixed points satisfy
$V_x^{\kappa_2}\le V_x^{\kappa_1}$ pointwise.
\end{proof}

\begin{remark}[Uniform cost shifts]\label{rem:kappa_xq_uniform}
A common comparative-statics experiment corresponds to the one-parameter family
$\kappa_{\Delta}(x,q):=\kappa(x,q)+\Delta$.
\Cref{prop:kappa_xq_value} implies $V_x^{\kappa_{\Delta}}$ is weakly decreasing in $\Delta$ pointwise.
Whether such uniform shifts also move the optimal design policy depends on how the two actions change the expected time spent in the pre-subscription regime.
\end{remark}

\section{Notes for Theoretical Results}\label{app:theory_notes}
\subsection{Standing Regularity Conditions and Assumptions}\label[appsec]{app:regularity}

\begin{assumption}[Standing regularity conditions]\label{assm:regularity}
Fix a user context $x$ and price $p$, and let $\beta\in(0,1)$.
Throughout, queries satisfy $q\sim \mathcal P(\cdot\mid x)$ on a measurable space $(\mathcal Q,\mathcal F)$.

\begin{enumerate}
    \item \textbf{Measurability.}
    All primitives are measurable with respect to the underlying Borel $\sigma$-algebras: 
    $r_{a}(x,q)$ is measurable in $q$, $\delta_S(x,q,s)$ and $\delta_C(x,q,c)$ are measurable in $(q,s)$ and $(q,c)$ , 
    and $\rho_{x}(s,c,z)$ is measurable in $(s,c)$ for $z \in \{0,1\}$. Consequently, for each action $a\in\{\AIa,\AIf\}$, the post-decision map $(s,c,q)\mapsto (S^+_{a,Y},C^+_{a,Y},Z^+_{a,Y})$ is measurable. $\Pr(Y_t=1\mid x,q,a)$ is measurable.

    \item \textbf{Bounded one-period payoffs.}
    There exists $M<\infty$ such that
    \[
    \sup_{q\in\mathcal Q,\,c\ge 0}\big|r_{\AIa}(x,q)\big|\le M,
    \qquad 0\le \kappa<\infty,
    \qquad |r_{\sub}(p)|<\infty,
    \]
    and we set $r_{\AIf}(x,q)\equiv -\kappa$.

    \item \textbf{Well-defined retention.}
    The pre-subscription retention probability satisfies
    \[
    \rho_{x}(s,c,z)\in[0,1]\qquad\text{for all } s\ge 0,\ c\ge 0.
    \]

    \item \textbf{Nonnegative state increments.}
    The experience increments are nonnegative and finite:
    \[
    \delta_S(x,q,s)\ge 0,\qquad \delta_C(x,q,c)\ge 0 \qquad \text{for all } q\in\mathcal Q.
    \]

    \item \textbf{Immediate adoption convention (if used).}
    The subscription status at the start of a period is deterministic:
    $z=1$ if and only if $s\ge \tau(x,p,c)$, and no exit once subscribed.  Hence, in the pre-subscription regime $z=0$ we restrict attention to states with $s<\tau(x,p,c)$.
\end{enumerate}
\end{assumption}

\begin{assumption}[Type sufficiency and monotone type effects]\label{assm:type_mono}
Fix a pre-subscription state $(s,c,0)$. 
Let $Q_x^{a}(s,c,q)$ be the action value for
$a\in\{\AIa,\AIf\}$ as defined in Proposition~\ref{prop:opt_gate}. Define the AI advantage
\[
D_x(s,c,q):=Q_x^{\AIf}(s,c,q)-Q_x^{\AIa}(s,c,q).
\]
Note that $D_x=-\Delta_x$, where $\Delta_x$ is the \AIa's edge defined in \Cref{prop:opt_gate}.

Assume there exists a function $\widetilde D$ such that for all $(x,q)$,
\[
D_x(s,c,q)=\widetilde D\!\big(s,c;\gamma(x),r_{\AIa}(x,q),\psi(x,q)\big).
\]
Moreover, for each fixed $(s,c)$, the map $(\gamma,r,\psi)\mapsto \widetilde D(s,c;\gamma,r,\psi)$
is weakly increasing in $\gamma$ and $\psi$, and weakly decreasing in $r$.
Moreover, for each fixed $(s,c,\gamma,r)$, the function
$\psi\mapsto \widetilde D(s,c;\gamma,r,\psi)$ is left-continuous in $\psi$,
and for each fixed $(s,c,\gamma,\psi)$, the function
$r\mapsto \widetilde D(s,c;\gamma,r,\psi)$ is right-continuous in $r$.

When this assumption is invoked jointly with \Cref{assm:outside_comp}, we require that the above conditions hold for the $\omega$-indexed action values $Q_x^a(s,c,q;\omega)$ and the corresponding edge $D_x(s,c,q;\omega)=\widetilde D^\omega(s,c;\gamma,r,\psi)$ at each fixed $\omega\ge 0$.
\end{assumption}

\begin{assumption}[Horizon monotonicity and continuation dominance]
\label{assm:horizon_beta}
Fix $(x,p)$ and consider the family of problems indexed by $\beta\in(0,1)$.
Let $V_x(\cdot;\beta)$ denote the unique bounded value function solving the Bellman equation at discount
factor $\beta$, and let $Q_x^{a}(s,c,q;\beta)$ be the associated action value for
$a\in\{\AIa,\AIf\}$.

Define the traffic-weighted continuation value under action $a$ by
\[
W_x^{a}(s,c,q;\beta)
:=
\mathbb E_{Y\sim \Pr(\cdot\mid x,q,a)}\!\Big[
\rho_{Z_{a,Y}^+}\!\big(x,S_{a,Y}^+,C_{a,Y}^+\big)\,
V_x\!\big(S_{a,Y}^+,C_{a,Y}^+,Z_{a,Y}^+;\beta\big)
\Big],
\]
and define the continuation advantage of $\AIf$ over $\AIa$ as
\[
\Phi_x(s,c,q;\beta):=W_x^{\AIf}(s,c,q;\beta)-W_x^{\AIa}(s,c,q;\beta).
\]
Assume:
\begin{enumerate}
\item \textbf{Uniform continuation dominance of $\AIf$.}
For all $(s,c,q)$ and all $\beta\in(0,1)$,
\[
\Phi_x(s,c,q;\beta)\ge 0.
\]
\item \textbf{Continuation advantage monotonicity in $\beta$.}
For any $\beta'>\beta$ and all $(s,c,q)$,
\[
\Phi_x(s,c,q;\beta')\ge \Phi_x(s,c,q;\beta).
\]
\end{enumerate}
\end{assumption}

\begin{assumption}[Common inference cost and subscription-tier margin]\label{assm:cs-cost}
In the pre-subscription regime ($z=0$), both display options generate an AI answer and incur the same
per-query inference cost $\kappa\ge 0$. Thus the flow payoffs are
\[
r_{\AIa}(x,q)-\kappa \quad\text{under }\AIa,
\qquad
-\kappa \quad\text{under }\AIf,
\]
where $r_{\AIa}(x,q)$ depends only on $(x,q)$.
After subscription ($z=1$), subscription is absorbing and the provider earns per-period net payoff
\[
r_{\sub}(p):=p-\kappa_{\sub},
\]
with $\kappa_{\sub}\ge \kappa\ge 0$.
All other primitives $\big(\rho,\delta_S,\delta_C,\tau\big)$ are independent of $(\kappa,\kappa_{\sub})$.
We break ties in the design policy in favor of $\AIa$.
\end{assumption}

\begin{assumption}[One-step conversion region]\label{assm:one_step}
Fix $(x,p)$ and a pre-subscription decision point $(s,c,q)$ with $s<\tau(x,p,c)$.
Let $u:=\delta_S(x,q,s)$.
Assume:
\begin{enumerate}[noitemsep, topsep=0pt]
\item \textbf{One-step conversion under $\AIf$.} $s+u\ge \tau(x,p,c)$.
\item \textbf{No one-step conversion under $\AIa$.} Choosing $\AIa$ does not increase $s$ in one step, so the next-period subscription status remains $z=0$.
\item \textbf{Engagement under $\AIf$ at this decision point.}
$\Pr(Y=1\mid x,q,\AIf)=1$.

\end{enumerate}
\end{assumption}

\begin{assumption}[Price affects subscription margin and (possibly) the conversion threshold]\label{assm:cs-p-nuanced}
Fix $x$ and hold $(\kappa,\kappa_{\sub})$ fixed.
The subscription margin $r_{\sub}(p)=p-\kappa_{\sub}$ is weakly increasing in $p$.
The conversion threshold $\tau(x,p,c)$ is weakly nondecreasing in $p$ and weakly decreasing in $c$.
All pre-subscription primitives $\big(r_{\AIa},\rho,\delta_S,\delta_C\big)$ are independent of $p$.
\end{assumption}

\begin{assumption}[Outside-competition comparative statics]\label{assm:outside_comp}
Fix $(x,p)$ and index outside competition by $\omega\ge 0$.
The within-query outside-option utility is shifted by
\[
v_0(x,q;\omega)=v_0(x,q)+\omega,
\]
while $v_a(x,q)$ for $a\in\{\AIa,\AIf\}$ and all cross-period primitives
$\big(\rho,\delta_S,\delta_C,\tau\big)$ are independent of $\omega$.

Let $V_x^\omega$ denote the optimal value function under outside-option intensity $\omega$, and define
\[
W_x^\omega(s,c,z):=\rho_x(s,c,z)V_x^\omega(s,c,z).
\]

For a fixed decision point $(s,c,q)$, define
\[
p_a(\omega):=\Pr(Y=1\mid x,q,a;\omega),
\qquad a\in\{\AIa,\AIf\},
\]
and the continuation-value advantage of \AIf over \AIa by
\begin{align}
\Phi_x(s,c,q;\omega)
&:=
\mathbb E_{Y\sim \Pr(\cdot\mid x,q,\AIf;\omega)}
\Big[
W_x^\omega\big(S^+_{\AIf,Y},C^+_{\AIf,Y},Z^+_{\AIf,Y}\big)
\Big]
\nonumber
\\
&\quad -
\mathbb E_{Y\sim \Pr(\cdot\mid x,q,\AIa;\omega)}
\Big[
W_x^\omega\big(S^+_{\AIa,Y},C^+_{\AIa,Y},Z^+_{\AIa,Y}\big)
\Big].
\label{eq:outside_comp_phi}
\end{align}

Assume:
\begin{enumerate}[noitemsep, topsep=0pt]
\item \textbf{Nonnegative ad revenue.}
$R(x,q)\ge 0$.

\item \textbf{Continuation advantage strengthens with competition.}
For the fixed decision point $(s,c,q)$, the function
\[
\omega\mapsto \Phi_x(s,c,q;\omega)
\]
is weakly increasing.
\end{enumerate}
\end{assumption}

\subsection{Extra Theoretical Results}

\subsection{Proofs}
\label[appsec]{app:proof}

\begin{refproof}[Proof of Proposition~\ref{prop:opt_gate}]
\label{proof:opt_gate}
Fix $(x,p)$ and suppress these arguments when notationally convenient.

Let the state space be
\[
\mathcal S := [0,\infty)\times[0,\infty)\times\{0,1\}
\]
equipped with the product Borel $\sigma$-algebra. Let $\mathcal B$ denote the space of bounded
\emph{measurable} functions $V:\mathcal S\to\mathbb R$, equipped with the sup norm
$\|V\|_\infty:=\sup_{(s,c,z)\in\mathcal S}|V(s,c,z)|$.

\textbf{Step 1: Bellman operator.}
First let $Y\sim \Pr(\cdot\mid x,q,a)$ denote the within-query choice, and write
$(S^+,C^+,Z^+):=(S_{a,Y}^+(x,q,s),\,C_{a,Y}^+(x,q,c),\,Z_{a,Y}^+(x,q,s))$.
Define the Bellman operator $\mathcal T_x:\mathcal B\to\mathcal B$ by
\begin{align}
(\mathcal T_x V)(s,c,0)
&:= \int_{\mathcal Q} \max_{a\in\{\AIa,\AIf\}}
\Big\{
r_{a}(x,q)
+\beta \,
\mathbb E_{Y\sim \Pr(\cdot\mid x,q,a)}
\big[  
\rho\,\!\big(S^+,C^+,Z^+\big)\,
V\!\big(S^+,C^+,Z^+\big)
\big]
\Big\}\,
\mathcal P(dq\mid x),
\label{eq:bellman_z0}
\\
(\mathcal T_x V)(s,c,1)
&:= r_{\sub}(p)+\beta\,V(s,c,1).
\label{eq:bellman_z1}
\end{align}
Under Assumption~\ref{assm:regularity}, the integrand in \eqref{eq:bellman_z0} is measurable in $q$:
$r_{\AIa}(x,q)$ is measurable in $(q,c)$, $\delta_S(x,q,s)$ and $\delta_C(x,q,c)$ are measurable in $q$,
and $V(\cdot)$ is measurable in $(s,c,z)$; hence the composition
$q\mapsto V(s_a^+(q),c_a^+(q),z_a^+(q))$ is measurable. Since the max of finitely many measurable functions
is measurable, the integral is well-defined. Moreover, boundedness of one-period payoffs and $V$ implies
$\mathcal T_x V$ is bounded.

\textbf{Step 2: Contraction.}
Let $V,W\in\mathcal B$. For any $(s,c)$ and $q$, define
\[
G_V(a;q)
:=
r_a(x,q)
+
\beta\,
\mathbb E_{Y\sim \Pr(\cdot\mid x,q,a)}
\big[  
\rho\,\!\big(S^+,C^+,Z^+\big)\,
V\!\big(S^+,C^+,Z^+\big)
\big]
\]
Then for each $(s,c)$,
\begin{align*}
|(\mathcal T_x V)(s,c,0)-(\mathcal T_x W)(s,c,0)|
&= \left|\int \Big(\max_a G_V(a;q)-\max_a G_W(a;q)\Big)\,\mathcal P(dq\mid x)\right| \\
&\le \int \max_a |G_V(a;q)-G_W(a;q)|\,\mathcal P(dq\mid x) \\
&\le \beta\,\mathbb E_{Y\sim \Pr(\cdot\mid x,q,a)}\big[\rho(\cdot)\,\|V-W\|_\infty\, \big]
\;\le\; \beta\,\|V-W\|_\infty,
\end{align*}
where we used that $\rho_{z_a^+(q)}\in[0,1]$.
For $z=1$, \eqref{eq:bellman_z1} gives
\[
|(\mathcal T_x V)(s,c,1)-(\mathcal T_x W)(s,c,1)|
= \beta\,|V(s,c,1)-W(s,c,1)|
\le \beta\,\|V-W\|_\infty.
\]
Taking the supremum over $(s,c,z)$ yields
\[
\|\mathcal T_x V-\mathcal T_x W\|_\infty \le \beta\,\|V-W\|_\infty.
\]
Hence $\mathcal T_x$ is a contraction on $(\mathcal B,\|\cdot\|_\infty)$.

\textbf{Step 3: Existence and uniqueness of the value function.}
By the Banach fixed-point theorem, $\mathcal T_x$ admits a unique fixed point $V_x\in\mathcal B$ satisfying
$V_x=\mathcal T_x V_x$. In particular, \eqref{eq:bellman_z0} implies
\[
V_x(s,c,0)
= \int_{\mathcal Q} \max_{a\in\{\AIa,\AIf\}} Q_x^{a}(s,c,q)\,\mathcal P(dq\mid x),
\]
where $Q_x^a$ is the action-value function defined in \Cref{prop:opt_gate}. Equation \eqref{eq:bellman_z1} yields
$V_x(s,c,1)=r_{\sub}(p)/(1-\beta)$.

\textbf{Step 4: Optimality and the sign characterization.}
Consider a period in which the platform is in the pre-subscription regime with state $(s,c)$ and observes
a realized query $q$. Conditional on $(s,c,q)$, the only choice is $a\in\{\AIa,\AIf\}$, and the continuation value
under any admissible policy equals the corresponding action value $Q_x^{a}(s,c,q)$ computed using the fixed point
$V_x$. Therefore an optimal decision rule at $(s,c,q)$ selects any maximizer of $Q_x^{a}(s,c,q)$.

Since the action set is binary, define
\[
\Delta_x(s,c,q):=Q_x^{\AIa}(s,c,q)-Q_x^{\AIf}(s,c,q).
\]
Then $Q_x^{\AIa}(s,c,q)\ge Q_x^{\AIf}(s,c,q)$ if and only if $\Delta_x(s,c,q)\ge 0$.
With ties broken toward $\AIa$, an optimal design decision is
\[
a^\star(s,c,q)=
\begin{cases}
\AIa, & \Delta_x(s,c,q)\ge 0,\\
\AIf, & \Delta_x(s,c,q)<0,
\end{cases}
\]
which proves the claim. \qedhere
\end{refproof}

\begin{proof}[Proof of \Cref{prop:type_order}]
Fix $(s,c)$ in the pre-subscription regime.
By \Cref{prop:opt_gate}, conditional on $(x,q,s,c)$, $\AIa$ is optimal
if and only if $Q_x^{\AIa}(s,c,q)\ge Q_x^{\AIf}(s,c,q)$, equivalently $D_x(s,c,q)\le 0$.

Under Assumption~\ref{assm:type_mono}, we can write
\[
D_x(s,c,q)=\widetilde D(s,c;\gamma,r,\psi),
\quad\text{where }(\gamma,r,\psi)=\big(\gamma(x),r_{\AIa}(x,q),\psi(x,q)\big).
\]
Define the type acceptance set
\[
\mathcal A(s,c):=\{(\gamma,r,\psi):\ \widetilde D(s,c;\gamma,r,\psi)\ge 0\}.
\]
This is exactly the set of types for which $\AIf$ is optimal (since $\AIf$ is optimal iff $D_x\ge 0$).

\textbf{Step 1: $\mathcal A(s,c)$ is an upper set under $\preceq$.}
Take any $t\preceq t'$, i.e., $\gamma'\ge\gamma$, $\psi'\ge\psi$, and $r'\le r$.
By Assumption~\ref{assm:type_mono}, $\widetilde D(s,c;\cdot)$ is weakly increasing in $\gamma,\psi$
and weakly decreasing in $r$, hence
\[
\widetilde D(s,c;t')\;\ge\;\widetilde D(s,c;t).
\]
Therefore, if $t\in\mathcal A(s,c)$ then $\widetilde D(s,c;t)\ge 0$, implying $\widetilde D(s,c;t')\ge 0$,
so $t'\in\mathcal A(s,c)$. This proves the monotone ordering claim.

\textbf{Step 2: threshold in $\psi$ holding $(\gamma,r)$ fixed.}
Fix $(\gamma,r)$ and define $f(\psi):=\widetilde D(s,c;\gamma,r,\psi)$.
By Assumption~\ref{assm:type_mono}, $f$ is weakly increasing.
Let $S:=\{\psi\in\mathbb R:\ f(\psi)\le 0\}$. Then $S$ is a (possibly empty) lower interval.
Define $\psi^*(s,c;\gamma,r):=\sup S$ with the convention $\sup\emptyset=-\infty$.
Then $S=(-\infty,\psi^*(s,c;\gamma,r)]$, hence
\[
\AIa \text{ is optimal } \Longleftrightarrow \psi \le \psi^*(s,c;\gamma,r),
\]
which proves the stated threshold characterization in $\psi$.

\textbf{Step 3: threshold in $r$ holding $(\gamma,\psi)$ fixed.}
Fix $(\gamma,\psi)$ and define $g(r):=\widetilde D(s,c;\gamma,r,\psi)$.
By Assumption~\ref{assm:type_mono}, $g$ is weakly decreasing.
Let $R:=\{r\in\mathbb R:\ g(r)\le 0\}$. Then $R$ is a (possibly empty) upper interval.
Define $r^*(s,c;\gamma,\psi):=\inf R$ with the convention $\inf\emptyset=+\infty$.
Then $R=[r^*(s,c;\gamma,\psi),\infty)$, hence
\[
\AIa \text{ is optimal } \Longleftrightarrow r \ge r^*(s,c;\gamma,\psi),
\]
which proves the stated threshold characterization in $r$.

\end{proof}

\begin{proof}[Proof of \Cref{prop:beta_monotone}]
Fix $(x,p)$ and $(s,c,q)$. For each $\beta$, expanding the action values yields
\begin{equation}
\Delta_x(s,c,q;\beta)
=
r_{\AIa}(x,q)
-
\beta\,\Phi_x(s,c,q;\beta),
\label{eq:Delta_beta_identity}
\end{equation}
where $\Phi_x$ is defined in Assumption~\ref{assm:horizon_beta}.
Let $\beta'>\beta$. Subtracting \eqref{eq:Delta_beta_identity} at $\beta$ from the same identity at $\beta'$
gives
\begin{equation}
\Delta_x(s,c,q;\beta')-\Delta_x(s,c,q;\beta)
=
-\Big(\beta'\Phi_x(s,c,q;\beta')-\beta\Phi_x(s,c,q;\beta)\Big).
\label{eq:Delta_beta_diff}
\end{equation}
Rewrite the bracketed term as
\[
\beta'\Phi(\beta')-\beta\Phi(\beta)
=(\beta'-\beta)\Phi(\beta)+\beta'\big(\Phi(\beta')-\Phi(\beta)\big),
\]
where $\Phi(\beta)$ is shorthand for $\Phi_x(s,c,q;\beta)$.
By Assumption~\ref{assm:horizon_beta}(i), $\Phi(\beta)\ge 0$, so $(\beta'-\beta)\Phi(\beta)\ge 0$.
By Assumption~\ref{assm:horizon_beta}(ii), $\Phi(\beta')-\Phi(\beta)\ge 0$, so the second term is also nonnegative.
Hence $\beta'\Phi(\beta')-\beta\Phi(\beta)\ge 0$, and \eqref{eq:Delta_beta_diff} implies
$\Delta_x(s,c,q;\beta')\le \Delta_x(s,c,q;\beta)$, proving $\Delta$ is weakly decreasing in $\beta$.

For set inclusion, if $\Delta_x(s,c,q;\beta)<0$, then
$\Delta_x(s,c,q;\beta')\le \Delta_x(s,c,q;\beta)<0$, so $(s,c,q)$ remains in the $\AIf$ region at $\beta'$.
\end{proof}

\begin{proof}[Proof for \Cref{prop:cs-cost-local}]
Fix $(x,p,\beta)$ and suppress these arguments for convenience.

\textbf{Part (i).}
Take $\kappa_2\ge \kappa_1$ and write $\Delta\kappa:=\kappa_2-\kappa_1$.
Under \Cref{assm:one_step}, choosing \AIf at $(s,c,q)$ yields one-step conversion, so the next-period state is subscribed.
Since subscription is absorbing, for $i\in\{1,2\}$ we have
\[
Q_x^{\AIf}(s,c,q;\kappa_i,\kappa_{\sub})
=
(-\kappa_i)+\beta\,\frac{r_{\sub}(p)}{1-\beta},
\]
hence
\[
Q_x^{\AIf}(s,c,q;\kappa_2,\kappa_{\sub})-Q_x^{\AIf}(s,c,q;\kappa_1,\kappa_{\sub})
=
-\Delta\kappa.
\]

Now consider choosing \AIa at $(s,c,q)$.
By \Cref{assm:one_step}, the next-period subscription status remains $z=0$.
Let $V_{x,i}$ denote the optimal value function under $(\kappa_i,\kappa_{\sub})$.
By \Cref{prop:kappa_xq_value}, specialized to constant cost functions $\kappa_1$ and $\kappa_2$ while holding $\kappa_{\sub}$ fixed, the value from any pre-subscription state is weakly lower under $\kappa_2$:
\[
V_{x,2}(s',c',0)\le V_{x,1}(s',c',0)\qquad \text{for all }(s',c').
\]
Therefore
\begin{align*}
&Q_x^{\AIa}(s,c,q;\kappa_2,\kappa_{\sub})-Q_x^{\AIa}(s,c,q;\kappa_1,\kappa_{\sub})
\\
&=
\big(r_{\AIa}(x,q)-\kappa_2\big)-\big(r_{\AIa}(x,q)-\kappa_1\big)
\nonumber
\\
&\quad
+\beta\,\mathbb E_{Y\sim \Pr(\cdot\mid x,q,\AIa)}\!\Big[
\rho_x\!\big(S^+_{\AIa,Y},C^+_{\AIa,Y},0\big)
\nonumber
\\
&\qquad\qquad \times
\Big(
V_{x,2}\!\big(S^+_{\AIa,Y},C^+_{\AIa,Y},0\big)
-V_{x,1}\!\big(S^+_{\AIa,Y},C^+_{\AIa,Y},0\big)
\Big)
\Big]
\\
&\le
-\Delta\kappa,
\end{align*}
because $\rho_x(\cdot)\in[0,1]$ and the bracketed value difference is nonpositive.
Consequently,
\begin{align*}
\Delta_x(s,c,q;\kappa_2,\kappa_{\sub})-\Delta_x(s,c,q;\kappa_1,\kappa_{\sub})
&=
\big(Q_x^{\AIa}(\kappa_2)-Q_x^{\AIa}(\kappa_1)\big)
\nonumber
\\
&\quad-
\big(Q_x^{\AIf}(\kappa_2)-Q_x^{\AIf}(\kappa_1)\big)
\\
&\le 0.
\end{align*}
Thus $\Delta_x(s,c,q;\kappa,\kappa_{\sub})$ is weakly decreasing in $\kappa$, so a higher pre-subscription cost pushes the decision toward \AIf in the near-threshold region.

\textbf{Part (ii).}
Consider $\kappa_{\sub,2}\ge \kappa_{\sub,1}$ and write
$r_{\sub,i}:=p-\kappa_{\sub,i}$ for $i\in\{1,2\}$, so $r_{\sub,2}=r_{\sub,1}-(\kappa_{\sub,2}-\kappa_{\sub,1})$.

Under the one-step conversion condition $s<\tau(x,p,c)\le s+u$, choosing $\AIf$ moves the experience state to at least $\tau(x,p,c)$ in one step, so the next-period state is subscribed.
Since subscription is absorbing, the subscribed-state value satisfies
$V_x(s',c',1)=r_{\sub}(p)/(1-\beta)$ for any $(s',c')$.
Therefore,
\[
Q_x^{\AIf}(s,c,q;\kappa,\kappa_{\sub,i})
=
(-\kappa)+\beta\,\frac{r_{\sub,i}}{1-\beta},
\]
and hence
\[
Q_x^{\AIf}(s,c,q;\kappa,\kappa_{\sub,2})-Q_x^{\AIf}(s,c,q;\kappa,\kappa_{\sub,1})
=
-\frac{\beta}{1-\beta}\big(\kappa_{\sub,2}-\kappa_{\sub,1}\big).
\]

Now consider choosing $\AIa$ at $(s,c,q)$ with $s<\tau(x,p,c)$ and $\AIa$ not changing $s$ in one step.
Then the next-period status remains pre-subscription ($z=0$), so subscription payoffs that depend on $\kappa_{\sub}$
cannot begin before period $t+2$. Even in the most extreme case where the user becomes subscribed from $t+2$ onward with probability one,
the total discounted exposure to the subscribed payoff stream from period $t+2$ is at most
$\sum_{k=2}^{\infty}\beta^k=\beta^2/(1-\beta)$.
Because changing $\kappa_{\sub}$ shifts the subscribed per-period payoff by exactly $-(\kappa_{\sub,2}-\kappa_{\sub,1})$,
this implies the action value under $\AIa$ satisfies the bound
\[
Q_x^{\AIa}(s,c,q;\kappa,\kappa_{\sub,2})-Q_x^{\AIa}(s,c,q;\kappa,\kappa_{\sub,1})
\ \ge\
-\frac{\beta^2}{1-\beta}\big(\kappa_{\sub,2}-\kappa_{\sub,1}\big).
\]

Combining the two differences yields
\begin{align*}
\Delta_x(s,c,q;\kappa,\kappa_{\sub,2})-\Delta_x(s,c,q;\kappa,\kappa_{\sub,1})
&=
\big(Q_x^{\AIa}(\kappa_{\sub,2})-Q_x^{\AIa}(\kappa_{\sub,1})\big)
-\big(Q_x^{\AIf}(\kappa_{\sub,2})-Q_x^{\AIf}(\kappa_{\sub,1})\big)
\\
&\ge
-\frac{\beta^2}{1-\beta}\Delta\kappa_{\sub}
+\frac{\beta}{1-\beta}\Delta\kappa_{\sub}
=
\beta\,\Delta\kappa_{\sub}
\ \ge\ 0,
\end{align*}
where $\Delta\kappa_{\sub}:=\kappa_{\sub,2}-\kappa_{\sub,1}$.
This proves $\Delta_x$ is weakly increasing in $\kappa_{\sub}$ under the stated one-step condition, and the set inclusion follows immediately.
\end{proof}

\begin{proof}[Proof for \Cref{prop:cs-p-local}]
Take any $p_2>p_1$ in $I$ and write $\Delta p:=p_2-p_1>0$.

Under \Cref{assm:one_step}, choosing \AIf at $(x,s,c,q)$ yields one-step conversion for both prices.
Under \Cref{assm:cs-cost}, the current-period flow payoff under \AIf is $-\kappa$, and since subscription is absorbing,
\[
Q_x^{\AIf}(s,c,q;\,p_i)
=
-\kappa+\beta\,\frac{p_i-\kappa_{\sub}}{1-\beta},
\qquad i\in\{1,2\}.
\]
Hence
\[
Q_x^{\AIf}(s,c,q;\,p_2)-Q_x^{\AIf}(s,c,q;\,p_1)
=
\frac{\beta}{1-\beta}\Delta p.
\]

Now consider choosing \AIa at $(x,s,c,q)$.
By \Cref{assm:one_step}(ii), the next-period subscription status remains $z=0$ for both prices.
By \Cref{assm:cs-p-nuanced}, all pre-subscription primitives are independent of $p$, so the only effect of a higher price comes through the subscribed payoff stream.
Moreover, $\tau(x,p,c)$ is weakly nondecreasing in $p$, so raising $p$ cannot make subscription occur earlier; it can only weakly delay or reduce future conversion.

Therefore, even in the most favorable case for \AIa under $p_2$, the additional gain from raising the price from $p_1$ to $p_2$ is bounded above by the discounted value of receiving the higher subscribed per-period margin $\Delta p$ from period $t+2$ onward with probability one:
\[
Q_x^{\AIa}(s,c,q;\,p_2)-Q_x^{\AIa}(s,c,q;\,p_1)
\le
\sum_{k=2}^{\infty}\beta^k\,\Delta p
=
\frac{\beta^2}{1-\beta}\Delta p.
\]

Combining the two action-value differences yields
\begin{align*}
\Delta_x(s,c,q;\,p_2)-\Delta_x(s,c,q;\,p_1)
&=
\big(Q_x^{\AIa}(p_2)-Q_x^{\AIa}(p_1)\big)
\nonumber
\\
&\quad-
\big(Q_x^{\AIf}(p_2)-Q_x^{\AIf}(p_1)\big)
\\
&\le
\frac{\beta^2}{1-\beta}\Delta p
-
\frac{\beta}{1-\beta}\Delta p
=
-\beta\,\Delta p
\le 0.
\end{align*}
Hence $\Delta_x(s,c,q;\,p)$ is weakly decreasing in $p$ on $I$, proving that a higher subscription price weakly shifts the optimal design toward \AIf in this near-threshold region.
\end{proof}

\begin{proof}[Proof for \Cref{prop:outside_competition}]
Fix $(x,p,s,c,q)$ and suppress fixed arguments in notation.
Under \Cref{assm:outside_comp}, for each action $a\in\{\AIa,\AIf\}$,
\[
p_a(\omega)=\frac{\exp(v_a)}{\exp(v_a)+\exp(v_0+\omega)},
\qquad
\frac{d}{d\omega}p_a(\omega)=-p_a(\omega)\big(1-p_a(\omega)\big).
\]
Hence $p_a(\omega)$ is weakly decreasing in $\omega$ for both actions, and in particular
$p_{\AIa}(\omega)$ is weakly decreasing in $\omega$.

For each $\omega$, define
\[
C_{\AIa}(\omega)
:=
\mathbb E_{Y\sim \Pr(\cdot\mid x,q,\AIa;\omega)}
\Big[
W_x^\omega\big(S^+_{\AIa,Y},C^+_{\AIa,Y},Z^+_{\AIa,Y}\big)
\Big],
\]
\[
C_{\AIf}(\omega)
:=
\mathbb E_{Y\sim \Pr(\cdot\mid x,q,\AIf;\omega)}
\Big[
W_x^\omega\big(S^+_{\AIf,Y},C^+_{\AIf,Y},Z^+_{\AIf,Y}\big)
\Big].
\]
Then the action values at the fixed decision point are
\[
Q_x^{\AIa}(s,c,q;\omega)
=
-\kappa + R(x,q)p_{\AIa}(\omega)+\beta C_{\AIa}(\omega),
\]
\[
Q_x^{\AIf}(s,c,q;\omega)
=
-\kappa + \beta C_{\AIf}(\omega).
\]
Therefore, using $\Delta_x=Q_x^{\AIa}-Q_x^{\AIf}$ and the definition
$\Phi_x(s,c,q;\omega)=C_{\AIf}(\omega)-C_{\AIa}(\omega)$ from \Cref{eq:outside_comp_phi},
\[
\Delta_x(s,c,q;\omega)
=
R(x,q)p_{\AIa}(\omega)
\;-\;
\beta \Phi_x(s,c,q;\omega).
\]
The first term is weakly decreasing in $\omega$ because $R(x,q)\ge 0$ and
$p_{\AIa}(\omega)$ is weakly decreasing.
The second term is also weakly decreasing because
$\Phi_x(s,c,q;\omega)$ is weakly increasing by \Cref{assm:outside_comp}(ii).
Hence $\Delta_x(s,c,q;\omega)$ is weakly decreasing in $\omega$.

Therefore, if $\Delta_x(s,c,q;\omega)<0$ at some $\omega$, then for any $\omega'>\omega$,
$\Delta_x(s,c,q;\omega')\le \Delta_x(s,c,q;\omega)<0$.
This gives the stated set inclusion for the $\AIf$-optimal region.
\end{proof}

\begin{proof}[Proof of \Cref{prop:competition_type_shift}]
Fix $(s,c)$ in the pre-subscription regime.
By \Cref{prop:outside_competition}, for every $(x,q)$ and $\omega'>\omega$,
$\Delta_x(s,c,q;\omega')\le \Delta_x(s,c,q;\omega)$.
Under \Cref{assm:type_mono} (assumed to hold for each fixed $\omega$), write
$\Delta_x(s,c,q;\omega)=\widetilde\Delta^\omega(s,c;\gamma,r,\psi)$
where $\widetilde\Delta^\omega$ is weakly decreasing in $\gamma$ and $\psi$, and weakly increasing in $r$.

\textbf{Part (i).}
Fix $(\gamma,r)$ and define $f^\omega(\psi):=\widetilde\Delta^\omega(s,c;\gamma,r,\psi)$.
Since $f^\omega$ is weakly decreasing in $\psi$ and $f^{\omega'}(\psi)\le f^\omega(\psi)$ pointwise,
the threshold $\psi^*(\omega):=\sup\{\psi: f^\omega(\psi)\ge 0\}$ satisfies $\psi^*(\omega')\le \psi^*(\omega)$.

\textbf{Part (ii).}
Fix $(\gamma,\psi)$ and define $g^\omega(r):=\widetilde\Delta^\omega(s,c;\gamma,r,\psi)$.
Since $g^\omega$ is weakly increasing in $r$ and $g^{\omega'}(r)\le g^\omega(r)$ pointwise,
the threshold $r^*(\omega):=\inf\{r: g^\omega(r)\ge 0\}$ satisfies $r^*(\omega')\ge r^*(\omega)$.

\textbf{Part (iii).}
The \AIa-optimal region at $\omega$ is $\mathcal A(\omega):=\{(\gamma,r,\psi):\widetilde\Delta^\omega\ge 0\}$.
Since $\widetilde\Delta^{\omega'}\le\widetilde\Delta^\omega$ pointwise, $\mathcal A(\omega')\subseteq\mathcal A(\omega)$.
For any measure $\mu$, $\mu(\mathcal A(\omega'))\le \mu(\mathcal A(\omega))$.
\end{proof}

\subsection{Implications of the GE Provider's Weight on Future Outcomes}\label[appsec]{app:dp-horizon-beta}

How much the generative engine values future outcomes relative to immediate value directly shapes the design policy.

In our framework as introduced in \Cref{sec:setup}, the generative engine's weight on future outcomes is summarized by the discount factor $\beta\in(0,1)$.
A higher $\beta$ corresponds to a more forward-looking generative engine that places greater weight on future retention and subscription conversion relative to near-term value \citep{bajari2007estimating}.
Accordingly, we treat the previously defined value functions (and the associated other functions) as depending on $\beta$, and study how the optimal design region changes as $\beta$ varies.

\textbf{Indexing the weight.}
We index the weight of value function $V_{x}$ and action values $Q_{x}$ by $\beta$, and write the edge function as $\Delta_x(s,c,q;\beta)$ (the same object as in \Cref{sec:dp}, now viewed as a function of $\beta$).
Because one-period payoffs (first term of \eqref{eq:delta_tradeoff}) do not depend on $\beta$, the weight on future outcomes effect affects entirely through the long-term value (second term of \eqref{eq:delta_tradeoff}).

\textbf{Directional weight effects.}
To make the weight effect unambiguous, we impose two sufficient conditions (formalized in \Cref{app:regularity}).
First, a larger weight on future outcomes increases the value of future outcomes from any state.
Second, at any decision point, showing \AIf leads to weakly better future outcomes than showing \AIa.
Together, these conditions imply that a more forward-looking generative engine selects \AIf more often.

\begin{proposition}[More forward-looking generative engines show \AIf more often]
\label{prop:beta_monotone}
Under \Cref{assm:regularity,assm:horizon_beta}, for any fixed $(s,c,q)$,
the \AIa's edge of value over \AIf $\Delta_x(s,c,q;\beta)$ is weakly decreasing in $\beta$.
Consequently, for any $\beta'>\beta$,
\begin{align}
\big\{(s,c,q): \Delta_x(s,c,q;\beta)<0\big\}
\subseteq
\nonumber
\\
\big\{(s,c,q): \Delta_x(s,c,q;\beta')<0\big\}.
\end{align}
Equivalently, the decision set of \AIf weakly expands with larger $\beta$.
\end{proposition}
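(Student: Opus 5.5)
The plan is to write $\Delta_x(s,c,q;\beta)$ explicitly as a short-term term that does not depend on $\beta$, minus $\beta$ times the continuation advantage $\Phi_x$ from \Cref{assm:horizon_beta}. Monotonicity in $\beta$ then follows from the sign and monotonicity of $\Phi_x$.

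First, fix $(x,p)$ and $(s,c,q)$. For each $\beta$, \Cref{prop:opt_gate} (via the contraction argument under \Cref{assm:regularity}) provides a unique bounded value function $V_x(\cdot;\beta)$, so the action values $Q_x^a(s,c,q;\beta)$ are well defined. Expanding them with the decomposition in \Cref{rem:tradeoff} gives
\[
\Delta_x(s,c,q;\beta) = r_{\AIa}(x,q) - r_{\AIf}(x,q) - \beta\,\Phi_x(s,c,q;\beta),
\]
where the common inference cost $\kappa$ cancels. The flow difference is therefore $R(x,q)\Pr(Y=1\mid x,q,\AIa)$, written $r_{\AIa}$ in the paper's shorthand. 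The point to record is that this flow difference is independent of $\beta$: by \eqref{eq:expected_payoff}, the one-period payoffs and choice probabilities involve no discounting. The $W$-terms in \Cref{rem:tradeoff} coincide with $W_x^a$ in \Cref{assm:horizon_beta}, so their difference is exactly $\Phi_x$.

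Next, take $\beta'>\beta$ and subtract the two identities. The flow terms cancel, leaving $-(\beta'\Phi(\beta')-\beta\Phi(\beta))$. I would split this as $(\beta'-\beta)\Phi(\beta)+\beta'(\Phi(\beta')-\Phi(\beta))$. The first piece is nonnegative by continuation dominance (part (i) of \Cref{assm:horizon_beta}). The second is nonnegative by monotonicity of $\Phi$ in $\beta$ (part (ii)) together with $\beta'>0$. Hence $\Delta_x(\cdot;\beta')\le\Delta_x(\cdot;\beta)$.

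The set inclusion then follows directly: $\Delta_x(\beta)<0$ implies $\Delta_x(\beta')\le\Delta_x(\beta)<0$, so the $\AIf$ region at $\beta$ is contained in the $\AIf$ region at $\beta'$.

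The only real obstacle is conceptual rather than technical. Without part (ii), the dependence of $V_x$ on $\beta$ could make $\Phi$ shrink as $\beta$ grows, so monotonicity could not be derived from the primitives alone; this is exactly why the assumption is imposed. Given the assumption, the remaining care is in checking that the identity for $\Delta_x$ uses the same $\beta$-indexed fixed point $V_x(\cdot;\beta)$ in both action values, so that the flow parts cancel exactly.
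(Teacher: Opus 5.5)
Your proposal is correct and matches the paper's proof essentially step for step. Both write $\Delta_x(s,c,q;\beta)$ as a $\beta$-free flow term minus $\beta\,\Phi_x(s,c,q;\beta)$, split $\beta'\Phi(\beta')-\beta\Phi(\beta)=(\beta'-\beta)\Phi(\beta)+\beta'\bigl(\Phi(\beta')-\Phi(\beta)\bigr)$, and apply parts (i) and (ii) of \Cref{assm:horizon_beta} before reading off the set inclusion. Your note that the common cost $\kappa$ cancels, leaving the flow term $R(x,q)\Pr(Y=1\mid x,q,\AIa)$, is slightly more careful than the paper's shorthand $r_{\AIa}(x,q)$ in that identity, though it does not change the argument.
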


\textbf{Economic interpretation.}
A larger weight on future outcomes (i.e., a larger $\beta$) increases the generative engine's weight on users' future engagement and conversion to subscription relative to immediate ad revenue.
Since \AIf improves continuation value $V_{x}$ through retention $\rho_{x}$ and the user's experience accumulation, a more forward-looking generative engine displays \AIf to a larger population of users (and queries).

\Cref{fig:type_pdf_shift} (left) illustrates \Cref{prop:beta_monotone} at a fixed state and query environment.
As $\beta$ increases, the policy cutoff in \textit{ad sensitivity} shifts left: \AIf\ becomes optimal for a larger share of users (i.e., the region to the right of the vertical cutoff expands.)

\Cref{app:horizon} discusses finite-horizon formulations and how changes in the time horizon affect the policy shift.

\begin{figure*}[t]
  \centering
  \begin{minipage}{0.54\textwidth}
    \centering
    \includegraphics[width=\linewidth]{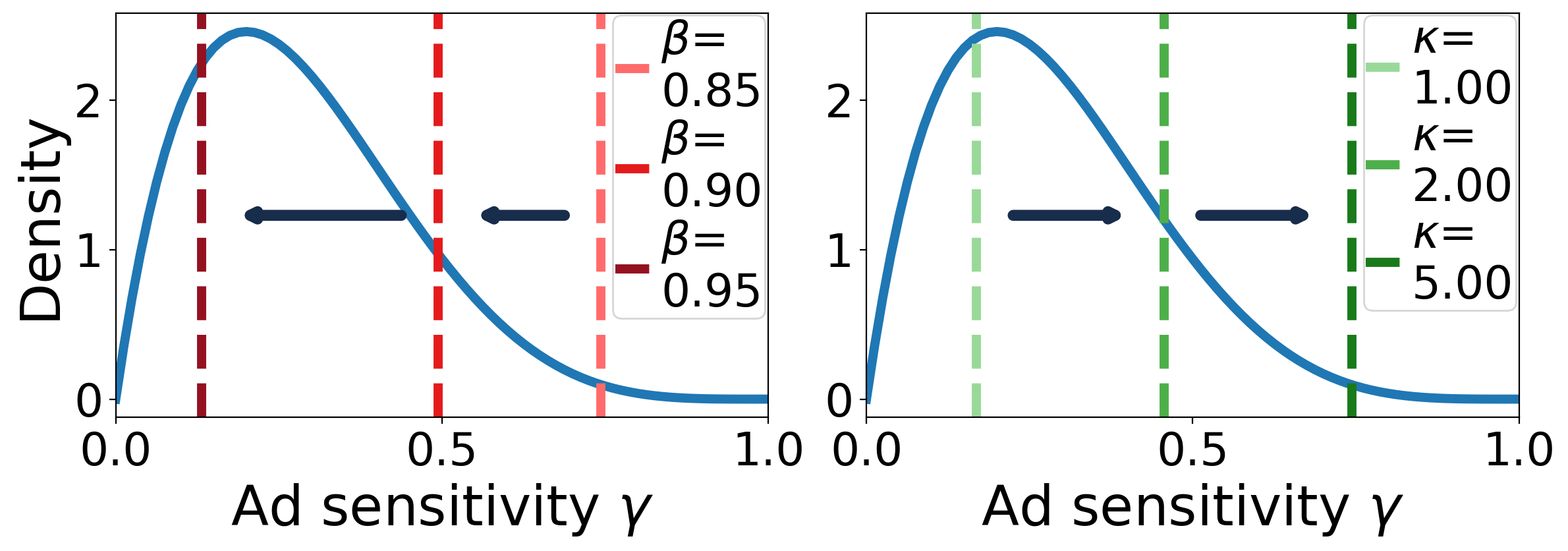}
  \end{minipage}
  \captionsetup{width=\textwidth}
  \caption{
  User-type policy cutoff shift under the generative engine's weight on future outcomes (indexed by $\beta$) and inference cost $\kappa$.
  Fix user context $x$ and query type, and draw \textit{ad sensitivity} $\gamma\sim\mathrm{Beta}$.
  Vertical dashed lines mark the optimal policy cutoff between \AIf and \AIa for each parameter value.
  \emph{Left:} varying discount factor $\beta$.
  \emph{Right:} varying AI inference cost $\kappa$.
  }
  \label{fig:type_pdf_shift}
  \vspace{-15pt}
\end{figure*}

\subsection{Finite-Horizon Case}\label{app:horizon}


\begin{proposition}[Finite-horizon value converges geometrically]
\label{prop:finite_value_convergence}
Fix $(x,p)$ and $\beta\in(0,1)$. Maintain Assumption~\ref{assm:regularity}.
Let $\mathcal T_x$ be the Bellman operator at discount factor $\beta$ (as in \Cref{proof:opt_gate}),
and define the finite-horizon value-iteration sequence by $V_x^{(0)}\equiv 0$ and
\[
V_x^{(T)} := \mathcal T_x V_x^{(T-1)} \qquad \text{for } T\ge 1.
\]
Let $V_x^\infty$ denote the unique bounded fixed point of $\mathcal T_x$.
Assume the one-period reward is uniformly bounded: there exists $R_{\max}<\infty$ such that for all
$(s,c,z)$ and all $q$,
\[
\max_{a\in\{\AIa,\AIf\}} \big| r_a(x,q) \big| \le R_{\max},
\qquad
|r_{\sub}(p)|\le R_{\max}.
\]
Then $V_x^{(T)}$ converges uniformly to $V_x^\infty$ and
\begin{equation}
\big\|V_x^\infty - V_x^{(T)}\big\|_\infty
\;\le\;
\frac{R_{\max}}{1-\beta}\,\beta^{T}.
\label{eq:finite_value_bound}
\end{equation}
\end{proposition}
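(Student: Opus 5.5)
The plan is to reuse the contraction property of $\mathcal T_x$ established in Step 2 of the proof of \Cref{prop:opt_gate}: for all bounded measurable $V,W$,
\[
\|\mathcal T_x V-\mathcal T_x W\|_\infty\le\beta\|V-W\|_\infty.
\]
Since $V_x^\infty=\mathcal T_x V_x^\infty$ and $V_x^{(T)}=\mathcal T_x V_x^{(T-1)}$, iterating this inequality $T$ times gives
\[
\|V_x^\infty-V_x^{(T)}\|_\infty\le\beta^T\|V_x^\infty-V_x^{(0)}\|_\infty=\beta^T\|V_x^\infty\|_\infty .
\]
Uniform convergence then follows immediately once $\|V_x^\infty\|_\infty$ is shown to be finite, with the stated constant.

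The remaining step is the a priori bound $\|V_x^\infty\|_\infty\le R_{\max}/(1-\beta)$. I would prove it by showing that the ball $\mathcal B_K:=\{V\in\mathcal B:\|V\|_\infty\le K\}$ with $K=R_{\max}/(1-\beta)$ is invariant under $\mathcal T_x$.

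For $z=0$, each action value inside the max satisfies
\[
|r_a(x,q)+\beta\,\mathbb E[\rho V]|\le R_{\max}+\beta K,
\]
using $\rho\in[0,1]$. The max of two such quantities and the integral over $\mathcal P(dq\mid x)$ preserve this bound. For $z=1$, we have $|r_{\sub}(p)+\beta V|\le R_{\max}+\beta K$ directly. Since $R_{\max}+\beta K=K$, the set $\mathcal B_K$ is invariant.

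Because $\mathcal B_K$ is a closed subset of the complete space $(\mathcal B,\|\cdot\|_\infty)$ and is mapped into itself, the Banach fixed point lies in $\mathcal B_K$, so $\|V_x^\infty\|_\infty\le K$. An alternative route is to note that $V_x^{(0)}=0\in\mathcal B_K$, so every iterate $V_x^{(T)}$ lies in $\mathcal B_K$, and pass to the uniform limit. Combining this bound with the iterated contraction gives \eqref{eq:finite_value_bound}.

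No step is a genuine obstacle. The only points needing care are that $\mathcal T_x$ maps $\mathcal B$ into bounded measurable functions, which is already handled in Step 1 of the proof of \Cref{prop:opt_gate}, and that the factor $\rho\le 1$ is used in the invariance bound, not just in the contraction.
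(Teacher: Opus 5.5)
Your proof is correct and follows the paper's: iterate the $\beta$-contraction from $V_x^{(0)}\equiv 0$ to get $\|V_x^\infty-V_x^{(T)}\|_\infty\le\beta^T\|V_x^\infty\|_\infty$, then show $\|V_x^\infty\|_\infty\le R_{\max}/(1-\beta)$. The only cosmetic difference is in that last bound: the paper applies $|(\mathcal T_x V)(s,c,z)|\le R_{\max}+\beta\|V\|_\infty$ directly to the bounded fixed point and rearranges, whereas you show the closed ball $\mathcal B_K$ is invariant under $\mathcal T_x$; both arguments are valid.
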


\begin{proof}
Under Assumption~\ref{assm:regularity}, $\mathcal T_x$ maps bounded measurable functions to bounded
measurable functions and is a $\beta$-contraction under the sup norm:
\[
\|\mathcal T_x V - \mathcal T_x W\|_\infty \le \beta\,\|V-W\|_\infty
\qquad \text{for all bounded } V,W.
\]
Hence $\mathcal T_x$ admits a unique bounded fixed point $V_x^\infty$ and, for any initial $V_x^{(0)}$,
value iteration converges:
\[
\|V_x^\infty - V_x^{(T)}\|_\infty
\le
\beta^T \|V_x^\infty - V_x^{(0)}\|_\infty.
\]
With $V_x^{(0)}\equiv 0$, it remains to bound $\|V_x^\infty\|_\infty$.
For any bounded $V$ and any state, by the reward bound and $\rho_z\in[0,1]$,
\[
|(\mathcal T_x V)(s,c,0)|
\le
\mathbb E\!\left[\max_a |r_a(x,q)|\right] + \beta\,\|V\|_\infty
\le
R_{\max}+\beta\,\|V\|_\infty,
\]
and similarly $|(\mathcal T_x V)(s,c,1)|\le R_{\max}+\beta\,\|V\|_\infty$.
Applying this inequality to the fixed point $V_x^\infty=\mathcal T_x V_x^\infty$ yields
$\|V_x^\infty\|_\infty \le R_{\max}/(1-\beta)$.
Therefore
\[
\|V_x^\infty - V_x^{(T)}\|_\infty
\le
\beta^T \|V_x^\infty\|_\infty
\le
\frac{R_{\max}}{1-\beta}\,\beta^{T},
\]
which proves \eqref{eq:finite_value_bound}.
\end{proof}

\begin{proposition}[Policy stabilizes away from an indifference band]
\label{prop:finite_policy_stabilization}
Fix $(x,p)$ and $\beta\in(0,1)$. Maintain Assumption~\ref{assm:regularity} and the reward bound of
Proposition~\ref{prop:finite_value_convergence}.
For $a\in\{\AIa,\AIf\}$, define the infinite-horizon action value
\[
Q_{x}^{a,\infty}(s,c,q)
:=
r_a(x,q)
+
\beta\,\rho_{z_a^+(q)}\!\big(x,s_a^+(q),c_a^+(q)\big)\,
V_x^\infty\!\big(s_a^+(q),c_a^+(q),z_a^+(q)\big),
\]
and the $T$-step (finite-horizon) action value
\[
Q_{x}^{a,(T)}(s,c,q)
:=
r_a(x,q)
+
\beta\,\rho_{z_a^+(q)}\!\big(x,s_a^+(q),c_a^+(q)\big)\,
V_x^{(T-1)}\!\big(s_a^+(q),c_a^+(q),z_a^+(q)\big),
\qquad T\ge 1.
\]
Let the corresponding advantages be
\[
D_x^\infty(s,c,q):=Q_x^{\AIf,\infty}(s,c,q)-Q_x^{\AIa,\infty}(s,c,q),
\qquad
D_x^{(T)}(s,c,q):=Q_x^{\AIf,(T)}(s,c,q)-Q_x^{\AIa,(T)}(s,c,q).
\]
Then
\begin{equation}
\sup_{s,c,q}\, \big|D_x^\infty(s,c,q)-D_x^{(T)}(s,c,q)\big|
\;\le\;
2\beta\,\big\|V_x^\infty - V_x^{(T-1)}\big\|_\infty
\;\le\;
\frac{2\beta R_{\max}}{1-\beta}\,\beta^{T-1}.
\label{eq:finite_adv_bound}
\end{equation}
Consequently, if at some $(s,c,q)$,
\begin{equation}
\big|D_x^\infty(s,c,q)\big|
>
\frac{2\beta R_{\max}}{1-\beta}\,\beta^{T-1},
\label{eq:margin_condition}
\end{equation}
then $\mathrm{sign}\!\big(D_x^{(T)}(s,c,q)\big)=\mathrm{sign}\!\big(D_x^\infty(s,c,q)\big)$, and thus the
finite-horizon and infinite-horizon optimal display decisions coincide at $(s,c,q)$ (under the same tie-breaking rule).
\end{proposition}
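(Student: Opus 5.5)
The argument compares the two action values term by term, since they share the same flow payoffs and differ only in the continuation value function they use. Fix $(s,c,q)$ and write $(s_a^+,c_a^+,z_a^+)$ for the post-decision state under action $a$; if $Y$ is random, the same argument applies inside the expectation over $Y$. For each $a\in\{\AIa,\AIf\}$ I would subtract the definitions of $Q_x^{a,\infty}$ and $Q_x^{a,(T)}$. The flow term $r_a(x,q)$ cancels, which leaves
\[
Q_x^{a,\infty}(s,c,q)-Q_x^{a,(T)}(s,c,q)
=\beta\,\rho_{z_a^+}\big(x,s_a^+,c_a^+\big)\Big(V_x^\infty-V_x^{(T-1)}\Big)\big(s_a^+,c_a^+,z_a^+\big).
\]
Because $\rho\in[0,1]$ by \Cref{assm:regularity}, the absolute value of this difference is at most $\beta\|V_x^\infty-V_x^{(T-1)}\|_\infty$, uniformly in $(s,c,q)$ and in $a$.

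Next I would write
\[
D_x^\infty-D_x^{(T)}=\big(Q_x^{\AIf,\infty}-Q_x^{\AIf,(T)}\big)-\big(Q_x^{\AIa,\infty}-Q_x^{\AIa,(T)}\big).
\]
The triangle inequality bounds its absolute value by $2\beta\|V_x^\infty-V_x^{(T-1)}\|_\infty$, which is the first inequality in \eqref{eq:finite_adv_bound}. The second inequality follows from \Cref{prop:finite_value_convergence} applied at horizon $T-1$: it gives $\|V_x^\infty-V_x^{(T-1)}\|_\infty\le \frac{R_{\max}}{1-\beta}\beta^{T-1}$. This is valid for all $T\ge1$, including $T-1=0$, since $V_x^{(0)}\equiv0$ and $\|V_x^\infty\|_\infty\le R_{\max}/(1-\beta)$.

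For the sign statement, set $\epsilon_T:=\frac{2\beta R_{\max}}{1-\beta}\beta^{T-1}$ and suppose $|D_x^\infty|>\epsilon_T$. If $D_x^\infty>\epsilon_T$, then $D_x^{(T)}\ge D_x^\infty-\epsilon_T>0$. Symmetrically, if $D_x^\infty<-\epsilon_T$, then $D_x^{(T)}<0$. In both cases the signs agree and are strictly nonzero, so neither quantity lies at the tie. Any tie-breaking rule therefore yields the same action under both horizons, which proves the final clause.

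No step here is a real obstacle. The only point needing care is making sure the uniform bound covers every post-decision state, including subscribed states with $z=1$, where $\rho=1$. The sup-norm bound of \Cref{prop:finite_value_convergence} is taken over the whole state space, so it covers these states. One might also note that "finite-horizon optimal decision" is read as the sign rule applied to $D_x^{(T)}$, which is the greedy decision with respect to $V_x^{(T-1)}$.
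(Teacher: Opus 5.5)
Your proposal is correct and follows the paper's proof essentially step for step. The flow payoff cancels; each action-value gap is bounded by $\beta\|V_x^\infty-V_x^{(T-1)}\|_\infty$ using $\rho\in[0,1]$; the triangle inequality and \Cref{prop:finite_value_convergence} at horizon $T-1$ give both bounds; and the strict margin forces sign agreement. Your explicit checks of the $T=1$ case and that both advantages are strictly nonzero, so tie-breaking never matters, are welcome extra care.
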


\begin{proof}
Fix $(s,c,q)$. For any $a\in\{\AIa,\AIf\}$, the only difference between $Q_x^{a,\infty}$ and $Q_x^{a,(T)}$
is the continuation value, hence
\begin{align*}
\big|Q_x^{a,\infty}(s,c,q)-Q_x^{a,(T)}(s,c,q)\big|
&=
\beta\,\rho_{z_a^+(q)}\!\big(x,s_a^+(q),c_a^+(q)\big)\,
\big|V_x^\infty(\cdot)-V_x^{(T-1)}(\cdot)\big| \\
&\le
\beta\,\big\|V_x^\infty - V_x^{(T-1)}\big\|_\infty,
\end{align*}
since $\rho_{z_a^+(q)}\in[0,1]$. Therefore,
\begin{align*}
\big|D_x^\infty(s,c,q)-D_x^{(T)}(s,c,q)\big|
&=
\big|\big(Q_x^{\AIf,\infty}-Q_x^{\AIa,\infty}\big)-\big(Q_x^{\AIf,(T)}-Q_x^{\AIa,(T)}\big)\big| \\
&\le
\big|Q_x^{\AIf,\infty}-Q_x^{\AIf,(T)}\big|
+
\big|Q_x^{\AIa,\infty}-Q_x^{\AIa,(T)}\big| \\
&\le
2\beta\,\big\|V_x^\infty - V_x^{(T-1)}\big\|_\infty,
\end{align*}
which proves the first inequality in \eqref{eq:finite_adv_bound}.
The second inequality in \eqref{eq:finite_adv_bound} follows by combining with
Proposition~\ref{prop:finite_value_convergence}.
Finally, if \eqref{eq:margin_condition} holds, then
\[
|D_x^{(T)}-D_x^\infty|<|D_x^\infty|,
\]
which implies $D_x^{(T)}$ has the same sign as $D_x^\infty$. Under the same tie-breaking rule,
the induced optimal display decision therefore coincides at $(s,c,q)$.
\end{proof}

\subsection{A Social-Welfare Extension}
\label{sec:dp-welfare}

We extend the framework to study when the generative engine's revenue-maximizing design policy agrees with a welfare-maximizing design policy. The user's engage-versus-outside choice, the state dynamics, the retention rule, and subscription conversion all remain exactly as in the main text. The only change is the objective: welfare counts both the generative engine's payoff and the user's benefit from the same per-query choice problem.

\textbf{User benefit from the same per-query choice.}
For each displayed response $a\in\{\AIa,\AIf\}$, let
\[
u_{\user}(x,q,a)
\]
denote the user's per-query benefit from facing the same binary choice between engaging with the displayed response and taking the outside option. This is an extra primitive for the welfare extension only; it does not change the user's behavior, the state dynamics, or the revenue objective.

\textbf{Welfare flow and value function.}
For pre-subscription users, define the per-query welfare flow under action $a$ by
\begin{equation}
w_a(x,q):=r_a(x,q)+u_{\user}(x,q,a).
\end{equation}
After subscription, let
\[
w_{\sub}(x,p)
\]
denote the per-period welfare flow in the paid tier, combining the generative engine's subscription payoff and the user's value from the paid experience.

Fix $x$. Let $W_x(s,c,z)$ denote the maximal expected discounted welfare starting from state $(s,c,z)$. In the subscribed state,
\begin{equation}
W_x(s,c,1)
=
w_{\sub}(x,p)+\beta W_x(s,c,1)
=
\frac{w_{\sub}(x,p)}{1-\beta}.
\end{equation}
In the pre-subscription state,
\begin{align}
W_x(s,c,0)
&=
\mathbb E_{q\sim \mathcal P(\cdot\mid x)}
\Bigg[
\max_{a\in\{\AIa,\AIf\}}
\Big\{
w_a(x,q)
\nonumber
\\
&\quad\quad+
\beta\,
\mathbb E_{Y\sim \Pr(\cdot\mid x,q,a)}
\Big[
\rho_x\big(S^+_{a,Y},C^+_{a,Y},Z^+_{a,Y}\big)\,
W_x\big(S^+_{a,Y},C^+_{a,Y},Z^+_{a,Y}\big)
\Big]
\Big\}
\Bigg].
\label{eq:sw_bellman}
\end{align}
Relative to the revenue Bellman equation, the only change is that the current-period flow $r_a(x,q)$ is replaced by $w_a(x,q)$ and the subscribed-state flow is replaced by $w_{\sub}(x,p)$.

\begin{assumption}[Bounded welfare flows]\label{assm:sw_bounded}
There exist constants $U_{\max},\Delta_{\sub}^{\max}<\infty$ such that
\[
|u_{\user}(x,q,a)|\le U_{\max}
\qquad\text{for all }(x,q,a),
\]
and
\[
|w_{\sub}(x,p)-r_{\sub}(p)|\le \Delta_{\sub}^{\max}
\qquad\text{for all }x.
\]
\end{assumption}

Under \Cref{assm:regularity,assm:sw_bounded}, the same contraction argument as in \Cref{proof:opt_gate} yields a unique bounded welfare value function $W_x$.

\textbf{Welfare action values and the welfare edge.}
For $a\in\{\AIa,\AIf\}$, define
\begin{equation}
Q_x^{a,\sw}(s,c,q)
:=
w_a(x,q)
+
\beta\,
\mathbb E_{Y\sim \Pr(\cdot\mid x,q,a)}
\Big[
\rho_x\big(S^+_{a,Y},C^+_{a,Y},Z^+_{a,Y}\big)\,
W_x\big(S^+_{a,Y},C^+_{a,Y},Z^+_{a,Y}\big)
\Big].
\end{equation}
Define the welfare edge of \AIa\ over \AIf\ by
\begin{equation}
\Delta_x^{\sw}(s,c,q)
:=
Q_x^{\AIa,\sw}(s,c,q)-Q_x^{\AIf,\sw}(s,c,q).
\end{equation}
As in the main text, we break ties in favor of \AIa.

\begin{proposition}[The welfare design policy follows the same sign rule]
\label{prop:sw_sign_rule}
Under \Cref{assm:regularity,assm:sw_bounded}, the welfare-maximizing design policy chooses \AIa\ at $(s,c,q)$ if and only if
\[
\Delta_x^{\sw}(s,c,q)\ge 0.
\]
Equivalently, it chooses \AIf\ if and only if
\[
\Delta_x^{\sw}(s,c,q)<0.
\]
\end{proposition}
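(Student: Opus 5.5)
The plan is to repeat the proof of \Cref{prop:opt_gate} (\Cref{proof:opt_gate}) verbatim, with the revenue flows $r_a(x,q)$ and $r_{\sub}(p)$ replaced by the welfare flows $w_a(x,q)$ and $w_{\sub}(x,p)$. Nothing else in the dynamic program changes, because the choice probabilities, the post-update maps $(S^+,C^+,Z^+)$, the retention $\rho_x$, and the conversion rule are untouched.

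\textbf{Bellman operator.} I will define the welfare Bellman operator $\mathcal T_x^{\sw}$ on the space $\mathcal B$ of bounded measurable functions on $\mathcal S=[0,\infty)\times[0,\infty)\times\{0,1\}$. Its pre-subscription component is the right-hand side of \eqref{eq:sw_bellman}, with $W$ in place of $W_x$. Its subscribed component is $(\mathcal T_x^{\sw}W)(s,c,1)=w_{\sub}(x,p)+\beta W(s,c,1)$.

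\textbf{Well-definedness.} Under \Cref{assm:regularity} and \Cref{assm:sw_bounded}, the flows are bounded:
\[
|w_a|\le M+\kappa+U_{\max},\qquad |w_{\sub}|\le |r_{\sub}(p)|+\Delta_{\sub}^{\max}.
\]
Hence $\mathcal T_x^{\sw}$ maps $\mathcal B$ into bounded functions. For measurability, I will assume $u_{\user}(x,q,a)$ is measurable in $q$; this is implicit in the extension and is the one point where a small gap could arise, so I would state it explicitly. With that, the integrand is measurable exactly as in Step~1 of \Cref{proof:opt_gate}, since a maximum of two measurable functions is measurable.

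\textbf{Contraction and fixed point.} The contraction estimate is identical to Step~2 of \Cref{proof:opt_gate}. The flows cancel in $G_W(a;q)-G_{W'}(a;q)$, and $\rho_x\in[0,1]$ gives
\[
\|\mathcal T_x^{\sw}W-\mathcal T_x^{\sw}W'\|_\infty\le\beta\|W-W'\|_\infty.
\]
Banach's fixed-point theorem then gives a unique bounded fixed point $W_x$. This fixed point satisfies $W_x(s,c,0)=\int\max_a Q_x^{a,\sw}(s,c,q)\,\mathcal P(dq\mid x)$ and $W_x(s,c,1)=w_{\sub}(x,p)/(1-\beta)$.

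\textbf{Sign rule.} Conditional on $(s,c,q)$ in the pre-subscription regime, the welfare-maximizing decision selects a maximizer of $Q_x^{a,\sw}(s,c,q)$. With a binary action set, $Q_x^{\AIa,\sw}\ge Q_x^{\AIf,\sw}$ if and only if $\Delta_x^{\sw}(s,c,q)\ge 0$. Breaking ties toward \AIa yields the stated rule, and the complementary case $\Delta_x^{\sw}<0$ gives \AIf.

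\textbf{Main obstacle.} The step that needs the most care is justifying that the pointwise greedy maximizer of the fixed point's action values is actually optimal among all measurable stationary policies $g_x\in\mathcal G_x$. The paper's proof of \Cref{prop:opt_gate} treats this briefly. I would argue it as follows. For any policy $g$, its value is the fixed point of the policy operator $\mathcal T^{\sw}_g\le\mathcal T_x^{\sw}$, so by monotonicity its value is at most $W_x$. The greedy policy $g^\star$ satisfies $\mathcal T^{\sw}_{g^\star}W_x=\mathcal T_x^{\sw}W_x=W_x$, so its value equals $W_x$ by uniqueness of the fixed point. Measurability of $g^\star$ follows because $\{\Delta_x^{\sw}\ge0\}$ is a measurable set.
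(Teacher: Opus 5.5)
Your proposal is correct and follows the same route as the paper: the paper obtains $W_x$ by the contraction argument of \Cref{proof:opt_gate}, then reads off the rule from the binary comparison $Q_x^{\AIa,\sw}\ge Q_x^{\AIf,\sw}$ with ties broken toward \AIa. Your extra steps fill in details the paper leaves implicit without changing the argument: the explicit measurability of $u_{\user}$ in $q$, and the policy-operator argument that the greedy policy attains $W_x$.
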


\begin{proof}
By definition, the welfare-maximizing design policy chooses an action that maximizes the welfare action value at $(s,c,q)$. Since the action set is binary and ties are broken toward \AIa, the policy chooses \AIa\ if and only if
\[
Q_x^{\AIa,\sw}(s,c,q)\ge Q_x^{\AIf,\sw}(s,c,q),
\]
which is equivalent to $\Delta_x^{\sw}(s,c,q)\ge 0$.
\end{proof}

This result shows that the welfare objective keeps the same point-wise comparison as the revenue objective. Welfare can change the preferred action only by changing the size of the \AIa\ edge, not by changing the basic decision rule.

\begin{proposition}[Welfare and revenue agree away from small revenue margins]
\label{prop:welfare_alignment_band}
Under \Cref{assm:regularity,assm:sw_bounded}, let
\[
\epsilon_{\sw}:=\max\big\{U_{\max},\Delta_{\sub}^{\max}\big\}.
\]
Then
\begin{equation}
\big\|W_x-V_x\big\|_{\infty}
\le
\frac{\epsilon_{\sw}}{1-\beta}.
\label{eq:sw_value_gap}
\end{equation}
Moreover,
\begin{equation}
\sup_{s,c,q}
\big|
\Delta_x^{\sw}(s,c,q)-\Delta_x(s,c,q)
\big|
\le
2U_{\max}+\frac{2\beta}{1-\beta}\,\epsilon_{\sw}.
\label{eq:sw_edge_gap}
\end{equation}
In particular, if at some $(s,c,q)$,
\begin{equation}
\big|\Delta_x(s,c,q)\big|
>
2U_{\max}+\frac{2\beta}{1-\beta}\,\epsilon_{\sw},
\label{eq:sw_margin_condition}
\end{equation}
then the welfare-maximizing and revenue-maximizing design policies choose the same action at $(s,c,q)$.
\end{proposition}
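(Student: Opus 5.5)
The plan is to compare the two Bellman operators directly and then propagate the flow-level discrepancy through the contraction argument of \Cref{proof:opt_gate}. Let $\mathcal T_x$ be the revenue Bellman operator and $\mathcal T_x^{\sw}$ the welfare operator from \eqref{eq:sw_bellman}. Both operators share the same transition kernel: the same logit engagement probabilities, the same post-update maps $(S^+,C^+,Z^+)$, and the same retention $\rho_x\in[0,1]$. They differ only in their flow terms, namely $w_a-r_a=u_{\user}(x,q,a)$ pre-subscription and $w_{\sub}-r_{\sub}$ after subscription.

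\textbf{Value gap.} The first step is to show
\[
\|\mathcal T_x^{\sw}W-\mathcal T_x W\|_\infty\le\epsilon_{\sw}
\]
for any bounded $W$. On $z=0$, I use $|\max_a f_a-\max_a g_a|\le\max_a|f_a-g_a|$ pointwise in $q$. The continuation terms cancel, leaving at most $U_{\max}$ after integration over $q$. On $z=1$, the difference is $|w_{\sub}-r_{\sub}|\le\Delta^{\max}_{\sub}$.

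Next I apply the triangle inequality to the fixed points:
\[
\|W_x-V_x\|_\infty=\|\mathcal T_x^{\sw}W_x-\mathcal T_xV_x\|_\infty\le\|\mathcal T_x^{\sw}W_x-\mathcal T_xW_x\|_\infty+\|\mathcal T_xW_x-\mathcal T_xV_x\|_\infty\le\epsilon_{\sw}+\beta\|W_x-V_x\|_\infty .
\]
The last bound uses the $\beta$-contraction of $\mathcal T_x$ from Step 2 of \Cref{proof:opt_gate}. Rearranging gives \eqref{eq:sw_value_gap}. An alternative route, equally easy, is to note that the subscribed values are explicit, $W_x(\cdot,1)-V_x(\cdot,1)=(w_{\sub}-r_{\sub})/(1-\beta)$, and to use the same argument only on $z=0$.

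\textbf{Edge gap.} For each action $a$, I write
\[
Q_x^{a,\sw}-Q_x^{a}=u_{\user}(x,q,a)+\beta\,\mathbb E_Y\big[\rho_x(\cdot)\,(W_x-V_x)(S^+,C^+,Z^+)\big].
\]
Since $\rho_x\in[0,1]$, this gives $|Q_x^{a,\sw}-Q_x^a|\le U_{\max}+\beta\epsilon_{\sw}/(1-\beta)$. The edges satisfy
\[
\Delta_x^{\sw}-\Delta_x=(Q^{\AIa,\sw}-Q^{\AIa})-(Q^{\AIf,\sw}-Q^{\AIf}),
\]
so the triangle inequality doubles this bound and yields \eqref{eq:sw_edge_gap} uniformly in $(s,c,q)$.

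\textbf{Sign agreement.} The final step mirrors \Cref{prop:finite_policy_stabilization}. If \eqref{eq:sw_margin_condition} holds, then $|\Delta_x^{\sw}-\Delta_x|<|\Delta_x|$, so $\Delta_x^{\sw}$ is nonzero and has the same strict sign as $\Delta_x$. By \Cref{prop:opt_gate} and \Cref{prop:sw_sign_rule}, both policies are governed by the sign of their respective edges, under the same tie-breaking toward \AIa, so they choose the same action.

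\textbf{Main obstacle.} The only delicate point is the first step, specifically that the welfare operator is well defined and contractive, so that $W_x$ exists and is the bounded fixed point. This follows from \Cref{assm:sw_bounded}, since $w_a$ and $w_{\sub}$ are bounded, exactly as in \Cref{proof:opt_gate}. I would also make sure the max-difference inequality is applied pointwise in $q$ before integrating.
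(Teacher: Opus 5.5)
Your proposal is correct and matches the paper's proof step for step. Both bound the operator gap by $\epsilon_{\sw}$, apply a fixed-point triangle inequality, bound the action-value gaps by $U_{\max}+\beta\|W_x-V_x\|_\infty$ and double them for the edge, and finish with a strict-sign argument. The only cosmetic difference is that you evaluate the operator gap at $W_x$ and use the contraction of $\mathcal T_x$, whereas the paper evaluates it at $V_x$ and uses the contraction of $\mathcal T_x^{\sw}$; either choice works.
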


\begin{proof}
Let $\mathcal T_x$ denote the revenue Bellman operator from \Cref{proof:opt_gate}, and let $\mathcal T_x^{\sw}$ denote the welfare Bellman operator defined by \eqref{eq:sw_bellman}.

Fix any bounded function $F$. For pre-subscription states,
\begin{align*}
&\big|(\mathcal T_x^{\sw}F)(s,c,0)-(\mathcal T_xF)(s,c,0)\big|
\\
&\le
\mathbb E_{q\sim \mathcal P(\cdot\mid x)}
\Big[
\max_{a\in\{\AIa,\AIf\}}
\big|u_{\user}(x,q,a)\big|
\Big]
\\
&\le U_{\max},
\end{align*}
because the two operators differ only through the current-period user-benefit term. For subscribed states,
\[
\big|(\mathcal T_x^{\sw}F)(s,c,1)-(\mathcal T_xF)(s,c,1)\big|
=
\big|w_{\sub}(x,p)-r_{\sub}(p)\big|
\le
\Delta_{\sub}^{\max}.
\]
Hence
\[
\big\|\mathcal T_x^{\sw}F-\mathcal T_xF\big\|_{\infty}
\le
\epsilon_{\sw}.
\]

Using the fixed-point identities $W_x=\mathcal T_x^{\sw}W_x$ and $V_x=\mathcal T_xV_x$, together with the $\beta$-contraction property of $\mathcal T_x^{\sw}$, we obtain
\begin{align*}
\big\|W_x-V_x\big\|_{\infty}
&=
\big\|\mathcal T_x^{\sw}W_x-\mathcal T_xV_x\big\|_{\infty}
\\
&\le
\big\|\mathcal T_x^{\sw}W_x-\mathcal T_x^{\sw}V_x\big\|_{\infty}
+
\big\|\mathcal T_x^{\sw}V_x-\mathcal T_xV_x\big\|_{\infty}
\\
&\le
\beta\big\|W_x-V_x\big\|_{\infty}+\epsilon_{\sw},
\end{align*}
which yields \eqref{eq:sw_value_gap}.

Next fix $(s,c,q)$ and $a\in\{\AIa,\AIf\}$. Since the welfare and revenue problems share the same transition law,
\begin{align*}
&Q_x^{a,\sw}(s,c,q)-Q_x^{a}(s,c,q)
\\
&=
u_{\user}(x,q,a)
+
\beta\,
\mathbb E_{Y\sim \Pr(\cdot\mid x,q,a)}
\Big[
\rho_x\big(S^+_{a,Y},C^+_{a,Y},Z^+_{a,Y}\big)
\nonumber
\\
&\qquad\qquad\times
\big(
W_x\big(S^+_{a,Y},C^+_{a,Y},Z^+_{a,Y}\big)
-V_x\big(S^+_{a,Y},C^+_{a,Y},Z^+_{a,Y}\big)
\big)
\Big].
\end{align*}
Therefore,
\[
\big|Q_x^{a,\sw}(s,c,q)-Q_x^{a}(s,c,q)\big|
\le
U_{\max}+\beta\big\|W_x-V_x\big\|_{\infty}.
\]
Applying this bound to both actions gives
\begin{align*}
&\big|\Delta_x^{\sw}(s,c,q)-\Delta_x(s,c,q)\big|
\\
&\le
\big|Q_x^{\AIa,\sw}(s,c,q)-Q_x^{\AIa}(s,c,q)\big|
+
\big|Q_x^{\AIf,\sw}(s,c,q)-Q_x^{\AIf}(s,c,q)\big|
\\
&\le
2U_{\max}+2\beta\big\|W_x-V_x\big\|_{\infty}
\\
&\le
2U_{\max}+\frac{2\beta}{1-\beta}\,\epsilon_{\sw},
\end{align*}
which proves \eqref{eq:sw_edge_gap}.

Finally, if \eqref{eq:sw_margin_condition} holds, then \eqref{eq:sw_edge_gap} implies that $\Delta_x^{\sw}(s,c,q)$ has the same sign as $\Delta_x(s,c,q)$. Because both policies break ties toward \AIa, they choose the same action at $(s,c,q)$.
\end{proof}

This result localizes where welfare can matter. If the revenue edge already strongly favors one action, adding user-side value does not overturn the decision. Disagreement can arise only in states where the generative engine is already close to indifferent between \AIa\ and \AIf.

\subsection{Learning DP Primitives and Robustness of the Plug-in Design Policy}
\label{sec:dp-learning}

This subsection shows how the dynamic program can be implemented with logged data while staying fully aligned with the main framework. We keep the same state variables, the same engage-versus-outside choice, and the same design policy problem. The only goal here is to clarify which primitives must be learned and how estimation error affects the learned design policy.

\textbf{Logged data and overlap.}
Fix a user context $x$ and focus on pre-subscription decisions. Suppose we observe logged tuples
\[
(x_t,\; q_t,\; s_t,\; c_t,\; a_t,\; Y_t,\; \Pi_t,\; S_{t+1},\; C_{t+1},\; Z_{t+1},\; R_{t+1}),
\]
where $a_t\in\{\AIa,\AIf\}$ is the displayed response, $Y_t\in\{0,1\}$ is the user's engage-versus-outside decision, $\Pi_t$ is the realized one-period payoff to the generative engine, and $R_{t+1}\in\{0,1\}$ indicates whether the user returns in period $t+1$. The post-update state $(S_{t+1},C_{t+1},Z_{t+1})$ can be observed directly or reconstructed from $(q_t,s_t,c_t,a_t,Y_t)$ using the model's state-update rules.

\begin{assumption}[Exploration/overlap]
\label{assm:overlap}
There exists $\eta\in(0,1/2]$ such that for all $(x,q,s,c)$ in the support,
\begin{align}
&\mathbb{P}(a_t=\AIa\mid x_t=x,q_t=q,s_t=s,c_t=c)\ge \eta,
\nonumber
\\
\qquad
&\mathbb{P}(a_t=\AIf\mid x_t=x,q_t=q,s_t=s,c_t=c)\ge \eta.
\end{align}
\end{assumption}

\textbf{Primitives to estimate.}
The Bellman equation uses three one-step objects:
\begin{align}
m_a(x,q)
&:=
\mathbb{P}(Y_t=1\mid x_t=x,\; q_t=q,\; a_t=a),
\label{eq:learn_m_def}
\\
r_a(x,q)
&:=
\mathbb{E}\!\left[\Pi_t \mid x_t=x,\; q_t=q,\; a_t=a\right],
\label{eq:learn_r_def}
\\
\rho_x(s,c,z)
&:=
\mathbb{P}\!\left(R_{t+1}=1 \mid x_t=x,\; S_{t+1}=s,\; C_{t+1}=c,\; Z_{t+1}=z\right).
\label{eq:learn_rho_def}
\end{align}
Here $m_a(x,q)$ is the engage probability from the main text, $r_a(x,q)$ is the generative engine's expected one-period payoff, and $\rho_x(s,c,z)$ is the retention rule evaluated at the post-update state. We treat the query distribution $\mathcal P(\cdot\mid x)$, the state-update mappings in \eqref{eq:post_update_mappings_dp}, and the subscription rule as known model primitives. If those objects are also estimated, the same arguments below continue to hold after adding their error terms.

Let $\widehat m_a$, $\widehat r_a$, and $\widehat \rho_x$ denote fitted estimates of \eqref{eq:learn_m_def}--\eqref{eq:learn_rho_def}. Define the fitted engage probabilities
\[
\widehat \pi_a(1\mid x,q):=\widehat m_a(x,q),
\qquad
\widehat \pi_a(0\mid x,q):=1-\widehat m_a(x,q),
\]
and analogously $\pi_a(1\mid x,q):=m_a(x,q)$ and $\pi_a(0\mid x,q):=1-m_a(x,q)$.

\textbf{Plug-in Bellman equation and learned design policy.}
For any bounded function $F$, define the estimated Bellman operator by
\begin{align}
(\widehat{\mathcal T}_x F)(s,c,1)
&:=
r_{\sub}(p)+\beta F(s,c,1),
\nonumber
\\
(\widehat{\mathcal T}_x F)(s,c,0)
&:=
\mathbb E_{q\sim \mathcal P(\cdot\mid x)}
\Bigg[
\max_{a\in\{\AIa,\AIf\}}
\Big\{
\widehat r_a(x,q)
\nonumber
\\
&\qquad\qquad
+\beta
\sum_{y\in\{0,1\}}
\widehat \pi_a(y\mid x,q)\,
\widehat \rho_x\big(S^+_{a,y},C^+_{a,y},Z^+_{a,y}\big)\,
F\big(S^+_{a,y},C^+_{a,y},Z^+_{a,y}\big)
\Big\}
\Bigg].
\label{eq:learn_bellman}
\end{align}
Under admissible fitted primitives, $\widehat{\mathcal T}_x$ is a $\beta$-contraction and therefore has a unique bounded fixed point, denoted by $\widehat V_x$.

Define the plug-in action values by
\begin{align}
\widehat Q_x^{a}(s,c,q)
&:=
\widehat r_a(x,q)
\nonumber
\\
&\quad
+\beta
\sum_{y\in\{0,1\}}
\widehat \pi_a(y\mid x,q)\,
\widehat \rho_x\big(S^+_{a,y},C^+_{a,y},Z^+_{a,y}\big)\,
\widehat V_x\big(S^+_{a,y},C^+_{a,y},Z^+_{a,y}\big),
\label{eq:learn_Qhat_def}
\end{align}
and the plug-in \AIa\ edge by
\begin{equation}
\widehat \Delta_x(s,c,q)
:=
\widehat Q_x^{\AIa}(s,c,q)-\widehat Q_x^{\AIf}(s,c,q).
\label{eq:learn_deltahat_def}
\end{equation}
The learned design policy uses the same tie-breaking rule as the main text:
\begin{equation}
\widehat g_x(s,c,q)=\AIa
\quad\Longleftrightarrow\quad
\widehat \Delta_x(s,c,q)\ge 0.
\label{eq:learn_policy_def}
\end{equation}

\begin{proposition}[Plug-in estimation error yields small value and policy error]
\label{prop:learn_robustness}
Fix $x$. Suppose the fitted primitives satisfy
\[
0\le \widehat m_a(x,q)\le 1,
\qquad
0\le \widehat \rho_x(s,c,z)\le 1,
\]
and that there exists $R_{\max}<\infty$ such that
\[
\sup_{a,q}\big|r_a(x,q)\big|
\le R_{\max},
\qquad
\sup_{a,q}\big|\widehat r_a(x,q)\big|
\le R_{\max},
\qquad
\big|r_{\sub}(p)\big|\le R_{\max}.
\]
Define
\begin{align}
\epsilon_m
&:=
\sup_{a,q}
\big|\widehat m_a(x,q)-m_a(x,q)\big|,
\nonumber
\\
\epsilon_r
&:=
\sup_{a,q}
\big|\widehat r_a(x,q)-r_a(x,q)\big|,
\nonumber
\\
\epsilon_{\rho}
&:=
\sup_{s,c,z}
\big|\widehat \rho_x(s,c,z)-\rho_x(s,c,z)\big|,
\label{eq:learn_eps_primitives}
\end{align}
and let
\[
V_{\max}:=\frac{R_{\max}}{1-\beta},
\qquad
B_x:=\epsilon_r+\beta\big(2\epsilon_m+\epsilon_{\rho}\big)V_{\max}.
\]
Then the following hold:
\begin{enumerate}[noitemsep, topsep=0pt]
\item
\[
\big\|\widehat V_x-V_x\big\|_{\infty}
\le
\frac{B_x}{1-\beta}.
\label{eq:learn_value_bound}
\]
\item
\[
\sup_{s,c,q}
\big|
\widehat \Delta_x(s,c,q)-\Delta_x(s,c,q)
\big|
\le
\frac{2B_x}{1-\beta}.
\label{eq:learn_edge_bound}
\]
\item If at some $(s,c,q)$,
\[
\big|\Delta_x(s,c,q)\big|>\frac{2B_x}{1-\beta},
\label{eq:learn_margin_condition}
\]
then the learned and optimal design policies agree at $(s,c,q)$:
\[
\widehat g_x(s,c,q)=g_x^*(s,c,q).
\label{eq:learn_policy_stabilize}
\]
\item Let $V_x^{\widehat g}$ denote the value induced by following the learned design policy $\widehat g_x$ thereafter. Then for all pre-subscription states $(s,c,0)$,
\[
0
\le
V_x(s,c,0)-V_x^{\widehat g}(s,c,0)
\le
\frac{2B_x}{(1-\beta)^2}.
\label{eq:learn_regret}
\]
\end{enumerate}
\end{proposition}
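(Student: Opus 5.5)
The plan is a standard simulation-lemma argument built on the $\beta$-contraction structure established in the proof of \Cref{prop:opt_gate}, run in four steps matching the four claims.

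\textbf{Step 1: operator perturbation bound.} Fix any bounded $F$ with $\|F\|_\infty\le V_{\max}$. I will show $\|\widehat{\mathcal T}_xF-\mathcal T_xF\|_\infty\le B_x$. On subscribed states the two operators coincide. On pre-subscription states, use $|\max_a u_a-\max_a v_a|\le\max_a|u_a-v_a|$ pointwise in $q$ to reduce to a single action $a$. The flow terms then differ by at most $\epsilon_r$. For the continuation term, write
\[
\sum_y\big(\widehat\pi_a(y)\widehat\rho_y-\pi_a(y)\rho_y\big)F_y
=\sum_y\big(\widehat\pi_a(y)-\pi_a(y)\big)\widehat\rho_yF_y+\sum_y\pi_a(y)\big(\widehat\rho_y-\rho_y\big)F_y ,
\]
where the subscript $y$ denotes evaluation at $(S^+_{a,y},C^+_{a,y},Z^+_{a,y})$. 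The first sum is at most $\sum_y|\widehat\pi_a(y)-\pi_a(y)|\,V_{\max}=2|\widehat m_a-m_a|V_{\max}\le2\epsilon_mV_{\max}$, using $\widehat\rho\in[0,1]$. The second is at most $\epsilon_\rho V_{\max}$, since $\pi_a(\cdot)$ is a probability vector. Multiplying by $\beta$ gives exactly $B_x$.

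The precondition $\|V_x\|_\infty\le V_{\max}$ follows as in the proof of \Cref{prop:finite_value_convergence}. That argument uses the reward bound together with $\pi_a$ and $\rho$ being probabilities.

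\textbf{Step 2: value bound (i).} Compute
\[
\|\widehat V_x-V_x\|_\infty\le\|\widehat{\mathcal T}_x\widehat V_x-\widehat{\mathcal T}_xV_x\|_\infty+\|\widehat{\mathcal T}_xV_x-\mathcal T_xV_x\|_\infty\le\beta\|\widehat V_x-V_x\|_\infty+B_x ,
\]
applying Step 1 with $F=V_x$. Rearranging gives (i). This requires $\widehat{\mathcal T}_x$ to be a $\beta$-contraction, which holds by the same computation as in the proof of \Cref{prop:opt_gate} because $\widehat\pi_a$ and $\widehat\rho$ lie in $[0,1]$. The choice $F=V_x$ matters: it avoids needing a bound on $\|\widehat V_x\|_\infty$.

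\textbf{Step 3: edge bound (ii) and policy agreement (iii).} For each action $a$, split
\[
\widehat Q^a_x-Q^a_x=\big[\widehat Q^a_x\text{ with }V_x\text{ in place of }\widehat V_x\big]-Q^a_x+\beta\sum_y\widehat\pi_a(y)\widehat\rho_y\big(\widehat V_x-V_x\big)_y .
\]
The bracketed difference is at most $B_x$ by the Step 1 computation. The last term is at most $\beta B_x/(1-\beta)$. Hence $|\widehat Q^a_x-Q^a_x|\le B_x/(1-\beta)$, and differencing the two actions gives (ii). Claim (iii) then follows as in \Cref{prop:finite_policy_stabilization}: the margin condition forces $\widehat\Delta_x$ to have the same strict sign as $\Delta_x$, and the two policies share a tie-breaking rule.

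\textbf{Step 4: regret bound (iv).} The lower bound $V_x\ge V_x^{\widehat g}$ is immediate from optimality of $V_x$. For the upper bound, let $\mathcal T^{\widehat g}_x$ be the true policy-evaluation operator; it is a $\beta$-contraction with fixed point $V_x^{\widehat g}$. The key inequality is the per-query bound
\[
Q^{g^*}_x-Q^{\widehat g}_x\le\big(Q^{g^*}_x-\widehat Q^{g^*}_x\big)+\big(\widehat Q^{\widehat g}_x-Q^{\widehat g}_x\big)\le 2\,\frac{B_x}{1-\beta},
\]
which holds because $\widehat g$ maximizes $\widehat Q_x$, so $\widehat Q^{g^*}_x\le\widehat Q^{\widehat g}_x$. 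Integrating over $q$ gives $\|\mathcal T_xV_x-\mathcal T^{\widehat g}_xV_x\|_\infty\le2B_x/(1-\beta)$. Then
\[
V_x-V_x^{\widehat g}=\big(\mathcal T_xV_x-\mathcal T^{\widehat g}_xV_x\big)+\big(\mathcal T^{\widehat g}_xV_x-\mathcal T^{\widehat g}_xV_x^{\widehat g}\big),
\]
and the contraction property of $\mathcal T^{\widehat g}_x$ yields $\|V_x-V_x^{\widehat g}\|_\infty\le2B_x/(1-\beta)^2$.

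The main obstacle is the bookkeeping in Step 1. Getting exactly $2\epsilon_m$ requires the total-variation form $\sum_y|\widehat\pi_a(y)-\pi_a(y)|=2|\widehat m_a-m_a|$. The decomposition must also be ordered so that the factor multiplying $\widehat\rho-\rho$ is the true $\pi_a$ and the factor multiplying $\widehat\pi-\pi$ is the estimated $\widehat\rho$, both of which lie in $[0,1]$.
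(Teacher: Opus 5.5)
Your proposal is correct and follows the paper's proof essentially step for step: the same operator-perturbation bound $\|\widehat{\mathcal T}_xF-\mathcal T_xF\|_\infty\le B_x$ applied at $F=V_x$, the same contraction rearrangement for (i), the same per-action bound $|\widehat Q^a_x-Q^a_x|\le B_x/(1-\beta)$ for (ii)--(iii), and the same policy-evaluation contraction for (iv). The differences are cosmetic. First, the paper splits the continuation term the other way, as $\widehat\pi_a(\widehat\rho-\rho)F+(\widehat\pi_a-\pi_a)\rho F$, and bounds the second piece by $|\widehat m_a-m_a|\,|g_{a,1}-g_{a,0}|\le 2\epsilon_m V_{\max}$; so your claim that the ordering is forced is too strong, since both splits work because $\pi_a,\widehat\pi_a$ are probability vectors and $\rho,\widehat\rho\in[0,1]$. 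Second, in (iv) the paper bounds the per-query loss by $|\Delta_x|\le\sup|\widehat\Delta_x-\Delta_x|$ at points where the policies disagree, rather than by your $\widehat Q$-sandwich, which gives the same constant.
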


\begin{proof}
Let $\mathcal T_x$ denote the true Bellman operator. Because both the true and fitted retention probabilities lie in $[0,1]$, both $\mathcal T_x$ and $\widehat{\mathcal T}_x$ are $\beta$-contractions on bounded functions.

First, by the bounded payoff assumption,
\[
\|V_x\|_{\infty}\le V_{\max}.
\]
Now fix any bounded $F$ with $\|F\|_{\infty}\le V_{\max}$. For a pre-subscription state $(s,c,0)$, the difference between the fitted and true one-step maximands is bounded by the largest action-level discrepancy. For any $a\in\{\AIa,\AIf\}$ and query $q$,
\begin{align*}
&\Big|
\widehat r_a(x,q)
+
\beta
\sum_{y}
\widehat \pi_a(y\mid x,q)\,
\widehat \rho_x\big(S^+_{a,y},C^+_{a,y},Z^+_{a,y}\big)\,
F\big(S^+_{a,y},C^+_{a,y},Z^+_{a,y}\big)
\\
&\qquad
-
r_a(x,q)
-
\beta
\sum_{y}
\pi_a(y\mid x,q)\,
\rho_x\big(S^+_{a,y},C^+_{a,y},Z^+_{a,y}\big)\,
F\big(S^+_{a,y},C^+_{a,y},Z^+_{a,y}\big)
\Big|
\\
&\le
\epsilon_r
+
\beta
\sum_{y}
\widehat \pi_a(y\mid x,q)\,
\Big|
\widehat \rho_x\big(S^+_{a,y},C^+_{a,y},Z^+_{a,y}\big)
-
\rho_x\big(S^+_{a,y},C^+_{a,y},Z^+_{a,y}\big)
\Big|
\cdot
\|F\|_{\infty}
\\
&\qquad
+
\beta
\Big|
\sum_{y}
\big(\widehat \pi_a(y\mid x,q)-\pi_a(y\mid x,q)\big)\,
\rho_x\big(S^+_{a,y},C^+_{a,y},Z^+_{a,y}\big)\,
F\big(S^+_{a,y},C^+_{a,y},Z^+_{a,y}\big)
\Big|
\\
&\le
\epsilon_r+\beta\epsilon_{\rho}V_{\max}
+
\beta
\big|
\widehat m_a(x,q)-m_a(x,q)
\big|
\cdot
\big|
g_{a,1}(q)-g_{a,0}(q)
\big|,
\end{align*}
where
\[
g_{a,y}(q)
:=
\rho_x\big(S^+_{a,y},C^+_{a,y},Z^+_{a,y}\big)\,
F\big(S^+_{a,y},C^+_{a,y},Z^+_{a,y}\big).
\]
Since $|g_{a,y}(q)|\le \|F\|_{\infty}\le V_{\max}$, we have
\[
\big|g_{a,1}(q)-g_{a,0}(q)\big|\le 2V_{\max},
\]
so the last display is bounded by
\[
\epsilon_r+\beta\big(2\epsilon_m+\epsilon_{\rho}\big)V_{\max}
=
B_x.
\]
Taking the maximum over $a$ and the expectation over $q$ yields
\[
\big\|\widehat{\mathcal T}_xF-\mathcal T_xF\big\|_{\infty}
\le
B_x.
\]

Using the fixed-point identities $\widehat V_x=\widehat{\mathcal T}_x\widehat V_x$ and $V_x=\mathcal T_xV_x$, together with the contraction property of $\widehat{\mathcal T}_x$, we obtain
\begin{align*}
\big\|\widehat V_x-V_x\big\|_{\infty}
&=
\big\|\widehat{\mathcal T}_x\widehat V_x-\mathcal T_xV_x\big\|_{\infty}
\\
&\le
\big\|\widehat{\mathcal T}_x\widehat V_x-\widehat{\mathcal T}_xV_x\big\|_{\infty}
+
\big\|\widehat{\mathcal T}_xV_x-\mathcal T_xV_x\big\|_{\infty}
\\
&\le
\beta\big\|\widehat V_x-V_x\big\|_{\infty}+B_x,
\end{align*}
which proves \eqref{eq:learn_value_bound}.

Next fix $(s,c,q)$ and $a\in\{\AIa,\AIf\}$. By adding and subtracting terms exactly as above,
\begin{align*}
\big|\widehat Q_x^{a}(s,c,q)-Q_x^{a}(s,c,q)\big|
&\le
\epsilon_r
+
\beta\big\|\widehat V_x-V_x\big\|_{\infty}
+
\beta\big(2\epsilon_m+\epsilon_{\rho}\big)V_{\max}
\\
&=
B_x+\beta\big\|\widehat V_x-V_x\big\|_{\infty}
\\
&\le
\frac{B_x}{1-\beta}.
\end{align*}
Applying this bound to both actions gives
\begin{align*}
\big|
\widehat \Delta_x(s,c,q)-\Delta_x(s,c,q)
\big|
&\le
\big|\widehat Q_x^{\AIa}(s,c,q)-Q_x^{\AIa}(s,c,q)\big|
+
\big|\widehat Q_x^{\AIf}(s,c,q)-Q_x^{\AIf}(s,c,q)\big|
\\
&\le
\frac{2B_x}{1-\beta},
\end{align*}
which proves \eqref{eq:learn_edge_bound}.

If \eqref{eq:learn_margin_condition} holds, then \eqref{eq:learn_edge_bound} implies that $\widehat \Delta_x(s,c,q)$ has the same sign as $\Delta_x(s,c,q)$. Because the learned and optimal design policies use the same tie-breaking rule in favor of \AIa, they choose the same action, proving \eqref{eq:learn_policy_stabilize}.

Finally, let
\[
\epsilon_{\Delta}:=
\sup_{s,c,q}
\big|
\widehat \Delta_x(s,c,q)-\Delta_x(s,c,q)
\big|.
\]
At any query-state point where the learned and optimal policies disagree, the same-sign argument above implies
\[
\big|\Delta_x(s,c,q)\big|\le \epsilon_{\Delta}.
\]
Because the action set is binary, the loss from choosing $\widehat g_x$ instead of $g_x^*$ at that point is at most $\epsilon_{\Delta}$ in true action value. Therefore, for every pre-subscription state,
\[
V_x(s,c,0)
\le
(\mathcal T_x^{\widehat g}V_x)(s,c,0)+\epsilon_{\Delta},
\]
where $\mathcal T_x^{\widehat g}$ is the Bellman operator induced by the learned design policy. Since $V_x^{\widehat g}=\mathcal T_x^{\widehat g}V_x^{\widehat g}$ and $\mathcal T_x^{\widehat g}$ is also a $\beta$-contraction,
\begin{align*}
\big\|V_x-V_x^{\widehat g}\big\|_{\infty}
&\le
\epsilon_{\Delta}
+
\beta\big\|V_x-V_x^{\widehat g}\big\|_{\infty},
\end{align*}
which yields
\[
\big\|V_x-V_x^{\widehat g}\big\|_{\infty}
\le
\frac{\epsilon_{\Delta}}{1-\beta}
\le
\frac{2B_x}{(1-\beta)^2}.
\]
This proves \eqref{eq:learn_regret}.
\end{proof}

This proposition shows that learning matters most near indifference. If the true \AIa\ edge is already large in magnitude, moderate estimation error does not change the design choice. Estimation error can overturn the policy only in states where the generative engine is already close to indifferent between \AIa\ and \AIf.

\begin{corollary}[Consistency of the learned design policy]
\label{cor:learn_consistency}
Fix $x$ and consider a sequence of fitted primitives
\[
\big(\widehat m_{a,n},\widehat r_{a,n},\widehat \rho_{x,n}\big)_{n\ge 1}
\]
satisfying the admissibility conditions in \Cref{prop:learn_robustness}. If
\[
\epsilon_{m,n}\to 0,
\qquad
\epsilon_{r,n}\to 0,
\qquad
\epsilon_{\rho,n}\to 0,
\]
then the associated plug-in value functions, action edges, and learned design policies satisfy
\begin{align}
\big\|\widehat V_{x,n}-V_x\big\|_{\infty}
&\to 0,
\label{eq:learn_value_consistency}
\\
\sup_{s,c,q}
\big|
\widehat \Delta_{x,n}(s,c,q)-\Delta_x(s,c,q)
\big|
&\to 0,
\label{eq:learn_edge_consistency}
\\
\sup_{s,c}
\big(
V_x(s,c,0)-V_x^{\widehat g_n}(s,c,0)
\big)
&\to 0.
\label{eq:learn_policy_value_consistency}
\end{align}
Moreover, for any fixed $(s,c,q)$ such that
\[
\Delta_x(s,c,q)\neq 0,
\]
the learned and optimal design policies eventually agree at that point:
\[
\widehat g_{x,n}(s,c,q)=g_x^*(s,c,q)
\qquad
\text{for all sufficiently large }n.
\label{eq:learn_pointwise_consistency}
\]
\end{corollary}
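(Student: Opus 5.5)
The corollary follows from \Cref{prop:learn_robustness} applied at each $n$; the only work is to check that the bounds there go to zero. Let $B_{x,n}:=\epsilon_{r,n}+\beta(2\epsilon_{m,n}+\epsilon_{\rho,n})V_{\max}$. Since $V_{\max}=R_{\max}/(1-\beta)$ is a fixed constant, the uniform bound $R_{\max}$ must hold for every $n$. This is part of the admissibility conditions assumed along the sequence. Under that condition, the hypotheses $\epsilon_{m,n},\epsilon_{r,n},\epsilon_{\rho,n}\to 0$ give $B_{x,n}\to 0$.

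First, part (i) of \Cref{prop:learn_robustness} gives $\|\widehat V_{x,n}-V_x\|_\infty\le B_{x,n}/(1-\beta)\to 0$, which is \eqref{eq:learn_value_consistency}. Second, part (ii) gives $\sup_{s,c,q}|\widehat\Delta_{x,n}-\Delta_x|\le 2B_{x,n}/(1-\beta)\to 0$, which is \eqref{eq:learn_edge_consistency}. Third, part (iv) gives $0\le V_x(s,c,0)-V_x^{\widehat g_n}(s,c,0)\le 2B_{x,n}/(1-\beta)^2$ uniformly in $(s,c)$. Taking the supremum and letting $n\to\infty$ yields \eqref{eq:learn_policy_value_consistency}. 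In each case the right-hand side is a fixed multiple of $B_{x,n}$ that does not depend on $(s,c,q)$, so convergence is uniform.

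For pointwise agreement, fix $(s,c,q)$ with $\Delta_x(s,c,q)\neq 0$ and set $\eta:=|\Delta_x(s,c,q)|>0$. Because $2B_{x,n}/(1-\beta)\to 0$, there is $N$ such that $2B_{x,n}/(1-\beta)<\eta$ for all $n\ge N$. For such $n$, the margin condition in part (iii) of \Cref{prop:learn_robustness} holds at $(s,c,q)$, so $\widehat g_{x,n}(s,c,q)=g_x^*(s,c,q)$.

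I expect no real obstacle. The one point to state explicitly is that the constant $R_{\max}$, and hence $V_{\max}$, must not depend on $n$. The requirement $\Delta_x\neq0$ matters only at the last step. At exact ties, arbitrarily small estimation errors can flip the sign of $\widehat\Delta_{x,n}$ infinitely often, so pointwise agreement cannot be guaranteed there, even though the value loss still vanishes.
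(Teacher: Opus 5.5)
Your proposal is correct and follows the paper's proof: you define the same $B_{x,n}$, note $B_{x,n}\to 0$, read off the three limits from parts (i), (ii), and (iv) of \Cref{prop:learn_robustness}, and get eventual pointwise agreement once $2B_{x,n}/(1-\beta)<|\Delta_x(s,c,q)|$, which is the paper's eventual sign-matching argument phrased through part (iii). Your remark that $R_{\max}$, and hence $V_{\max}$, must not depend on $n$ is careful extra bookkeeping that the paper leaves implicit; it holds for all sufficiently large $n$ anyway, since the true payoffs are bounded and $\epsilon_{r,n}\to 0$.
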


\begin{proof}
Let
\[
B_{x,n}
:=
\epsilon_{r,n}
+
\beta\big(2\epsilon_{m,n}+\epsilon_{\rho,n}\big)V_{\max}.
\]
By assumption, $B_{x,n}\to 0$. Therefore \Cref{prop:learn_robustness} implies
\[
\big\|\widehat V_{x,n}-V_x\big\|_{\infty}
\le
\frac{B_{x,n}}{1-\beta}\to 0,
\]
which proves \eqref{eq:learn_value_consistency}, and
\[
\sup_{s,c,q}
\big|
\widehat \Delta_{x,n}(s,c,q)-\Delta_x(s,c,q)
\big|
\le
\frac{2B_{x,n}}{1-\beta}\to 0,
\]
which proves \eqref{eq:learn_edge_consistency}. The value bound \eqref{eq:learn_regret} from \Cref{prop:learn_robustness} then yields \eqref{eq:learn_policy_value_consistency}.

Finally, fix $(s,c,q)$ with $\Delta_x(s,c,q)\neq 0$. Since \eqref{eq:learn_edge_consistency} gives
\[
\widehat \Delta_{x,n}(s,c,q)\to \Delta_x(s,c,q),
\]
the sign of $\widehat \Delta_{x,n}(s,c,q)$ must eventually match the sign of $\Delta_x(s,c,q)$. Because both policies break ties toward \AIa, they eventually choose the same action, proving \eqref{eq:learn_pointwise_consistency}.
\end{proof}

This corollary gives the clean asymptotic implication. As the estimated primitives become uniformly accurate, the plug-in value function converges to the true value function, the learned \AIa\ edge converges to the true edge, and the learned design policy becomes value-optimal. Exact policy agreement is immediate away from knife-edge states where $\Delta_x(s,c,q)=0$.

\subsection{Extra Simulation Plots}
\label[appsec]{app:extra_simu}

\subsubsection{Optimal Policy Characterization}
\label[appsec]{app:extra_simu_policy}

\begin{figure}[h]
  \centering
  \begin{subfigure}[t]{0.49\linewidth}
    \centering
    \includegraphics[width=\linewidth]{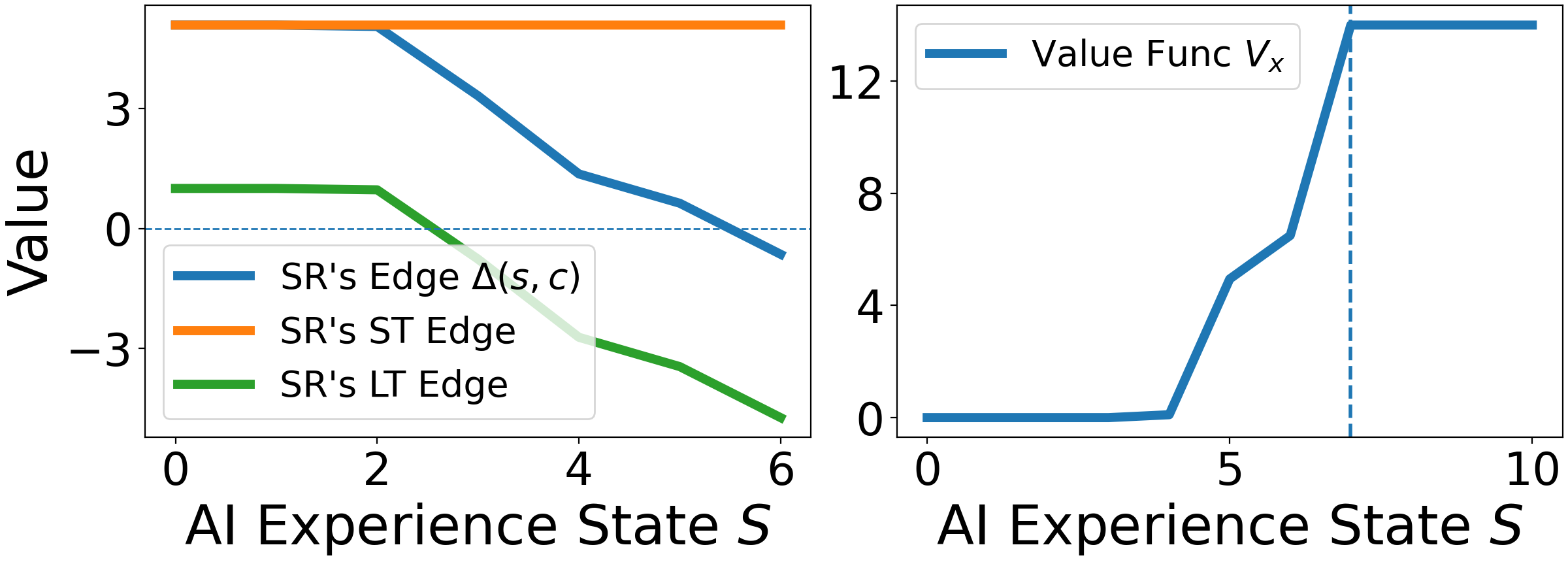}
    \caption{$\gamma=0.2$}
    \label{fig:type_pdf_query_g02}
  \end{subfigure}
  \hfill
  \begin{subfigure}[t]{0.49\linewidth}
    \centering
    \includegraphics[width=\linewidth]{archive-versions/PR-version/Plotting/fig_section41_threshold_gamma_08.png}
    \caption{$\gamma=0.8$}
    \label{fig:type_pdf_query_g08}
  \end{subfigure}

  \caption{
  Simulation illustration of the threshold rule in \Cref{prop:opt_gate} for a fixed context $x$ and \AIa \textit{experience state} $c$, under different user heterogeneity. 
  \emph{Left:} \textit{Ad sensitivity} $\gamma = 0.2$;
  \emph{Right:} \textit{AI sensitivity} $\gamma = 0.8$.
  Inside each subplot: \emph{Left:} Query-averaged $\AIa$ edge $\mathbb{E}_q\Delta_x(s,c,q)$ and its short-term (ST) and continuation-value (LT) components from \Cref{rem:tradeoff} as \textit{AI experience $s$} varies.
  \emph{Right:} pre-subscription value $V_x(s,c,0)$ as a function of $s$; the dashed line marks the subscription threshold $\tau(x,p,c)$, after which the value equals the subscribed level.
  }
  \label{fig:sim_threshold_gamma}
\end{figure}

\begin{figure}[h]
  \centering
  \begin{subfigure}[t]{0.49\linewidth}
    \centering
    \includegraphics[width=\linewidth]{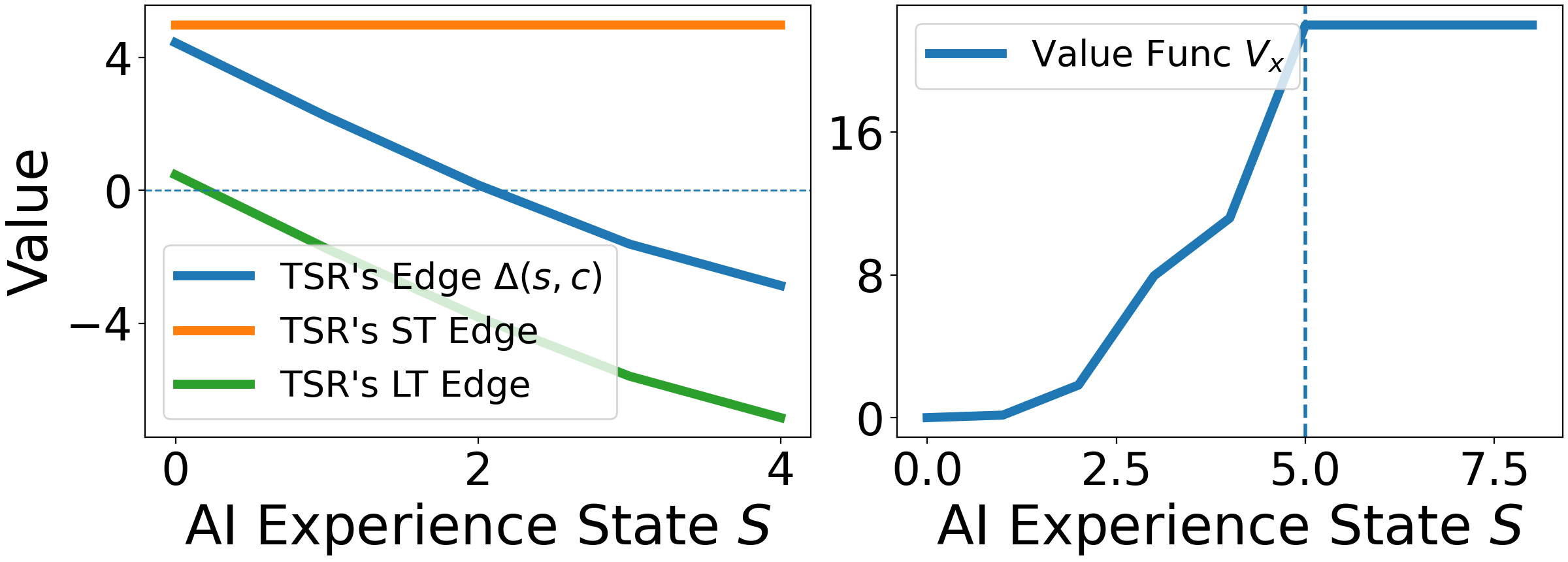}
    \caption{$\theta=0.2$}
    \label{fig:type_pdf_query_t02}
  \end{subfigure}
  \hfill
  \begin{subfigure}[t]{0.49\linewidth}
    \centering
    \includegraphics[width=\linewidth]{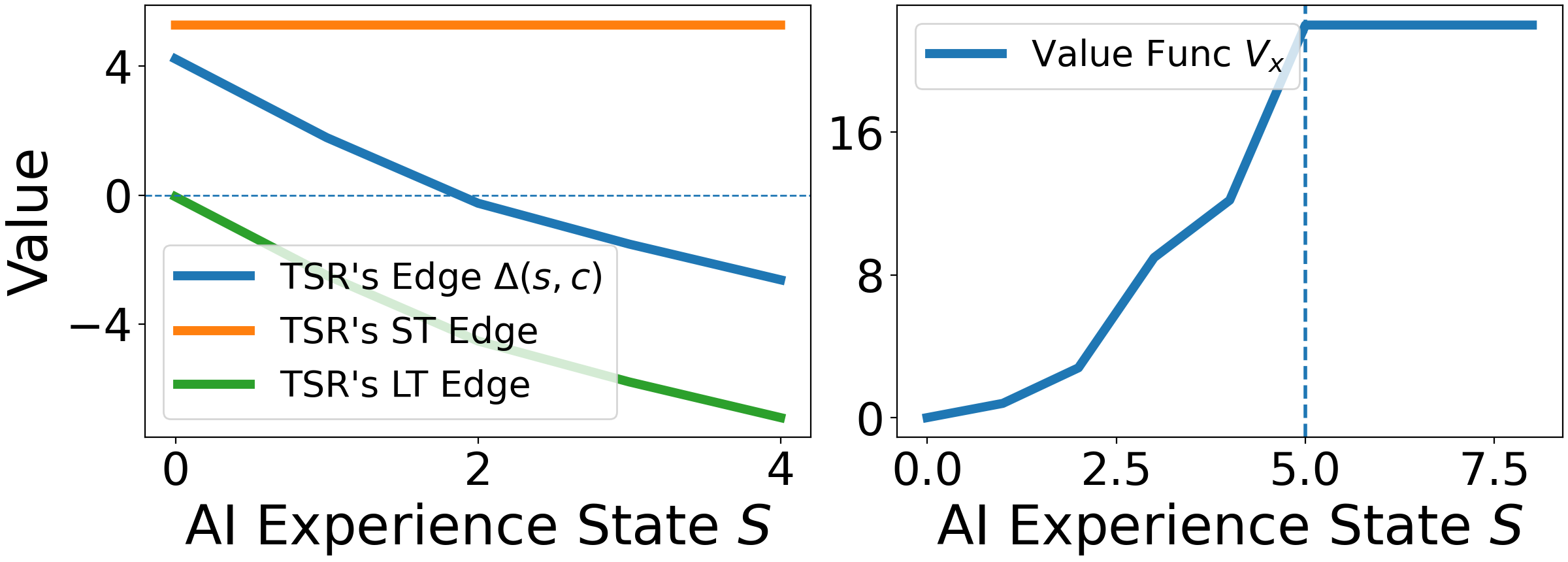}
    \caption{$\theta=0.8$}
    \label{fig:type_pdf_query_t08}
  \end{subfigure}

  \caption{
  Simulation illustration of the threshold rule in \Cref{prop:opt_gate} for a fixed context $x$ and \AIa \textit{experience state} $c$, under different user heterogeneity. 
  \emph{Left:} \textit{Ad reliance} $\theta = 0.2$;
  \emph{Right:} \textit{AI reliance} $\theta = 0.8$.
  Inside each subplot: \emph{Left:} Query-averaged $\AIa$ edge $\mathbb{E}_q\Delta_x(s,c,q)$ and its short-term (ST) and continuation-value (LT) components from \Cref{rem:tradeoff} as \textit{AI experience $s$} varies.
  \emph{Right:} pre-subscription value $V_x(s,c,0)$ as a function of $s$; the dashed line marks the subscription threshold $\tau(x,p,c)$, after which the value equals the subscribed level.
  }
  \label{fig:sim_threshold_theta}
\end{figure}

\subsubsection{Threshold Structure Under User-Query Types}
\label[appsec]{app:extra_simu_user_query}

\begin{figure}[h]
  \centering
  \includegraphics[width=0.50\linewidth]{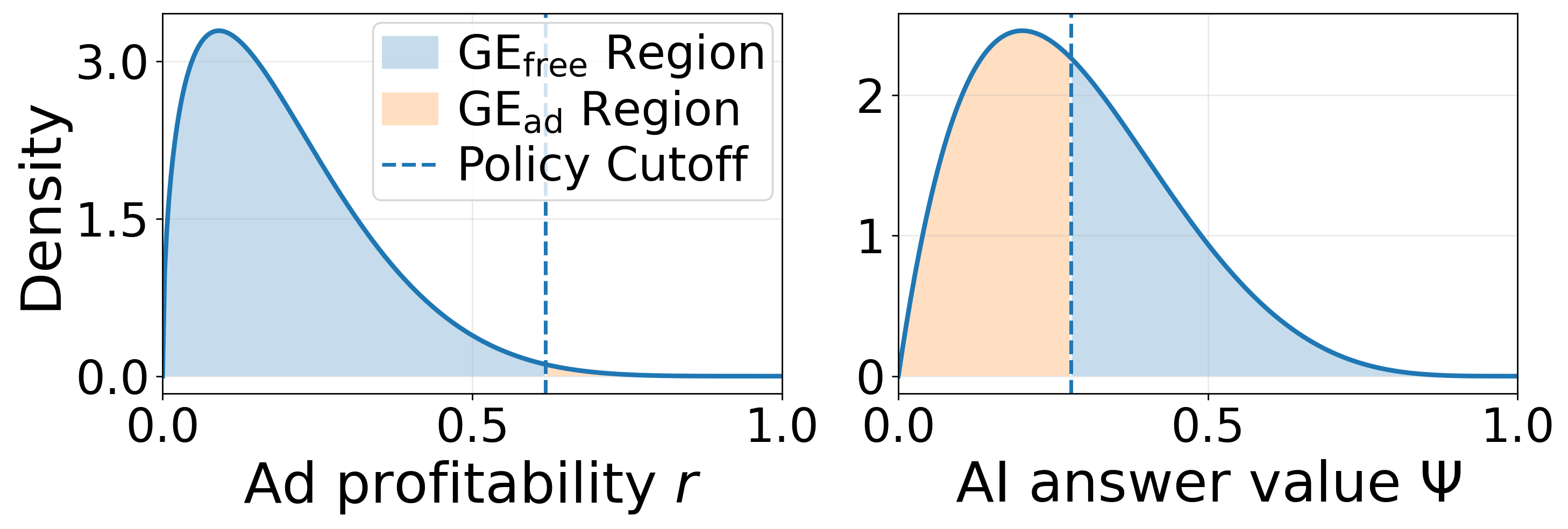}
  \vspace{-10pt}
  \caption{
  Query-type illustration with fixing user context $x$ and drawing query types $r, \psi$ from beta distributions.
  Shaded regions indicate which search outcome is optimal: $\AIf$ vs.\ $\AIa$.
  }
  \label{fig:type_pdf_query}
  \vspace{-10pt}
\end{figure}

\section{Notes for Experiment}\label{app:experiment_notes}
\subsection{Additional Experimental Setups}
\label[appsec]{app:sim_details}

\paragraph{Population, horizon, and shared randomness.}
We simulate one platform interacting with $N=500$ heterogeneous users over a horizon $T=20$ with discount factor $\beta=0.95$ (unless otherwise noted).
Each user $i$ draws a fixed latent type $\gamma_i$ i.i.d.\ from $\mathrm{Beta}$.
In each period $t$, the environment draws a query profitability signal $r_{it}$ and an AI-quality signal $\psi_{it}$ i.i.d.\ from $\mathrm{Unif}[0,1]^2$.
To make policy comparisons apples-to-apples, we evaluate all policies on the \emph{same} sampled population and the same realizations of
$\{(r_{it},\psi_{it})\}$, as well as the same uniform random draws used to realize consumption and retention outcomes.

\paragraph{User utilities and consumption.}
Given the platform’s displayed outcome $a_{it}\in\{\AIa,\AIf\}$, the user chooses whether to consume the displayed content versus an outside option (outside option utility normalized to $0$).
Consumption follows a logit rule with utility indices
\[
v_{\AIf}(\gamma;r,\psi)=w_\psi\,\psi+w_{r,\AIf}\,r,
\qquad
v_{\AIa}(\gamma;r,c)=u^{\AIa}_0+u^{\AIa}_r\,r-w_\gamma\,\gamma-u^{\AIa}_c\,c,
\]
where $c$ is the \AIa fatigue state defined below. (All coefficients are fixed environment parameters; we use the values in our code/config unless explicitly overridden.)

\paragraph{State dynamics and platform payoffs.}
Each pre-subscription user maintains two integer states $(s_{it},c_{it})$ initialized at $(0,0)$ and truncated to caps $(S_{\max},C_{\max})$.
If the user consumes \AIf, the AI experience state increases by a discrete increment
$s_{i,t+1}\leftarrow \min\{s_{it}+\Delta_s(\psi_{it}),S_{\max}\}$ with $\Delta_s(\psi)=1+\mathbf{1}\{\psi\ge \psi_{\mathrm{cut}}\}$.
If the user consumes \AIa, the fatigue state increases by one: $c_{i,t+1}\leftarrow \min\{c_{it}+1,C_{\max}\}$.
If the user does not consume, the state does not change.
Platform per-period payoff is earned only upon consumption: consuming \AIf yields a serving cost $-\kappa_{\mathrm{free}}$, while consuming \AIa yields ad revenue
$r_{\AIa}(r,c,s)=\max\{b_0+b_r\,r-b_c\,c-b_s\,s,0\}$ (allowing SR monetization to deteriorate with fatigue and, optionally, with accumulated AI experience).

\paragraph{Retention, conversion, and subscription value.}
Pre-subscription retention is state-dependent with a floor:
\[
\rho(s,c)=\rho_{\min}+(1-\rho_{\min})\cdot \sigma\!\big(\alpha_s\,s-\alpha_c\,c\big),
\]
where $\sigma(\cdot)$ is the logistic link.
Conversion is triggered deterministically when accumulated experience reaches a type-dependent cutoff:
a user converts once $s_{it}\ge \lceil \tau\rceil$, with $\tau(\theta, p, c)=\tau_0+\tau_p\cdot p-\tau_\theta \cdot \theta - \tau_c \cdot c$.
Upon conversion, the user enters an absorbing subscribed regime; we add the discounted continuation value starting next period,
$\beta^{t+1}V_{\sub}$ with $V_{\sub}=(p-\kappa_{\mathrm{paid}})/(1-\beta)$, and stop simulating further pre-subscription dynamics for that user.
In our “total active users” metric, we count both retained pre-subscription users and already-converted (subscribed) users as active.

\paragraph{DP computation and type binning.}
The Optimal DP policy is computed by value iteration on the integer grid $(s,c)$ in the pre-subscription regime.
Bellman expectations over query signals are approximated by Monte Carlo with $n_q$ i.i.d.\ draws per state using a fixed random seed.
To reduce the number of DP solves, we bin $\gamma$ onto a uniform grid over $[0,1]^2$ with $n_{\text{bins}}$ points per dimension and solve the DP only for the unique binned type pairs appearing in the simulated population; each user is mapped to the nearest grid point for policy lookup.

\subsection{Market-Condition Sensitivity Analysis}
\label[appsec]{app:ablation_details}

\paragraph{Market conditions.}
We conduct a paired high/low sensitivity analysis across four market dimensions.
For each dimension, we simulate both a ``high'' and ``low'' variant while holding all other parameters at baseline values:

\begin{enumerate}[noitemsep, topsep=2pt]
\item \textbf{Ad sensitivity} ($\gamma$): High $\gamma\sim\mathrm{Beta}(5,2)$ concentrates mass on ad-sensitive users; Low $\gamma\sim\mathrm{Beta}(2,5)$ concentrates mass on ad-tolerant users. Baseline: $\gamma\sim\mathrm{Unif}[0,1]$.

\item \textbf{Query profitability} ($r$): High $r\sim\mathrm{Beta}(5,2)$ shifts queries toward high ad revenue; Low $r\sim\mathrm{Beta}(2,5)$ reduces ad profitability. Baseline: $r\sim\mathrm{Unif}[0,1]$.

\item \textbf{Inference cost} ($\kappa_{\mathrm{free}}$): High $\kappa=1.3$ makes \AIf\ expensive to serve; Low $\kappa=0.5$ makes \AIf\ cheap. Baseline: $\kappa=0.9$.

\item \textbf{Outside competition} ($\omega$): High $\omega=0.5$ raises the outside option; Low $\omega=0.2$ weakens competition. Baseline: $\omega=0$.
\end{enumerate}

This yields $2 \times 4 = 8$ market conditions, each simulated under all four policies ($N=300$ users, $T=20$ periods, $\beta=0.95$, $n_{\mathrm{bins}}=5$, $n_q=60$).
To ensure statistical reliability, each configuration is run across five independent random seeds (seeds $42, 1042, 2042, 3042, 4042$), for a total of $8 \times 4 \times 5 = 160$ simulation runs.
All results report mean~$\pm$~standard deviation across seeds.

\paragraph{Detailed results.}
\Cref{tab:ablation_full} reports the final cumulative payoff (mean$\pm$std), \AIf\ exposure rate, and subscriber count under each condition.

\begin{table*}[h]
\centering
\caption{Full sensitivity analysis results (mean$\pm$std over 5 seeds, $T{=}20$).
\textit{DP \AIf\%} reports the average share of active non-subscribers shown \AIf\ by the Optimal DP policy.
\textit{Subs} reports the total subscribers under Optimal DP and Always \AIf\ policies.}
\label{tab:ablation_full}
\small
\begin{tabular}{l cccc cc cc}
\toprule
Condition & Opt.\ DP & Greedy & Always \AIa & Always \AIf & \AIf\% & Subs(DP) & Subs(\AIf) \\
\midrule
High $\gamma$ & \textbf{9.19}$\pm$0.35 & 6.66$\pm$0.44 & 1.40$\pm$0.02 & 8.42$\pm$0.50 & 92\% & 260$\pm$4 & 276$\pm$4 \\
Low $\gamma$ & \textbf{3.65}$\pm$0.18 & 1.85$\pm$0.05 & 1.41$\pm$0.04 & 2.27$\pm$0.17 & 87\% & 217$\pm$6 & 252$\pm$6 \\
\addlinespace
High $r$ & \textbf{3.85}$\pm$0.22 & 3.11$\pm$0.17 & 1.83$\pm$0.02 & 2.34$\pm$0.27 & 87\% & 130$\pm$8 & 156$\pm$8 \\
Low $r$ & \textbf{7.22}$\pm$0.15 & 5.03$\pm$0.14 & 0.82$\pm$0.02 & 7.04$\pm$0.14 & 92\% & 285$\pm$2 & 292$\pm$2 \\
\addlinespace
High $\kappa$ & \textbf{5.13}$\pm$0.20 & 3.07$\pm$0.12 & 1.36$\pm$0.02 & 3.62$\pm$0.28 & 64\% & 196$\pm$10 & 265$\pm$6 \\
Low $\kappa$ & \textbf{8.16}$\pm$0.28 & 6.03$\pm$0.27 & 1.36$\pm$0.02 & 7.30$\pm$0.29 & 93\% & 243$\pm$6 & 265$\pm$6 \\
\addlinespace
High $\omega$ & \textbf{5.20}$\pm$0.28 & 3.77$\pm$0.25 & 1.34$\pm$0.01 & 4.16$\pm$0.30 & 84\% & 196$\pm$7 & 226$\pm$6 \\
Low $\omega$ & \textbf{6.03}$\pm$0.22 & 4.10$\pm$0.26 & 1.35$\pm$0.02 & 4.97$\pm$0.26 & 86\% & 223$\pm$6 & 251$\pm$5 \\
\bottomrule
\end{tabular}
\end{table*}

\paragraph{Detailed trajectory plots.}
\Cref{fig:detail_gamma,fig:detail_r,fig:detail_kappa,fig:detail_omega} show the full trajectory plots for all four metrics---cumulative payoff, active users, subscribers, and AI exposure rate---under each market condition.

\begin{figure}[h]
\centering
\includegraphics[width=\linewidth]{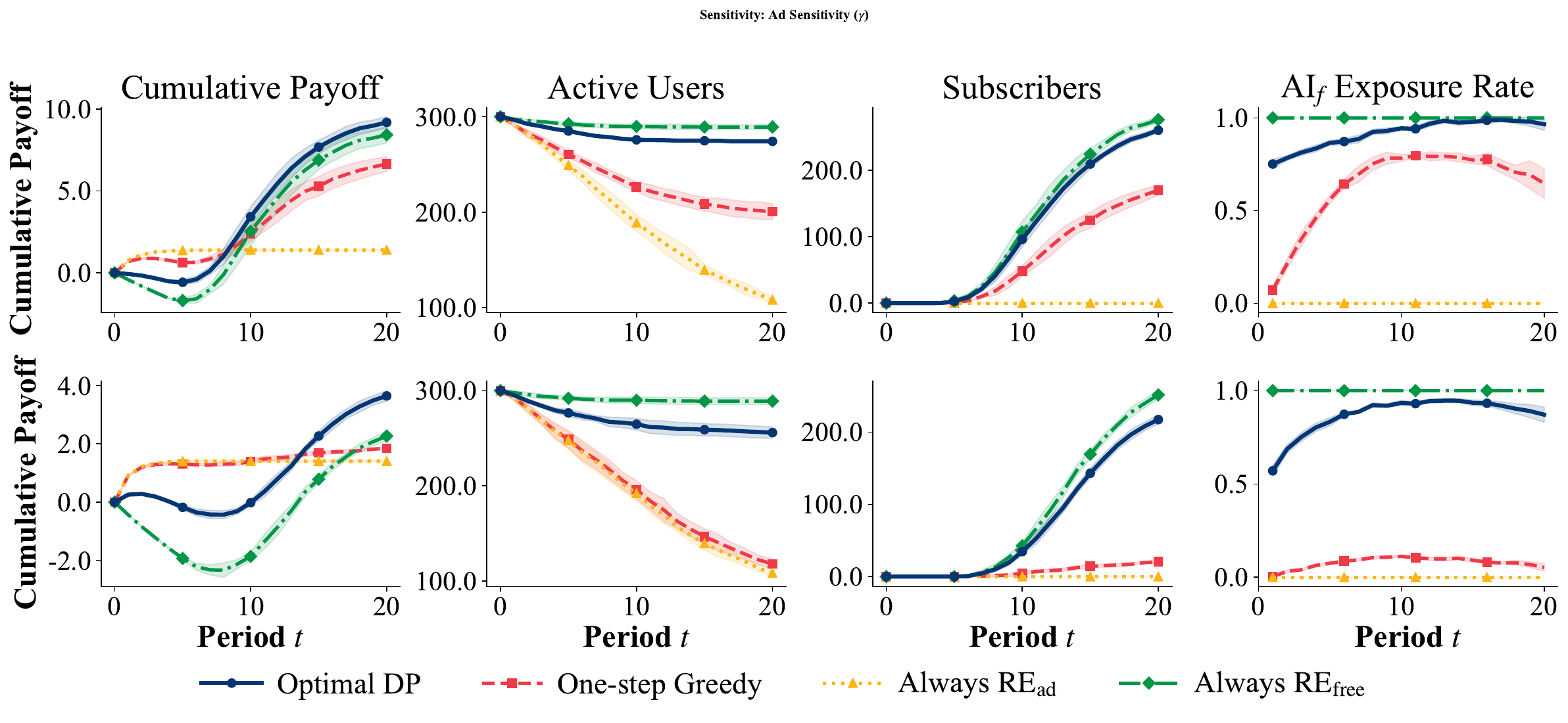}
\caption{Sensitivity to ad sensitivity ($\gamma$): all four metrics under high-$\gamma$ (top) and low-$\gamma$ (bottom) conditions.
High ad sensitivity accelerates subscription conversion and increases the DP's payoff advantage.}
\label{fig:detail_gamma}
\end{figure}

\begin{figure}[h]
\centering
\includegraphics[width=\linewidth]{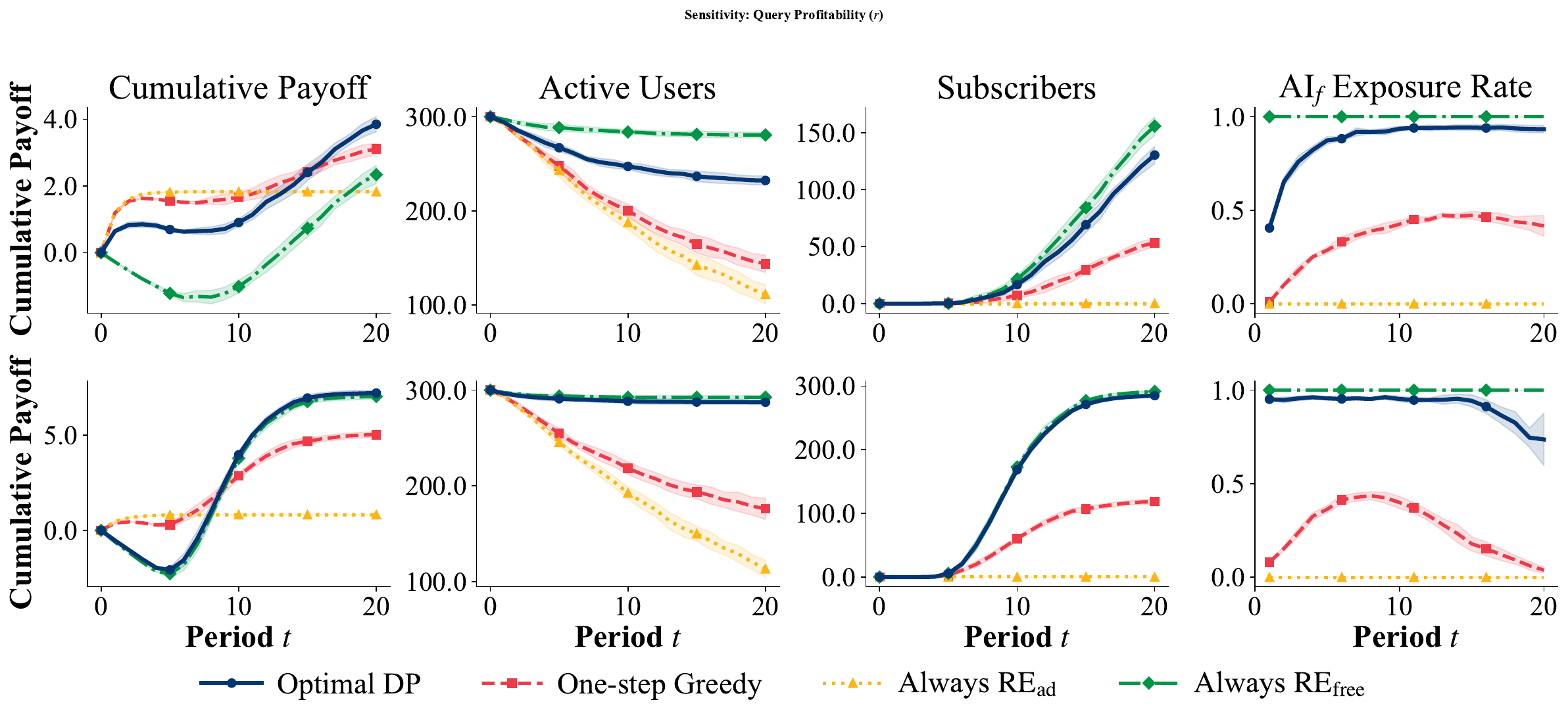}
\caption{Sensitivity to query profitability ($r$): all four metrics under high-$r$ (top) and low-$r$ (bottom) conditions.
Low-profit queries reduce the opportunity cost of \AIf, enabling the DP to invest more aggressively in subscription conversion.}
\label{fig:detail_r}
\end{figure}

\begin{figure}[h]
\centering
\includegraphics[width=\linewidth]{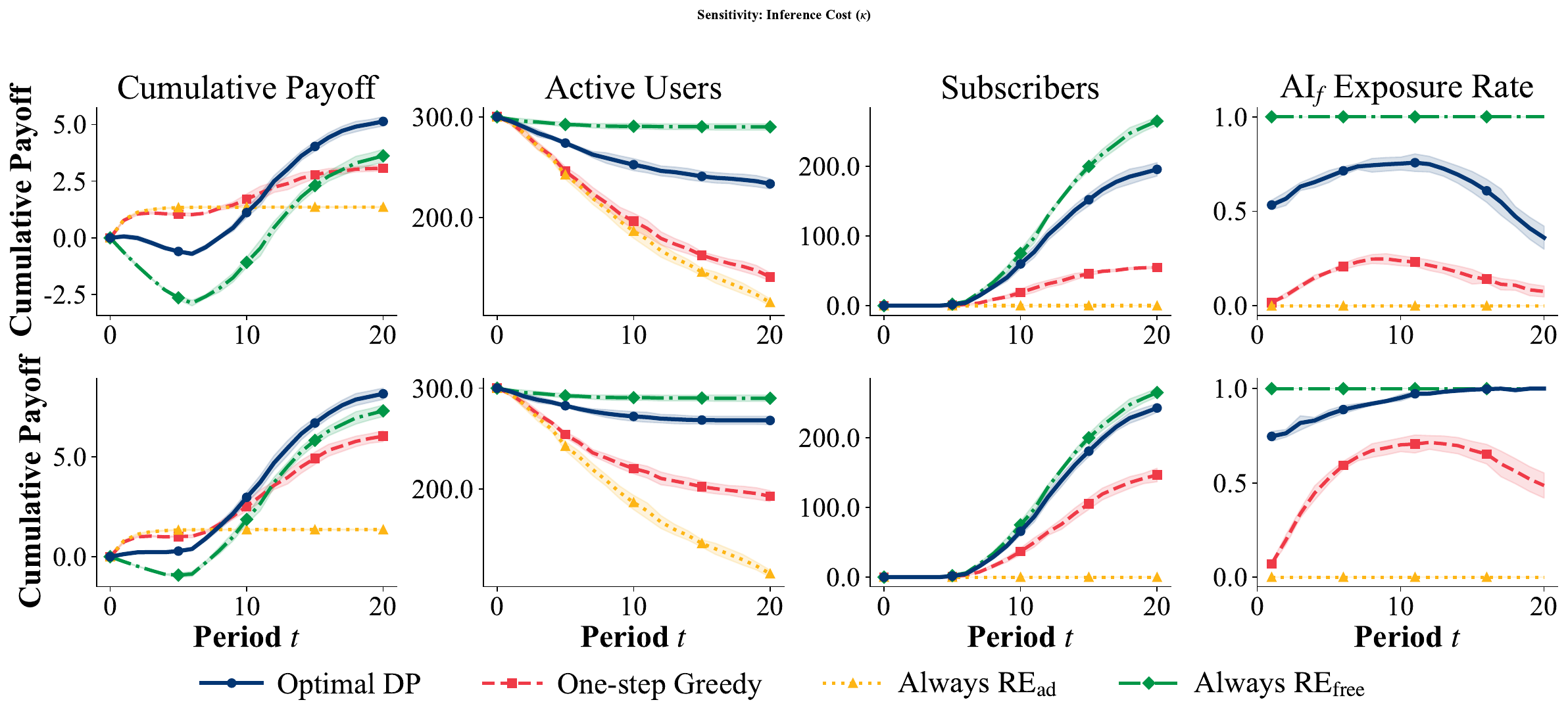}
\caption{Sensitivity to inference cost ($\kappa$): all four metrics under high-$\kappa$ (top) and low-$\kappa$ (bottom) conditions.
Under high inference cost, the DP curtails \AIf\ exposure, yet still outperforms by timing its \AIf\ investments.
Note the negative early payoffs under high $\kappa$ for Always \AIf, reflecting the cost of building the subscription base.}
\label{fig:detail_kappa}
\end{figure}

\begin{figure}[h]
\centering
\includegraphics[width=\linewidth]{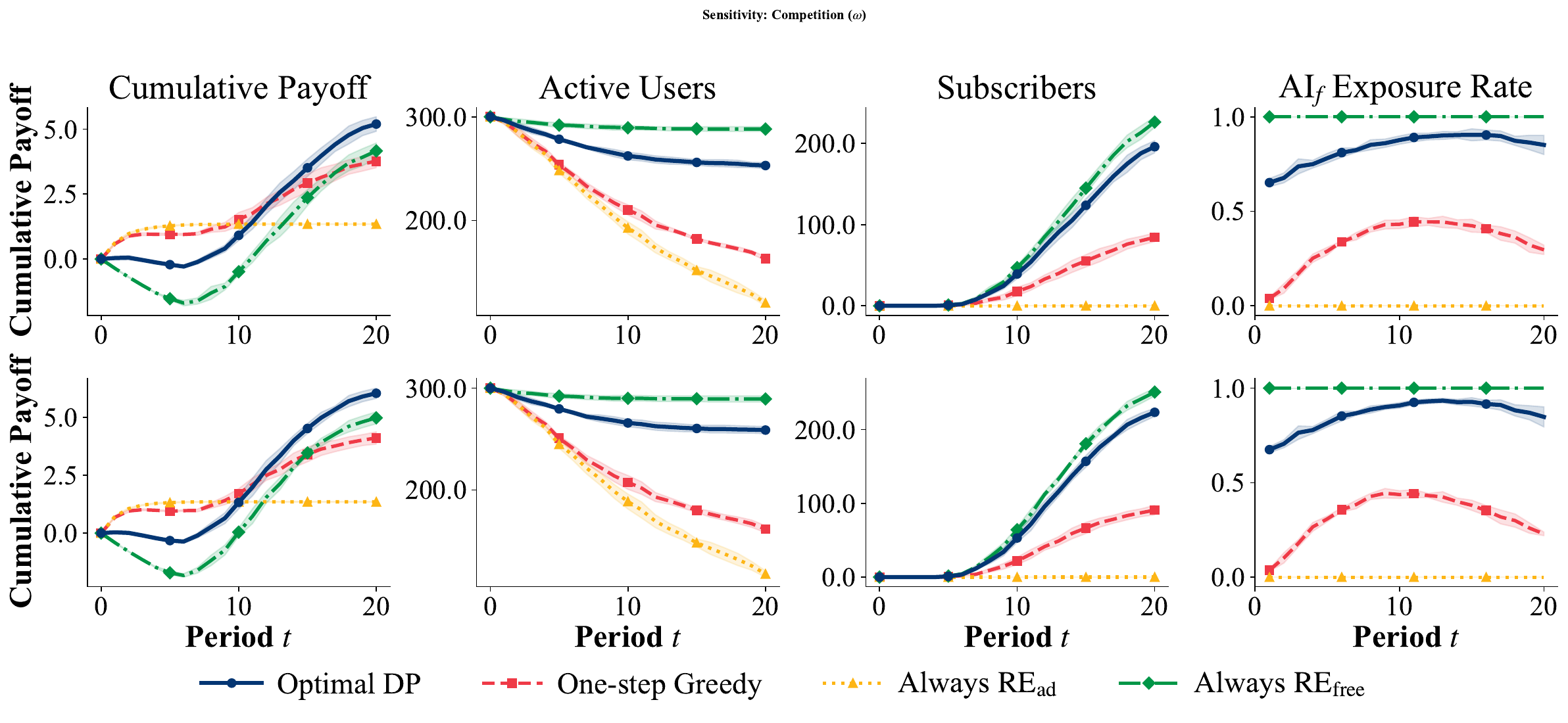}
\caption{Sensitivity to outside competition ($\omega$): all four metrics under high-$\omega$ (top) and low-$\omega$ (bottom) conditions.
Stronger competition compresses margins but preserves the DP's structural advantage, consistent with \Cref{prop:competition_type_shift}.}
\label{fig:detail_omega}
\end{figure}

\begin{figure}[h]
\centering
\includegraphics[width=0.85\linewidth]{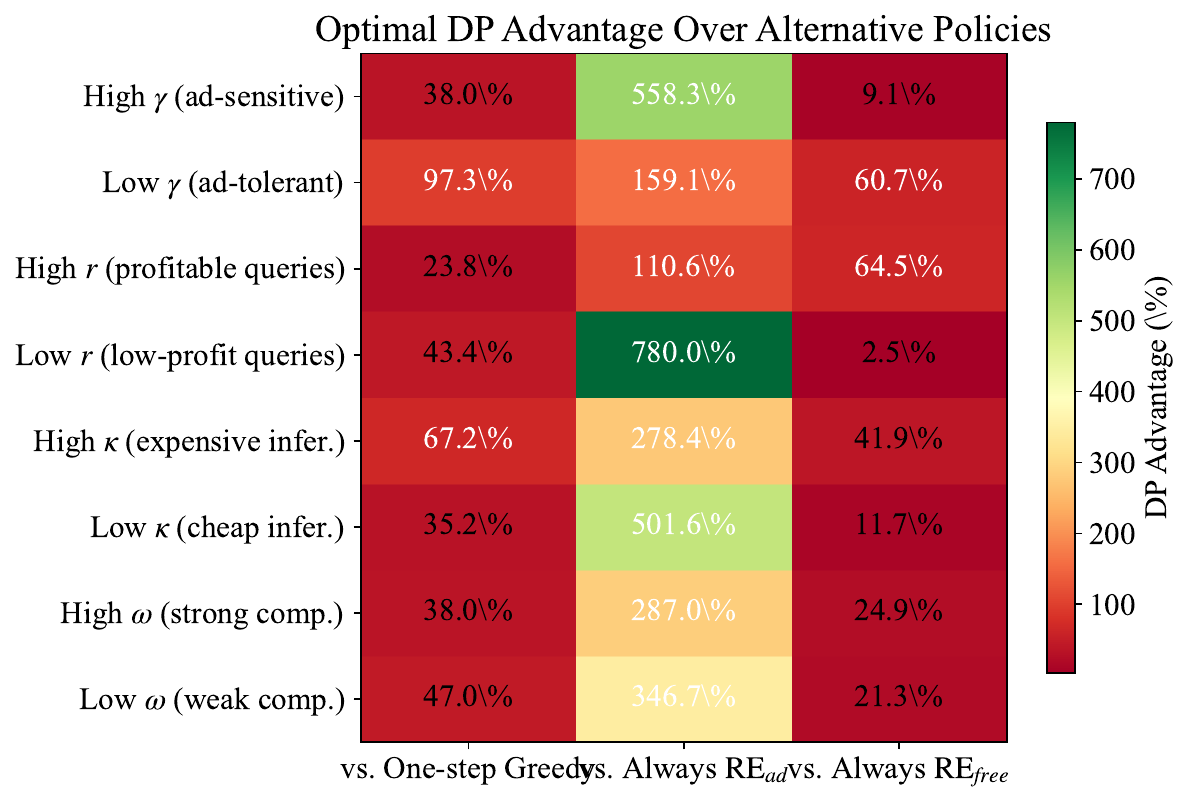}
\caption{Heatmap of the Optimal DP's percentage advantage over each alternative policy across all eight market conditions.
The advantage over Always \AIa\ is dominant across all conditions.}
\label{fig:dp_advantage_heatmap}
\end{figure}





\end{document}